\documentclass[letterpaper, DIV=13]{scrartcl}   
%
\usepackage{authblk}

\usepackage{amssymb, amsthm, mathtools,bbm,color}
\usepackage[square,sort,comma,numbers]{natbib}
\setlength{\bibsep}{0.5pt}
\usepackage{dsfont} 
\usepackage{amsmath}
\usepackage{cancel}   

\def\1{\mathbbm{1}}
\def\N{\mathbbm{N}}
\def\Z{\mathbbm{Z}}
\def\R{\mathbbm{R}}


\newcommand{\sgn}{\operatorname{sign}}
\newcommand{\dist}{\operatorname{dist}}
\def\tr{\mathop{\textrm{tr}}\nolimits}

\newtheorem{theorem}{Theorem}
\newtheorem{lem}[theorem]{Lemma}
\newtheorem{cor}[theorem]{Corollary}

\newtheorem{proposition}[theorem]{Proposition}

\numberwithin{equation}{section}
\numberwithin{theorem}{section}
\usepackage{graphicx}	
\usepackage{amssymb}

\def\S {\textbf{S}}
\def\be{\begin{equation}}
\def\ee{\end{equation}} 
\def\ket#1{\vert#1\rangle}
\def\bra#1{\langle #1\vert}

\title{Dimerization  and N\'eel order in different quantum spin chains through a shared loop representation  }

\author[1]{Michael Aizenman}

\author[2,3]{Hugo Duminil-Copin}

\affil[1]{{\small Departments of Physics and Mathematics, 
Princeton University, 
Princeton, NJ 08544, USA}}

\affil[2]{{\small IHES, 
91440 Bures-sur-Yvette, France}}
\affil[3]{{\small Dpt.\ de Math\'ematiques, 
 Universit\'e de Geneve, 
 1211 Geneve 4, Switzerland}}
 
\author[4,5]{Simone Warzel}
\affil[4]{{\small
Zentrum Mathematik, TU M\"unchen, 
85747 Garching, Germany}}
\affil[5]{{\small Munich Center for Quantum Science and Technology}}
\date{\small June 11, 2020}

\begin{document}
\maketitle

\begin{abstract}
 The ground-states of the spin-$ S $ antiferromagnetic chain~$H_\textrm{AF}$ with a projection-based interaction and the spin-$ 1/2$ XXZ-chain~$ H_\textrm{XXZ} $ at anisotropy parameter $\Delta=\cosh(\lambda) $ share a common loop representation in terms of a two-dimensional functional integral which is similar to the classical planar $Q$-state Potts model at 
 $ \sqrt Q= 2S+1 =2\cosh(\lambda)$.  
The multifaceted relation  
is used here to  directly relate the distinct forms of  translation symmetry breaking which are manifested in the ground-states of these two models:    dimerization for $H_\textrm{AF}$ at all $S>  1/2$,  and N\'eel order for $ H_\textrm{XXZ} $ at $\lambda >0$.   The results presented include: i)~a translation to the above quantum spin systems of the results which were recently proven by Duminil-Copin-Li-Manolescu for a broad class of two-dimensional random-cluster models,  and ii)~a short proof of the symmetry breaking in a manner similar to the recent structural proof by Ray-Spinka of the discontinuity of the phase transition for $Q>4$.   
Altogether, the quantum manifestation of the change between  $Q=4$ and  $Q>4$ is a transition from a  gapless ground-state to  a pair of gapped  and extensively distinct  ground-states.
\end{abstract}

\section{Introduction}

The focus of this work is the structure of the ground-states in two families of antiferromagnetic quantum spin chains, each of which includes the spin-$1/2$ Heisenberg anti-ferromagnet as a special case.    In the infinite volume limit, with the exception of their common root, in both cases the systems exhibit symmetry breaking at the level of ground-states.  The physics underlying  the phenomenon is different. In one case it is extensive quantum frustration which causes dimerization with is expressed in spatial energy oscillations.  In the other case, the Hamiltonian is frustration free and the symmetry breaking is expressed in long-range N\'eel order.  Yet, in mathematical terms both phenomena are analyzable through a common random loop representation.  Curiously, a similar loop system appears also  as the auxiliary scaffolding of a classical planar $Q$-state Potts models for which the symmetry breaking relates to a discontinuity in the order parameter.

The models under consideration have been studied extensively, and hence the specific results we discuss  may be regarded as known, at one level or another.   The techniques which have been applied for the purpose  include numerical works, Bethe ansatz calculations~\cite{BBa,BBb,Aff,baxter:1973,Bax16,klu}, and contour expansions~\cite{NU}.   The validity of Bethe ansatz  calculations for similar systems has recently received support through a careful mathematical analysis~\cite{DGHMT}.  The results presented here are based on non-perturbative structural arguments.   
They may be worth presenting since  in the models considered such arguments allow full characterization of the conditions under which the symmetry breaking occurs, as well as other qualitative features of the model's ground-states.    
The relation between the models may be of intrinsic interest.  At the mathematical level it  plays an essential role in the non-perturbative proof of symmetry breaking which is the main result presented here.

\subsection{Antiferromagnetic $SU(2S+1)$ invariant spin chains  with projection based   interaction} 

The most basic quantum object has a two-dimensional complex state space, spanned by the two orthogonal vectors $  \ket{+} \equiv \left( \begin{matrix}  1 \\   0  \end{matrix}\right) $ and  $ \ket{-} \equiv \left( \begin{matrix} 0   \\ 1   \end{matrix}\right)  $.   The self-adjoint operators on this space (which has the structure of $\mathbb{C}^2$) are linear combinations of the  three Pauli-spin matrices 
$ \pmb{\tau} = (\tau^x,\tau^y , \tau^z) $, 
\be\label{eq:Pauli} 
\tau^x := \left( \begin{matrix} 0 & 1 \\ 1 & 0  \end{matrix}\right) \, , \quad \tau^y := \left( \begin{matrix} 0 & -i \\ i & 0  \end{matrix} \right) \, , \quad \tau^z := \left( \begin{matrix} 1 & 0 \\ 0 & -1  \end{matrix} \right)\, . 
\ee
Of particular interest is the triplet of spin operators 
$ \S = (S^x,S^y , S^z) $ with $S^\alpha = \frac 12 \tau^\alpha$, $\alpha = x,y,z$.  These span the Lie algebra of the group $SU(2)$ and satisfy the commutation relation  
 \be \label{comm_rel}  [ S^x, S^y ] = i  S^z \,. 
 \ee

For higher spin systems the Hilbert spaces  of states are given by $\mathbb{C}^{2S+1} $  in which one finds the $2S+1$  dimensional representations of the Lie algebra commutation relations \eqref{comm_rel}, with $S\in \N \cup (\N -\frac 12) $.  A convenient basis is provided by the eigenvectors of $S^z$, satisfying 
\be 
S^z\  \ket{\ m\  }\  =\  m\   \ket{\  m\ } \, , \quad  m \in \{-S, -S+1,\dots, S\} .
 \ee 
In this terminology, the above binary spin system corresponds to $S=1/2$, and the states $\ket{+}$ and $\ket{-}$  are the eigenstates of $S^z$ at values $m= - \frac 12, + \frac 12 $.

Our spin chains are arrays of $ 2 L $ spins indexed by $ \Lambda_L:= \{-L+1,\dots,L\} $.  The corresponding  state space is the tensor product Hilbert space $\mathcal{H}_L=  \bigotimes_{v\in \Lambda_L} \mathbb{C}^{2S+1} $.  The  single spin operators are lifted to it by setting $ \pmb{\tau}_u := \mathbbm{1} \otimes \cdots \mathbbm{1} \otimes \pmb{\tau}  \otimes  \mathbbm{1}  \cdots \mathbbm{1}  $, which acts non-trivially only in the  tensor product's $u $th component. 

Lifted to the two component product space, the above Dirac notation of states takes the form 
\be 
| m , m' \rangle_{u,v} :=  | m  \rangle_{u} \otimes   | m'  \rangle_{v}  \quad \quad  {}_{u,v}\langle m , m' |  :=   {}_u\langle m | \otimes   {}_v\langle  m' |  \,. 
\ee 
Correspondingly, we shall use the following notation for operators acting in the corresponding two-component factor of $\mathcal{H}_L$
\be \label{ket_bra} 
 \left( | m , m' \rangle\langle n, n' | \right)_{u,u+1} := 
\mathbbm{1}  \otimes \dots  \mathbbm{1} \otimes \left( | m  \rangle\langle n | \right)_u \otimes \left(  | m'  \rangle\langle n' | \right)_{u+1} \otimes\mathbbm{1}   \ldots \otimes \mathbbm{1}  
\ee

Our discussion will focus on different extensions of the quantum Heisenberg 
antiferromagnetic spin model, which is an array of spins with the nearest-neighbor interaction energy proportional to $\S_u \cdot \S_{u+1} $.  For $S=1/2$ this can be alternatively written as
\be \label{eq:AF0}
H_\textrm{AF}^{(L)} :=  \sum_{u=-L+1} ^{L-1}  [ 2 \ \pmb{\tau}_u\cdot \pmb{\tau}_{u+1} -1/2]  \ 
= \    -  2    \sum_{u=-L+1} ^{L-1}  P^{(0)}_{u,u+1}    \, . 
\ee
with $\pmb{\tau}_u\cdot \pmb{\tau}_{u+1} = \sum_{\alpha = x,y,z} \tau_u^\alpha \ \tau_{u+1}^\alpha$ and 
$P^{(0)}_{u,u+1} =\left( \ket{D} \bra{D}\right)_{u,u+1} $ the orthogonal projection onto the state 
\be \label{eq:singlett}
\ket{D} := (\ket{+,-}- \ket{-,+})/\sqrt 2\, ,
\ee 
in the corresponding two-spin space.   This state is of some interest:  it is the only one which is annihilated by each component of the combined spin operator $\S_u +\S_{u+1} $, and it also  maximizes the entanglement between the two components.

The two expressions of the spin 1/2 Hamiltonian which are presented in \eqref{eq:AF0}  suggest slightly different extensions  to higher values of the spin $ S \in \mathbb{N} /2 $.    The one on which we  focus here is
\be  \label{H_AF} 
H_\textrm{AF}^{(L)} :=  \ -\sum_{u=-L+1} ^{L-1}  
(2S+1)   P^{(0)}_{u,u+1}   \, ,
\ee
with  $P^{(0)}_{u,u+1} $   the 
rank-one projection in the two spin space  $ \mathbb{C}^{2S+1} \otimes  \mathbb{C}^{2S+1} $ 
onto on the subspace which is invariant under rotations generated by $\S_u+\S_{u+1}$, i.e.  the joint kernel of $S_u^\alpha +\S_{u+1}^\alpha$  ($\alpha = x,y,z$).    
For any $S\in \N/2$ this  operator is given by\footnote{The projection $P_{u,v}^{(0)}$ can also be expressed as a polynomial of degree $2S$ in $\pmb{S}_u\cdot \pmb{S}_{v}$, for instance $P_{u,v}^{(0)} = ((\pmb{S}_u\cdot \pmb{S}_{v} )^2-1)/3$ for $S=1$.}
\begin{eqnarray}\label{eq:pro}
P_{u,v}^{(0)} &:=& 1\left[\left| \pmb{S}_u + \pmb{S}_v \right| = 0 \right] = \frac{1}{2S+1} \sum_{m,m'=-S}^S (-1)^{m-m'} \left(  \left| m, -m \rangle\langle m', -m' \right| \right)_{u,v}, 
\end{eqnarray}
This model was studied by Affleck~\cite{Aff},  Batchelor and Barber~\cite{BBa,BBb},  Kl\"umper \cite{klu}, Aizenman and Nachtergaele~\cite{AN}, and more recently Nachtergaele and Ueltschi~\cite{NU}.   

The classical analog of a quantum spinor with the state space $\mathbbm{C}^{2S+1}$ is a system whose states are described by a three component vector of length $S$.   Under this correspondence, the classical analog of the projection to the ground-state(s) of $H_\textrm{AF}^{(L)} $ is the restriction to configurations in which each pair of  neighboring spins point in exactly opposite directions, adding  to $\underline 0$.   However, unlike its classical analog, the quantum system exhibits frustration, and that leads to the dimerization phenomenon discussed next.

Each of the two-spin interaction terms in \eqref{H_AF}    is  minimized in the state in which the two spins are coherently intertwined into the unique state in which $|\S_u+\S_v|=0$. 
Yet, a quantum spin cannot be locked into such a state with both its neighbors simultaneously. 
 This effect, which  results in the spin-Peierls instability, is purely quantum  as there is no such restriction for classical  spins.   
(Classical spin models may be driven to frustration by other means, e.g. when placed on a non-bipartite graph with antiferromagnetic interactions, and also on arbitrary graphs under suitably mixed interactions.  Such geometric frustration is shared by their quantum counterparts.)

 \begin{figure}[h]
\begin{center}
\includegraphics [width=0.4 \textwidth]{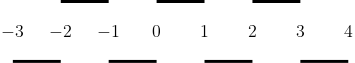}
\caption{The natural  pairing in $\Lambda_L = \{ -L+1, \dots , L-1, L \}$  for $L=3$ and $L= 4$.   Notice the difference at $u=0$.}\label{fig:pairing}
\end{center}
\end{figure}
The naive pairing depicted in Fig.~\ref{fig:pairing} suggests that in finite volume the ground-states' local energy density may not be homogeneous and have a bias triggered by the boundary conditions, i.e.\ the parity of~$L$.  Indeed, 
through approximations, numerical simulations, or the probabilistic representation of \cite{AN} (our preferred method), one may see that the local energy density of the corresponding finite-volume ground-states $ \langle\cdot  \rangle_{L}^{(\textrm{gs})} $  is not homogeneous and satisfies
\begin{equation} 
(-1)^L \big[ \langle P_{2n-1,2n} \rangle_{L}^{(\textrm{gs})} - \langle P_{2n,2n+1} \rangle_{L}^{(\textrm{gs})}  \big] \, >\, 0\,.
\end{equation}
An interesting question is whether this bias persists in the limit $L\to \infty$, in which case in the infinite-volume limit the system has (at least) two distinct ground-states, for which the expectation values of local observables~$F$ are given by   
\begin{equation} \label{ev_odd}
\langle F \rangle_\textrm{even} \  := \ \lim_{\substack{L\to \infty \\ L \textrm{even}}}    \langle F \rangle_L^{(\textrm{gs})}  \qquad 
\text{and} 
\qquad
\langle F \rangle_\textrm{odd} \  := \ \lim_{\substack{L\to \infty \\ L \textrm{odd}}}  
    \langle F \rangle_L^{(\textrm{gs})}  \, , 
\end{equation}
 where the limit is interpreted in the weak sense, i.e.\ with $F$ being any 
(fixed) local bounded operator.  These are generated by products of spin operators
\begin{equation}
F_U := \prod_{j=1}^k S_{u_j}^{\alpha_j} \, , \quad u_j \in U \, , \quad \alpha_j \in \{ x,y,z \} \, 
\end{equation}
which are supported in some bounded set $ U \subset \mathbb{Z} $.
In finite-volume, their (imaginary) time-evolved counterparts are given by 
$$ F_U^{(L)}(t) := \ e^{ - t H_\textrm{AF}^{(L)} } \ F_U \ e^{t H_\textrm{AF}^{(L)} } \,  .
$$ 
The corresponding truncated correlations also converge, e.g.~for any fixed $ t \in \mathbb{R} $, 
\begin{equation}\label{eq:trunc}
	\langle F_U(t)  ; F_V\rangle_\textrm{even}  :=  \lim_{\substack{L\to \infty \\ L \textrm{even}}}    \langle F_U^{(L)}(t) F_V \rangle_L^{(\textrm{gs})} -  \langle F_U^{(L)}(t) \rangle_L^{(\textrm{gs})}  \langle  F_V \rangle_L^{(\textrm{gs})} \, ,
\end{equation}
and similarly for $ \langle F_U(t)  ; F_V\rangle_\textrm{odd} $.

The separate convergence of the limits~\eqref{ev_odd} or \eqref{eq:trunc} was established in 
\cite{AN} 
through  probabilistic techniques which are enabled by the loop representation presented below.   This representation also led to the following dichotomy.\footnote{This version of the AN dichotomy is a bit more carefully crafted than in the original work, as the two options stated there need not be mutually exclusive.  However, as \eqref{exp} shows, ipso-facto  they are.}

\begin{proposition}[cf.\ Thm.~6.1 in \cite{AN}]  \label{thm:dichotomy}
For each value of $S\in \mathbb{N}/2$ one of the following holds true:
\begin{enumerate}
\item  The two ground-states $\langle\cdot\rangle_{\textrm even}$ and $\langle\cdot\rangle_{\textrm odd}$ are distinct, each invariant under  the $2$-step shift, each being the $1$-step shift of the other.  Furthermore, their translation symmetry breaking is manifested in energy oscillations, namely, for every $n\in\mathbb N$
\begin{equation} \label{osc}
  \langle P^{(0)}_{2n-1,2n} \rangle_\textrm{even} -  \langle P^{(0)}_{2n,2n+1} \rangle_\textrm{even} > 0\,  .
\end{equation}
\item The even and odd ground-states coincide, and form a translation invariant  ground-state $\langle\cdot\rangle$ with slowly decaying correlations, satisfying  
 \begin{equation} \label{slow}
 \sum_{v\in \mathbb{Z}} |v|\, |\langle \pmb{S}_0 \cdot  \pmb{S}_v \rangle |   =   \infty  \, .
 \end{equation}
  \end{enumerate} 
\end{proposition}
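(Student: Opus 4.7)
The plan is to work throughout in the random loop representation of~\cite{AN}. In that representation the finite-volume ground-state expectations $\langle \cdot \rangle_L^{(\text{gs})}$ become expectations under a probability measure $\mu_L$ on loop configurations in the slab $\Lambda_L \times \mathbb{R}$, and two dictionary entries suffice: (i)~$\langle P^{(0)}_{u,u+1}\rangle_L^{(\text{gs})}$ is proportional to the mean density of horizontal ``bars'' on the bond $(u,u+1)$, and (ii)~$\sum_{\alpha}\langle S^\alpha_u S^\alpha_v\rangle_L^{(\text{gs})}$ is proportional, up to an alternating sign $(-1)^{u-v}$, to the probability that the time-zero points at sites $u$ and $v$ lie on a common loop of~$\mu_L$.

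First, I would invoke the separate weak convergence of $\mu_L$ along each parity of $L$ to limits $\mu_{\text{even}}$ and $\mu_{\text{odd}}$. Each is invariant under the two-step shift, and each is the one-step translate of the other; both facts follow from the monotonicity/FKG arguments used in the construction of~\cite{AN} together with the reflection symmetry across the midpoint. The dichotomy now splits on whether $\mu_{\text{even}}=\mu_{\text{odd}}$.

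If $\mu_{\text{even}}\neq\mu_{\text{odd}}$, then dictionary~(i) gives that the limiting bar density on odd-indexed bonds $(2n-1,2n)$ differs from that on even-indexed bonds $(2n,2n+1)$. The sign of the difference is inherited from the strict finite-volume inequality (fixed by the parity of~$L$ via the natural pairing of Figure~\ref{fig:pairing}) and therefore yields~\eqref{osc}. This is alternative~(1).

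Suppose instead $\mu_{\text{even}}=\mu_{\text{odd}}$, so that the infinite-volume state is translation invariant. It remains to verify~\eqref{slow}, and this is the main obstacle. I would attack it through a sum rule. Although the pointwise finite-volume oscillation $\alpha_L := \langle P^{(0)}_{2n-1,2n}\rangle_L^{(\text{gs})}-\langle P^{(0)}_{2n,2n+1}\rangle_L^{(\text{gs})}>0$ vanishes in Case~B, the loop representation expresses the accumulated bar-density imbalance between odd and even bonds as a linear combination of $|v|$-weighted loop-connection probabilities, whose sum is controlled from below by a boundary contribution of order one. Via dictionary~(ii), this yields a bound of the form
\begin{equation*}
c \;\leq\; \sum_{|v| \leq L} |v|\, \bigl|\langle\pmb{S}_0\cdot\pmb{S}_v\rangle_L^{(\text{gs})}\bigr| \;+\; o(1) \qquad (L\to\infty),
\end{equation*}
with $c>0$ independent of $L$; since the left-hand constant persists while $\alpha_L \to 0$ in Case~B, the weighted sum on the right must diverge, giving~\eqref{slow}.

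The principal difficulty lies in setting up this sum rule precisely. It is a conservation law for the loop measure: a persistent imbalance between odd- and even-bond bar densities must be sustained by loops of arbitrarily large horizontal extent, and the $|v|$ weight enters through the horizontal span of such loops as measured by dictionary~(ii). Once the identity is in place, the branching between~(1) and~(2) and the impossibility of their overlap (as noted in the proposition's footnote) become immediate.
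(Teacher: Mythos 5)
Your high-level plan --- loop representation, FKG for existence of limits, split on whether $\mu_{\text{even}} = \mu_{\text{odd}}$ --- is the right frame (the paper defers the proof to [AN], where this is indeed the strategy). But two steps in the proposal are asserted rather than proved, and the second one is the substance of the theorem.

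In Case 1 you claim that ``$\mu_{\text{even}}\neq\mu_{\text{odd}}$ implies the bar densities on odd- and even-indexed bonds differ'' and that ``the sign is inherited from the strict finite-volume inequality.'' Neither follows from the dictionary alone: distinct measures can share one-bond marginals, and a strict finite-volume inequality can degenerate to an equality in the limit. What delivers both conclusions is the FKG \emph{ordering} $\mu^B \prec \mu^A$ of the two limits: if they are unequal, they differ strictly on some increasing local event, and then by a monotone coupling on \emph{every} non-degenerate increasing local event, in particular on $\{(2n,0)\leftrightarrow(2n+1,0)\}$, whose probability is precisely what enters~\eqref{eq:proexp}. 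You invoke FKG only for the existence of the limits, not for this step. The more serious gap is Case~2: you write down the desired ``sum rule'' relating the accumulated even/odd bond-energy imbalance to a boundary term expressed through $|v|$-weighted connectivities, and then explicitly acknowledge that you have not set it up. That sum rule \emph{is} the content of the theorem; naming the identity and noting that it would suffice is not a proof of it. In [AN] the bridge is built by a telescoping/monotonicity argument on the finite-volume differences $\langle P^{(0)}_{2n-1,2n}\rangle_L - \langle P^{(0)}_{2n,2n+1}\rangle_L$, bounding increments in $n$ by two-point connectivities at distance of order $n$ and summing, which is where the $|v|$ weight comes from. Without producing and verifying such an identity, the proposal does not establish the dichotomy.
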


For $ S = 1/2 $ the second  alternative is known to hold (cf.~\cite{Aff_1/2,DST} and references therein).  In this case the model reduces to the quantum Heisenberg antiferromagnet.\footnote{Various features of the model are calculable through the  Bethe ansatz, which was actually developed in that context~\cite{Bethe}.
However, even aside from the extra care which is required for rigorous results, the exact determination of the the long distance asymptotic seems to require other means (cf.~\cite{Aff_1/2, LT03, DST} and references therein).}
In the converse direction, dimerization in this model was established for $S\ge 8$~\cite{NU} through a contour expansion.
The gap between these results is closed  here through a structural proof  that for all $S> 1/2$  the first option holds (regardless of the parity of $2S$). 
\begin{theorem} \label{thm:main1}
For all $S>1/2$:
\begin{enumerate}
\item  the even and odd ground-states,  defined by~\eqref{ev_odd}, differ.  
They are translates of each other, and exhibit the energy oscillation~\eqref{osc}. 
\item there exist $\xi=\xi(S) <\infty $ such that for all $U,V\subset\mathbb Z$ with distance $ \dist(U,V) $ and any~$ t \in \mathbb{R} $: 
\begin{equation} \label{exp}
\vert \langle F_U(t)  ; F_V\rangle_\textrm{even}  \vert \ \leq \  C_{F_U} C_{F_V} \, e^{- (\dist(U,V) + | t| ) /\xi}\, , 
\end{equation}
where $ C_{F_U} $ and $ C_{F_V} $ are invariant under space-time translations of the observables $ F_U , F_V $. 
  \end{enumerate}
\end{theorem}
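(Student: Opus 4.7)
The strategy is to invoke the Aizenman-Nachtergaele dichotomy (Proposition~\ref{thm:dichotomy}) and rule out its second alternative for every $S > 1/2$. Via the loop representation of $H_\textrm{AF}^{(L)}$ introduced in~\cite{AN}, ground-state expectations in the space-time strip $\Lambda_L \times \mathbb{R}$ are recast as expectations in a planar loop soup in which each closed loop carries a multiplicative weight $N = 2S+1$. Under the standard coupling, this loop measure coincides with the Fortuin-Kasteleyn expansion of the planar $Q$-state random-cluster model with $Q = N^2 = (2S+1)^2$ at its self-dual point. The slow-decay alternative~\eqref{slow} of Proposition~\ref{thm:dichotomy} translates, under this correspondence, into the statement that the random-cluster model at criticality exhibits a \emph{continuous} phase transition (no macroscopic cluster, power-law connectivities).

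It therefore suffices to prove that the planar random-cluster model with $Q = (2S+1)^2$ undergoes a \emph{discontinuous} transition whenever $S > 1/2$, i.e.\ whenever $Q > 4$. This is exactly the setting of the Duminil-Copin-Li-Manolescu theorem, and we intend to follow its structural reformulation by Ray-Spinka. Its output is a pair of distinct extremal Gibbs states at the self-dual point, which via the loop correspondence pull back to two distinct infinite-volume quantum states. To identify these with $\langle\cdot\rangle_\textrm{even}$ and $\langle\cdot\rangle_\textrm{odd}$ one tracks how the parity of $L$ encodes top/bottom boundary conditions for the horizontal strip in the loop picture; the two phases are interchanged by the one-step shift, and the oscillation~\eqref{osc} follows directly, since in the loop language $\langle P^{(0)}_{2n-1,2n} \rangle_\textrm{even} - \langle P^{(0)}_{2n,2n+1} \rangle_\textrm{even}$ measures precisely the asymmetry of loop-bridge densities between two shift-conjugate pure phases, which cannot vanish unless the two phases coincide.

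For the exponential bound~\eqref{exp}, the key input is exponential decay of connectivities within each extremal state of a planar random-cluster model at a discontinuous transition---a by-product of the sharpness techniques used to establish discontinuity in the $Q > 4$ regime. After expanding $F_U$ and $F_V$ into products of spin operators, the truncated correlation $\langle F_U(t); F_V \rangle_\textrm{even}$ is controlled by the probability, in the two-dimensional loop model, of loop connections between the space-time regions $U \times \{0\}$ and $V \times \{t\}$; exponential decay in Euclidean distance then yields~\eqref{exp}, with time and space directions playing symmetric roles. The principal obstacle, and the step requiring most care, will be making the identification of the even/odd ground-states with the two extremal Gibbs states \emph{quantitatively} uniform in $L$, so that the discontinuity estimates for the random-cluster model transfer to the quantum side with constants strong enough to yield the \emph{strict} inequality~\eqref{osc} rather than merely its non-strict version, and to produce a single $\xi(S) < \infty$ independent of translates.
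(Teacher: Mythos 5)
Your proposal correctly identifies the broad architecture of the paper's argument: invoke the Aizenman--Nachtergaele dichotomy, pass to the loop representation, recognize the loop weight $\sqrt{Q}^{N_\ell(\omega)}$ with $\sqrt Q = 2S+1$, and try to rule out the slow-decay alternative by establishing a discontinuous transition for $Q>4$ via a Ray--Spinka-style argument for part~1 and a Duminil-Copin--Li--Manolescu-type quantitative theorem for part~2.

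There is, however, a genuine gap in the central step: you state you ``intend to follow'' Ray--Spinka without identifying the mechanism that makes that argument run in this context. In Ray--Spinka the random-cluster model is coupled to a 6-vertex model, and the contradiction argument rests on a height function whose distribution enjoys a nontrivial symmetry. In the present paper the analogue of the 6-vertex coupling is precisely the \emph{shared loop representation with the XXZ chain}: the 4-edge presentation of $H_{\textrm{XXZ}}$ (Lemma~\ref{lem:XXZeqv}, Theorem~\ref{thm:FKXXZ}) enriches $\omega$ to an oriented loop configuration $(\omega,\tau)$, the height function is built from $\tau$, and the crucial symmetry is Corollary~\ref{cor:symm}, namely that the $\tau$-marginal of the periodic measure is an \emph{even} function of $\lambda$ (because the periodic XXZ Hamiltonian depends only on $\cosh\lambda$). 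That symmetry, combined with the fact that under assumption $\mu^A=\mu^B$ every point is encircled by infinitely many loops (Theorem~\ref{thm:mainpercolation}(4)--(5)), is what produces the contradiction in Lemma~\ref{lem:contradiction}. Your proposal never mentions the XXZ chain, the orientation variable $\tau$, or the $\lambda\leftrightarrow-\lambda$ symmetry, so the plan to ``follow Ray--Spinka'' is not actually substantiated; this is not a technical footnote but the main novel ingredient of the proof, and the reason the paper develops the XXZ/4-edge representation at all.

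Two further points. The ``principal obstacle'' you identify---quantitative uniformity in $L$ of the identification with extremal Gibbs states---is not an issue: FKG monotonicity in the domain gives clean convergence of $\mu^A_{L,\beta}$, $\mu^B_{L,\beta}$ to $\mu^A,\mu^B$, and the strict inequality~\eqref{osc} follows from the FKG comparison of nearest-neighbor connectivity probabilities together with~\eqref{eq:proexp}, exactly as in~\cite{AN}, with no extra uniformity input needed. Second, on part~2 you conflate the two inputs: the Ray--Spinka-style argument proves only the symmetry breaking (part~1), not the exponential decay; the latter requires the quantitative result of \cite[Thm.~1.5]{DLM18} for the loop/percolation model, transferred to the quantum truncated correlations through \cite[Thm.~7.2]{AN}. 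The paper is explicit that these are two separate pillars.
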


The proof draws on the progress which was recently made in the study of the related loop models.  In \cite{DGHMT}, the loop representation of the critical $Q$-state Potts model on the square lattice with $Q>4$ was proved to have two distinct infinite-volume measures under which the probability of having large loops is decaying exponentially fast (see \cite{kotecky:1982} for the case of large $Q$). The result was extended in \cite[Theorem 1.4]{DLM18} to a slightly modified version of the loop model that will be redefined in this paper and connected to the spin chains (there, the model is not defined in terms of loops but in terms of percolation, as in Section~\ref{sec:percolation}). More recently, Ray and Spinka \cite{RS} provided an alternative proof of the non-uniqueness of the infinite-volume measures on the square lattice. 

In this article, the inspiring proof of Ray-Spinka is extended to our context to provide a new proof of~1. We believe that this proof is more transparent and conceptual than the one in \cite{DLM18}, and that even though the technique does not directly lead to~2., it illustrates perfectly the interplay between the quantum and classical realms. In fact, a careful analysis of the proofs in the paper of \cite{DLM18}  shows that the argument there relies on two pillars: a  theorem proving a stronger form of Proposition~\ref{thm:dichotomy} (see also \cite{DST,DT19} for versions on the square lattice), in which~1.~of Theorem~\ref{thm:main1} is proved to imply~2., and an argument relying on the Bethe Ansatz showing that~1.~indeed occurs. The adaptation of the Ray-Spinka argument enables us to prove~1.~directly without using the Bethe Ansatz, so that the argument in this paper replace half of the argument in \cite{DLM18}, and that combined with the other half it also implies~2.\\

Let us finally note that under the dimerization scenario, which is now established for its full range ($S > 1/2 $), other physically interesting features follow:
\begin{enumerate}
\itemsep0pt
\item \emph{Spectral gap:} As was argued already in~\cite[Theorem~7.1]{AN}, the exponential decay of truncated correlations~\eqref{exp} in the $ t $-direction implies a non-vanishing spectral gap in the excitation spectrum above the even and odd ground-states. 
\item \emph{Excess spin operators:} 
When the decay of correlations is fast enough so that \eqref{slow} does not hold, in particular under \eqref{exp},  in the even/odd states the spins are organized into tight neutral clusters.  That is manifested in the tightness of the distribution of the block spins 
$S^z_{[a,b]} = \sum_{u\in [a,b]} S^z_u$ (in a sense elaborated in \cite{AGL}).   That is equivalent to the existence of the excess spin operators $ \widehat S_u^z$  with which
\be
\sum_{v=1}^u  S^z_v  =  \widehat S_0^z-\widehat S_u^z\,
\ee 
and such that $\widehat S_u^z$ commutes with the spins in $(-\infty, u]$.  
The quantity $\widehat S_u^z$ can be interpreted as the total spin in $(u,\infty)$, and constructed as  
$\lim_{\varepsilon\downarrow 0} \sum_{v > u } e^{-\varepsilon | u-v| }  S_v^z $ (in the strong-resolvent sense), cf.~\cite[Sec.~6]{AN}.
As was further discussed in~\cite{BachN}, the excess spins play a role  in the classification of  the topological properties of the gapped ground-state phases.
\item \emph{Entanglement entropy:} 
Another general implication of the exponential decay of correlations is a so-called area law (which for chains equates to the boundedness) of the entanglement entropy of the ground-states, see \cite{BH12} for details.
\end{enumerate}

\subsection{The $S=1/2$ antiferromagnetic   $XXZ$ spin chain} 

The second model discussed in this paper is the 
anisotropic XXZ spin-$1/2 $ chain with the Hamiltonian 
\be  \label{H_XXZ}
H^{(L)}_\textrm{\textrm{XXZ}} :=  - \frac{1}{2} \sum_{v=-L+1}^{L-1}\,  
[\tau^x_{v}  \tau^x_{v+1}+  \tau^y_{v}  \tau^y_{v+1}  -  
\Delta \,\, (\tau^z_{v}  \tau^z_{v+1}-1)]
\ee
acting on the Hilbert space $ \mathcal{H}_L  = \bigotimes_{v=-L+1}^L \mathbb{C}^{2} $.   
It consists of Pauli spin matrices~\eqref{eq:Pauli} on $ \mathbb{C}^2 $.
It is convenient to present the anisotropy parameter as
\be  \label{Delta_lambda}
\Delta := \cosh(\lambda)   > 1 \, . 
\ee 
Throughout the paper and unless stated otherwise explicitly, we will take $ \lambda \geq 0 $ the non-negative solution of~\eqref{Delta_lambda}.

The sign and the magnitude of  $\Delta >1$ favor antiferromagnetic order in the ground-state. 
The negative sign in front of the terms involving the $ x $- and $ y $-component of the 
Pauli spin matrices can be flipped through the  unitary transformation
$ U_L  = \exp\big( i \tfrac{\pi}{4} \sum_u (-1)^u \tau_u^z  \big) $. 
It  renders the Hamiltonian  in  the manifestly antiferromagnetic form
\begin{equation} \label{mod_XX}
U_L  H^{(L)}_{\textrm{XXZ}}U_L^*  = 
\frac{1}{2} \sum_{v=-L+1}^{L-1}\,  
\left[    \pmb{\tau}_{v}  \cdot \pmb{\tau}_{v+1}   +  \left( \Delta- 1 \right) \tau^z_{v}  \tau^z_{v+1}  - \Delta\right]  
 \,. 
\end{equation}
The antiferromagnetic   XXZ  chain has been the subject of many works.  Following Lieb's work on interacting Bose gas~\cite{Lieb63}, 
 Yang and Yang gave a justification for the Bethe Ansatz solution of the ground-state in a series of papers~\cite{YY1,YY2} in 1966. The ground-state has long-range order with two period-2 states in the thermodynamic limit, 
each with mean magnetization of alternating direction.  The corresponding 
  N\'eel order parameter ($M_{\text{N\'eel}}$ of ~\eqref{eq:Neel} below)   
  vanishes in the limit $\Delta \downarrow  1$.  
Since the exact solution is not very transparent, there has been interest in obtaining qualitative information by other means, e.g.\ expansions and other rigorous methods.   These typically apply only for large~$ \Delta $.   

Our motivation for returning to the XXZ spin chain is that  
it emerges very naturally in the analysis of the thermal and ground-states of the model $H_\textrm{AF}$. Furthermore, the relation between the two facilitates the proof of the symmetry breaking stated in Theorem~\ref{thm:main1}.   
In the converse relation, this relation is used here to establish symmetry breakdown in the form of N\'eel order of the XXZ ground-state(s) for all $  \Delta > 1 $.

To prove the translation symmetry breaking we consider the pair of finite-volume ground-states for the  Hamiltonian \eqref{H_XXZ} with an added boundary field\footnote{One may expect that in case there is N\'eel order any antisymmetric boundary field would flip the ground-state into one of the extremal states.  However the proof of that is simpler for the case the field's magnitude is at least $|\sinh (\lambda)|$.},  
 i.e.\
\be  \label{XXZ_bc}
H^{(L,\textrm{bc})}_{\textrm{XXZ}} :=  H^{(L)}_{\textrm{XXZ}}
+ \sinh(\lambda) (-1)^L\ \frac{ \tau_{-L+1}^z - \tau_L^z }{2} \times 
\begin{cases} 
+ 1& \mbox{for $\textrm{bc} = +$}\\
-1 & \mbox{for $\textrm{bc} = -$}
\end{cases} .
\ee

As a preparatory statement let us state: 
\begin{proposition} \label{prop:Neel}  
For any $\Delta \geq 1$, in the limit $L\to \infty$ with $ L $ even,  the finite-volume ground-states of the XXZ-spin system with the above boundary terms converge  to  states $\langle \cdot \rangle_+$ and $\langle \cdot \rangle_-$.   Regardless of whether the  two agree,  each is a one-step shift of the other.  
The two states are different if and only if they exhibit N\'eel order, in the sense for all $ n $:
\begin{equation} \label{eq:Neel}
(-1)^{n} \langle \tau_n^z \rangle_+ = -  (-1)^{n} \langle \tau_n^z \rangle_-  = M_{\textrm{N\'eel}}  
\end{equation}
at some  $M_{\text{N\'eel}} \neq 0$.  
\end{proposition}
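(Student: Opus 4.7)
The argument will rest on two $\mathbb{Z}_2$ symmetries of the finite-volume Hamiltonian $H^{(L,\text{bc})}_{\text{XXZ}}$ together with FKG monotonicity in the loop representation of its ground-state.

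The first symmetry is the global spin flip $\mathcal{F} := \prod_u \tau^x_u$, which leaves the bulk of $H^{(L)}_{\text{XXZ}}$ invariant (each bilinear $\tau^\alpha_u\tau^\alpha_{u+1}$, $\alpha\in\{x,y,z\}$, is $\mathcal{F}$-invariant) while reversing the sign of the boundary term in~\eqref{XXZ_bc}; hence $\mathcal{F} H^{(L,+)}_{\text{XXZ}} \mathcal{F} = H^{(L,-)}_{\text{XXZ}}$ and, on the finite-volume ground-states, $\langle \mathcal{F} O \mathcal{F}\rangle_L^{(+)} = \langle O\rangle_L^{(-)}$, so in particular $\langle\tau^z_n\rangle_L^{(-)} = -\langle\tau^z_n\rangle_L^{(+)}$. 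The second ingredient is the one-site lattice shift $T$: it leaves the bulk Hamiltonian invariant, and because the boundary prefactor $(-1)^L$ is parity-sensitive, it effectively interchanges the $+$ and $-$ boundary conditions up to adjustments at the ends. Passing to the even-$L$ thermodynamic limit --- for which convergence of $\langle\cdot\rangle_L^{(\pm)}$ is supplied by FKG monotonicity of the associated loop representation, with the coupling $\sinh(\lambda)$ precisely tuned so that the $\pm$ boundary terms realize the extremal ``wired''-type boundary conditions of that monotone measure --- one obtains $\langle T O T^{-1}\rangle_+ = \langle O\rangle_-$, i.e.\ the asserted one-step shift relation between $\langle\cdot\rangle_+$ and $\langle\cdot\rangle_-$.

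Combining the two relations with $O = \tau^z_n$ gives
\begin{equation*}
\langle \tau^z_{n+1}\rangle_+ \;=\; \langle T^{-1}\tau^z_n T\rangle_+ \;=\; \langle \tau^z_n\rangle_- \;=\; -\langle \tau^z_n\rangle_+ \,,
\end{equation*}
so $M_{\textrm{N\'eel}} := (-1)^n\langle\tau^z_n\rangle_+$ is independent of $n$ and automatically $(-1)^n\langle\tau^z_n\rangle_- = -M_{\textrm{N\'eel}}$, yielding the alternating pattern of~\eqref{eq:Neel}. The direction $M_{\textrm{N\'eel}}\ne 0 \Rightarrow \langle\cdot\rangle_+\ne\langle\cdot\rangle_-$ in the ``if and only if'' is then immediate by evaluating on $\tau^z_0$.

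The converse --- if $M_{\textrm{N\'eel}} = 0$ then $\langle\cdot\rangle_+ = \langle\cdot\rangle_-$ --- is the nontrivial half and represents the principal obstacle. Since the two states are related by $\mathcal{F}$, the equality reduces to showing that every $\mathcal{F}$-odd local observable has zero expectation under $\langle\cdot\rangle_+$. In the loop representation such expectations are controlled by probabilities of ``boundary-to-support'' loop connections, themselves dominated by $\langle\tau^z_0\rangle_+ = M_{\textrm{N\'eel}}$ via an FKG/stochastic-comparison argument; when $M_{\textrm{N\'eel}} = 0$ these connection probabilities vanish, and so do all $\mathcal{F}$-odd expectations. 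The step that requires genuine structural input from the loop representation, rather than symmetry and monotonicity alone, is precisely this domination, which expresses that $\tau^z$ is a \emph{complete} order parameter for distinguishing $\langle\cdot\rangle_+$ from $\langle\cdot\rangle_-$.
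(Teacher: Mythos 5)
Your use of the spin-flip $\mathcal{F}=\prod_u\tau^x_u$ is correct ($\mathcal{F}$ leaves each bond term of $H^{(L)}_{\textrm{XXZ}}$ invariant and flips the sign of the boundary field, so $\mathcal{F}H^{(L,+)}_{\textrm{XXZ}}\mathcal{F}=H^{(L,-)}_{\textrm{XXZ}}$), and combining it with a one-step shift does yield the alternating pattern $(-1)^n\langle\tau^z_n\rangle_+=M_{\textrm{N\'eel}}$. This is in the spirit of what the paper does, except the paper derives the alternating form from the combination of the space-time-reflection-plus-flip symmetry $R\circ F$ of the loop measure with its two-step shift invariance, and obtains both the convergence and the shift relation directly from FKG monotonicity in the $A/B$-percolation picture. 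Either route establishes the easy half.

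The genuine gap is precisely where you flag it: the converse implication, that $M_{\textrm{N\'eel}}=0$ forces $\langle\cdot\rangle_+=\langle\cdot\rangle_-$. Your sketch invokes a ``domination of all $\mathcal{F}$-odd expectations by boundary-to-support loop connection probabilities, themselves bounded by $M_{\textrm{N\'eel}}$,'' but you never specify the monotone structure or the comparison that would make this a proof. The paper's mechanism is different in kind and worth knowing. It first passes to the staggered variables $\kappa(u,t)=(-1)^u\tau(u,t)$, then rewrites $U_L H^{(L,\pm)}_{\textrm{XXZ}} U_L^*$ as the spin-$1/2$ $H_{\textrm{AF}}$ plus an added diagonal potential of strength $\delta=\Delta-1$, and represents the seeded state via the potential-like extension~\eqref{Poisson3}. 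Crucially, after conditioning on $\omega$, the resulting weight on $\kappa$ is a \emph{ferromagnetic} Ising model over the loops of $\omega$, with the $\pm$ boundary conditions acting as $\pm$ boundary fields. Strassen's theorem then gives a monotone coupling of $\mathbb{E}^\pm_{L,\beta}[\cdot\mid\omega]$ for every $\omega$, under which the Wasserstein distance between the $+$ and $-$ states restricted to any box $B_{K,T}$ equals $\sum_{u}\int\bigl[\langle\kappa(u,t)\rangle^+-\langle\kappa(u,t)\rangle^-\bigr]\,dt$ — i.e.\ it is proportional to $M_{\textrm{N\'eel}}$. When $M_{\textrm{N\'eel}}=0$ this distance vanishes on every box, so the two limits coincide. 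Your proposal thus reduces the problem to the right target but stops short of the annealed-Ising-plus-monotone-coupling argument that actually closes it; as it stands, the ``only if'' direction is unproved.
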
 

Let us emphasize that the system's size is even regardless of the parity of $L$ (the size being equal to $2L$).  The restriction in this theorem to sequences of constant parity is required for the consistency of the effect of the boundary conditions which are specified in~\eqref{XXZ_bc}. \\

Similarly to  Proposition~\ref{thm:dichotomy}, this statement is proven here through  the FKG inequality which is made applicable in a suitable loop representation.   We postpone its proof  to Section~\ref{sec:percolation}, next to the place where it is applied.  
Following is the XXZ-version of the symmetry breaking statement.

\begin{theorem} \label{thm:main2}
For any $ \Delta > 1 $ the construction described in Proposition~\ref{prop:Neel} yields two different ground  states of infinite XXZ-spin chain which differ by a one step shift and satisfy 
\eqref{eq:Neel}. 
\end{theorem}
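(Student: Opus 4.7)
}
By Proposition~\ref{prop:Neel} the two limit states $\langle\cdot\rangle_+$ and $\langle\cdot\rangle_-$ already exist, are translates of one another, and coincide if and only if $M_{\textrm{N\'eel}}=0$. Thus the whole theorem reduces to showing that, for every $\Delta=\cosh(\lambda)>1$, one has
\begin{equation*}
 (-1)^n\,\langle \tau_n^z\rangle_+ \;>\;0
\end{equation*}
for at least one (and hence, by translation and the sign-flip symmetry, every) site $n$. The plan is to recast this into a statement about a two-dimensional loop model, and then invoke the loop-model symmetry-breaking theorem that the paper establishes (by the Ray-Spinka style argument) at loop weight $n=2\cosh(\lambda)$.

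The first step is to write the finite-volume expectation $\langle \tau_n^z\rangle_{L,\pm}$ via the shared loop representation alluded to in the introduction. Under this representation the partition function of $H_{\textrm{XXZ}}^{(L,\textrm{bc})}$ becomes the partition function of a planar loop soup on a strip $[0,\beta]\times\Lambda_L$ (in the ground-state limit $\beta\to\infty$), with each loop carrying a multiplicative weight $2\cosh(\lambda)$. The boundary field $\pm\sinh(\lambda)(-1)^L(\tau_{-L+1}^z-\tau_L^z)/2$ at strength exactly $\sinh(\lambda)$ turns out to be the precise value that makes the two vertical sides of the strip into ``wired'' boundary arcs, merging all loops touching them into a single boundary loop. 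The sign $\pm$ selects which of the two admissible wirings (dual to each other under the natural one-step shift) is enforced, exactly paralleling the two distinct boundary conditions of the random-cluster/loop model considered in~\cite{DLM18,RS}. Correspondingly, the staggered observable $(-1)^n\tau_n^z$, evaluated with these boundary conditions, becomes (up to an explicit positive factor) the indicator that the site $n$ is visited by the boundary loop, i.e.\ is connected to the left/right wired side.

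The second step is to take $L\to\infty$ and invoke the symmetry-breaking theorem for this loop model, which holds precisely in the regime $n=2\cosh(\lambda)>2$, i.e.\ $\lambda>0$. This is the same result that, at the matching values $n=2S+1$ with $S>1/2$, underlies Theorem~\ref{thm:main1}; its proof via the Ray-Spinka argument is parameter-continuous in $n$ and applies to every real $n>2$, not merely to the half-integer sub-sequence. The theorem provides two distinct infinite-volume loop measures, and in each of them the boundary-touching probability at a bulk site converges to a strictly positive constant, independent of $L$. Transporting this through the identification of the previous step yields a nonzero limit for $(-1)^n\langle\tau_n^z\rangle_\pm$, with opposite signs for the two boundary conditions, and hence $M_{\textrm{N\'eel}}\neq 0$.

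The main obstacle I expect is the bookkeeping in the first step: verifying that the specific boundary term in~\eqref{XXZ_bc}, with magnitude $\sinh(\lambda)$ and staggered sign, is exactly what the loop representation needs in order to produce the wired boundary arcs---and hence to select the extremal loop measures---rather than producing defect lines that would spoil the connectivity interpretation of $\tau_n^z$. The rest (positivity of boundary-touching probabilities, FKG, and passage to the thermodynamic limit) is standard once the loop representation is set up, and is in any case the same input already used for Proposition~\ref{prop:Neel}.
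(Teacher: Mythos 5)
Your overall roadmap is right: use the loop representation, invoke the $Q>4$ symmetry breaking (which the Ray--Spinka argument establishes for all real $Q>4$, not just $Q=(2S+1)^2$), get a positive ``connected to infinity'' density $p_\infty$, and feed this back through Proposition~\ref{prop:Neel}. However, the bookkeeping in your first step -- which you yourself flag as the main obstacle -- is where the argument actually goes wrong, and not in a patchable way.

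The boundary field of magnitude $\sinh(\lambda)$ in \eqref{XXZ_bc} does \emph{not} wire the vertical sides of the space--time strip: its role (Lemma~\ref{lem:XXZeqv}) is purely to make $H_{\textrm{XXZ}}^{(L)}+\text{b.f.}$ equal to $-\sum_v K_{v,v+1}$, so that the $4$-edge weights of Fig.~\ref{fig:4edges} reproduce the untilted loop density $\sqrt{Q}^{N_\ell(\omega)}$ after summing over loop orientations. The ``wiring'' (which of $A$/$B$ is enforced) is controlled by the capping at $t=\pm\beta/2$, i.e.\ by the parity of $L$, and both states $\langle\cdot\rangle_+$ and $\langle\cdot\rangle_-$ are constructed with the \emph{same} parity (even $L$, $A$-wired). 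What the $\pm$ actually distinguishes is the sign of $\lambda$ entering the orientation bias: under $\widehat\mu^A_{\pm}$ each loop of $\omega$ is independently oriented counterclockwise with probability $e^{\pm\lambda}/(e^{\lambda}+e^{-\lambda})$. Consequently $(-1)^n\tau_n^z$ is \emph{not} a positive multiple of a boundary-touching indicator; it is a $\pm1$-valued variable whose conditional mean given $\omega$ depends both on the orientation bias and on the local geometry of the loop through $(n,0)$.

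The paper's proof works with the genuinely identifiable observable $\tau(2n,0)\cdot\1[(2n+\tfrac12,0)\leftrightarrow\infty]$, whose $\widehat\mu^A_\pm$-expectation equals $\pm\tanh(\lambda)\,p_\infty$; since $p_\infty>0$ (from Theorem~\ref{thm:A_B_breaking} plus Theorem~\ref{thm:uniqueness}), the two measures $\widehat\mu^A_\pm$ differ. Passing from ``the states differ as measures'' to ``the magnetizations differ'' is a genuinely separate step, supplied by the \emph{iff} in Proposition~\ref{prop:Neel} (the FKG boost, proved via a monotone coupling on the staggered variables $\kappa=(-1)^u\tau$). Your plan to compute $(-1)^n\langle\tau_n^z\rangle_+$ directly as a boundary-touching probability would bypass that boost, but the formula you propose for that computation is not correct, so this shortcut does not go through. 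In short: the key idea (loop symmetry breaking for continuous $Q>4$) is right, but the connectivity interpretation of $\tau_n^z$, the role of $\sinh(\lambda)$ in the boundary term, and the meaning of the $\pm$ boundary conditions are all misidentified, and these are exactly the places where the real work happens.
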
  

Theorem~\ref{thm:main2}  is proven in Section~\ref{sec:percolation} together with Theorem~\ref{thm:main1}.  
In each case the symmetry breaking is initially established through the expectation value of a conveniently defined quasi-local observable.  The conclusion is then boosted  to the more easily recognizable statements presented in the theorems through the preparatory statements of Proposition~\ref{prop:Neel} and respectively Proposition~\ref{thm:dichotomy}. 

\subsection{Seeding the ground-states} 

Infinite-volume ground-states  can be approached through their intrinsic properties (such as the energy-minimizing criterion) or, constructively, as limits of finite-volume ground-state expectation value functionals.  To establish their non-uniqueness,  we shall consider  different sequences of finite volume ground-states, and establish convergence of the expectation value functionals to limits which are  extensively different.    
Equivalently, it suffices to construct a single limiting ground-state which does not have the Hamiltonian's translation symmetry.  A shift (or another symmetry operation)  produces then another ground-state.   We shall take that path  in the discussion of both models. 

The finite-volume ground-states will be constructed through limits of the form 
\be \label{eq:seed}
\langle F \rangle_L^{(\textrm{gs})} =  \lim_{\beta\to \infty} \frac{\bra{\Psi_L}    e^{-\beta H_L/2}  F  e^{-\beta H_L/2}   \ket{\Psi_L}} 
{\bra{\Psi_L}    e^{-\beta H_L}   \ket{\Psi_L}}\,. 
\ee 
with $\ket{\Psi_L}$ a convenient \emph{seeding vector}.   To assure that the limiting functional corresponds to a ground-state (or \emph{the} ground-state if it is unique) one needs to verity that this vector is not annihilated by the  ground-state projection operator $ P_L^{(\textrm{gs})} $. 
That will be established  by verifying that  
\begin{equation}\label{gsproj}
\frac{ \langle  \Psi_L  |  P_L^{(\textrm{gs})}   |   \Psi_L\rangle }{ \dim  P_L^{(\textrm{gs})} } =  \lim_{\beta \to \infty} \frac{\langle \Psi_L | e^{-\beta H_L} |  \Psi_L \rangle }{\tr e^{-\beta H_L} }  > 0 \, . 
\end{equation}
Our choice of the seeding vectors is primarily guided not by the condition \eqref{gsproj}, which is generically satisfied, but rather by the goal of a transparent expression for the expectation value functional.

 In view of the quantum frustration effect, a natural seed vector for the construction of a ground-state for the Hamiltonian $ H_\textrm{AF}^{(L)} $ on an even collection of spins in $\Lambda_L = \{-L+1,\dots,L\}$ is the dimerized state
\begin{eqnarray}\label{eq:dimerstate}
| D_{L} \rangle & := &  \bigotimes_{j=1}^L  \Big(  \sum_{m=-S}^S (-1)^m \,  | m, -m \rangle_{-L+2j-1,-L+2j}\Big)   \notag \\ 
& = & U_{L} \ \bigotimes_{j=1}^L  \Big(  \sum_{m=-S}^S  \,  | m, -m \rangle_{-L+2j-1,-L+2j}\Big)  
\, .
\end{eqnarray}
The subscripts on the vectors indicate on which tensor component of $ \mathcal H_L $ they act. 
The role of the gauge transformation 
\be 
U_L  := \exp\Big( i \tfrac{\pi}{2} \sum_u (-1)^u S_u^z \Big) \, ,
\ee  
expressed in the standard $ z $-basis of the joint eigenstates of~$S^z_u $, $ u \in \Lambda_L $, 
is to ensure non-negativity of the matrix-elements of $ U_L^* e^{-\beta H^{(L)}_\textrm{AF} } U_L $ in the $ z $-basis.
This will enable a probabilistic loop representation of this semigroup presented 
in Section~\ref{sec:FK}. 
From this representation, we will also see that~\eqref{gsproj} is valid for the seed state $ \Psi_L = D_{L} $ at any finite $ L $, cf.~\eqref{eq:conditionAF} below. The standard Perron-Frobenius argument is not applicable in this case.

Applying the semigroup operator $e^{-\beta H_L/2}$ to $\ket{ D_{L} }$,   one gets the expectation-value functional which assigns to each local observable $F$ the value
\be \label{eq:expdimer}
\langle F \rangle_{L,\beta}^{\textrm{(AF)}} \  := \ 
\frac{\bra{ D_{L} } e^{-\beta H^{(L)}_\textrm{AF}/2}  F  e^{-\beta  H^{(L)}_\textrm{AF}/2} \ket{ D_{L} }} 
{\bra{ D_{L} } e^{-\beta  H^{(L)}_\textrm{AF}} \ket{ D_{L} } } \, , 
\ee 
and which converges as $ \beta \to \infty $ to a ground-state expectation $ \langle F \rangle_{L}^{(\textrm{gs})} $.
It is the above expectation-value functional which we study in the proof of Theorem~\ref{thm:main1} by probabilistic means.

To study the N\'eel order of the XXZ-Hamiltonian  we find it convenient to focus on the sequence of constant parity, say even $L$, and use as seed in \eqref{eq:seed}  the  vector 
\begin{equation} \label{N_vector}
| N^{(L)}_\lambda \rangle \ =\   \bigotimes_{j=1}^L 
\Big(e^{-\lambda/2} \ket{ +,- }_{-L+2j -1,-L+2j} + e^{\lambda/2} \ket{ -,+ }_{-L+2j -1,-L+2j} \Big) \, 
\end{equation}
 which is indexed by $ \lambda $. 
Using it, the state $\langle \cdot\rangle_+$ of Proposition~\ref{prop:Neel}  is presentable as the double  limit
\be \label{eq:101}
\langle F \rangle_+^\textrm{(XXZ)}  \ = \ 
 \lim_{\substack{L\to \infty \\ L \textrm{even}}}   \lim_{\beta\to \infty}  
\langle F \rangle_{L,\beta,\lambda}^\textrm{(XXZ,+)} 
\ee
of 
\be 
\langle F \rangle_{L,\beta,\lambda}^{\textrm{(XXZ,+)}} : =
\frac{
\bra{N^{(L)}_\lambda}    e^{-\beta H^{(L,+)}_{\textrm{XXZ}}/2} \,  F\,   e^{-\beta H^{(L,+)}_{\textrm{XXZ}}/2}   \ket{N^{(L)}_\lambda}} 
{\bra{N^{(L)}_\lambda}    e^{-\beta H^{(L,+)}_{\textrm{XXZ}}}   \ket{N^{(L)}_\lambda}} \,. 
\ee
For the state $\langle \cdot \rangle_-^{\textrm{(XXZ)}} $,  we reverse the sign in front of  $\lambda$ in \eqref{N_vector}, and apply the operator $H^{(L,-)}_{\textrm{XXZ}}$. 

Note that for fixed $ L \in 2 \mathbb{N} $, the limit $ \beta \to \infty $ in~\eqref{eq:101} converges to the finite-volume ground-state
of $ H^{(L,+)}_{\textrm{XXZ}} $, which is found in the subspace 
\begin{equation}
 S^z_{\textrm{tot}} := \sum_{u=-L+1}^L \tau^z_u/2 = 0 
 \end{equation} 
 where it is unique. 
This follows from a standard Perron-Frobenius argument, which is  enabled here by  the positivity and transitivity of the semigroup on that subspace. 
As a consequence, the finite-volume ground-state can be construction through the limit $ \beta \to \infty $ starting from any non-negative seed vector with $ S^z_{\textrm{tot}} = 0 $. The vectors $ N^{(L)}_\lambda $ with $ \lambda \in \mathbb{R} $ arbitrary are examples of such seed vectors and the limit~\eqref{eq:101} does not depend on the choice of $ \lambda $ in the seed (but still depends on $ \lambda $ through $  H^{(L,+)}_{\textrm{XXZ}} $.)

\medskip

Next we start the detailed discussion by recalling the probabilistic loop representations of the states described above.   The construction is included here mainly to keep the paper reasonably self-contained, since it is already contained in~\cite{AN}.

\section{Functional integral representation of the thermal states}\label{sec:FK}

\subsection{The general construction}

Thermal states of $ d $-dimensional quantum systems  can always be expressed in terms of a $(d+1)$-dimensional functional integral.  When the integrand can be expressed in positive terms, the result is a relation with a statistic-mechanical system in dimension $d+1$. 
General discussion 
of this theme and applications for specific purposes can be found e.g.\ in~\cite{Fey, Gin, AL, T, AN,Uel,BU}.
Our aim in this section is to  present this relation for the models discussed here.

As a starting point, let us note the following elementary identity, in which the power expansion of $e^{\beta K} $, which is valid for any bounded operator $K$, is cast in probabilistic terms:
\be \label{Poisson0}
e^{\beta (K -1)} = \sum_{n=0}^\infty  e^{-\beta} \int_{0< t_1< \dots< t_n < \beta} K_{t_N} \dots K_{t_1}   
 dt_1 \dots dt_n  = \int   \mathcal{T}  (\prod_{t \in \omega} K_t )  \, \, \rho_{[0,\beta]} (d \omega)  \,.
\ee 
In the last expression, the sequence of times is presented as a random point subset  $\omega =(t_1,\dots,t_n) \subset [0,\beta]$ distributed as a Poisson process on $[0,\beta]$ with intensity measure $dt$. The Poisson probability distribution is denoted here by $\rho_{[0,\beta]}(d\omega)$.  Attached to each point $t\in \omega$ is a copy of the operator $K$ labeled by $t$.   
The factors $K_t$ are rearranged  according to their  time label, which is denoted using the time ordering operator~$ \mathcal{T} $.
The  integral reproduces the familiar power series.   

For operators which are given by sums of (local) terms, as in our case 
\be \label{Ksum}
H_{\Lambda} = - \sum_{b\in \mathcal {E}(\Lambda)} K_b 
\ee 
with $K_b$  indexed by  the edge-set $\mathcal {E}(\Lambda) $ of a graph $\Lambda$,   the identity~\eqref{Poisson0} has the following extension
\begin{equation} \label{Poisson} 
e^{\beta  \sum_{b\in \mathcal {E}(\Lambda)} (K_b - 1)} \  = \ 
\int   K_{b_{|\omega|},t_{|\omega|}} \cdot \ldots \cdot K_{b_2,t_2} \cdot K_{b_1,t_1} \; \rho_{\Lambda\times[0,\beta]}(d\omega)   \, 
\end{equation} 
where $\omega$ are the configurations of a  Poisson point process over   $\mathcal {E}(\Lambda) \times [0,\beta]$,    
which may be depicted as collections of rungs of a random multicolumnar ladder net whose rungs are listed  as $\{(b_j,t_j)\}$ in increasing order of $t$.  
We denote by $\Omega_{\Lambda,\beta} $ the space of such configurations, and by    $\rho_{\Lambda\times[0,\beta]}(d\omega) $ the  
Poisson process with intensity measure $dt$ along the collection of  vertical columns $\cup_{b\in \mathcal {E}(\Lambda)} \{b\}\times [0,\beta] $.   

Given an orthonormal basis  $\{ \ket{\alpha} \}$ of the Hilbert space in which these operators operate,  one  has
\begin{eqnarray}  \label{time_dep}
\bra{\alpha'}  \mathcal{T} \Big( \prod_{(b,t) \in \omega} K_{b,t}\Big) \ket{\alpha} &=&  
\sum_{\widetilde \alpha}  \1[\omega, \widetilde \alpha]\,   \, 
\1\left[\substack{\alpha(t_{|\omega|})=\alpha'\\  \, \alpha(0)=\alpha}\right] \, W(\widetilde \alpha)  \\ 
W(\widetilde \alpha) &:=&  
 \prod_{j=1}^{|\omega|} \bra{\alpha(t_j+0) } \,  K_{b_j,t_j} \, \ket{\alpha(t_j-0) }  \notag
\end{eqnarray}  
where $\widetilde \alpha$ is summed over  functions $\widetilde \alpha :  [0,\beta] \mapsto \{ \ket \alpha\}$ which are constant between the transition times $0<t_1<...< t_{|\omega|}<\beta$, and the consistency constraint is expressed in the  indicator function $\1[\omega, \widetilde \alpha]$.   

Applying this representation, one gets
\begin{eqnarray} \label{Poisson2} 
\tr e^{\beta  \sum_{b\in \mathcal {E}(\Lambda)} (K_b - 1)}  & =&  
\int  \sum_{\widetilde \alpha: \, \alpha(\beta)=\alpha(0)}  \1[\omega, \widetilde \alpha]\,   \, 
  \, W(\widetilde\alpha)  \,\,
\rho_{\Lambda\times[0,\beta]}(d\omega).    
\end{eqnarray} 
The left side is obviously non-negative.  If a basis of vectors $\ket{\alpha}$ can be found in which also the matrix elements of $K_b$ are all non-negative, then \eqref{time_dep} yields a functional integral for the quantum partition function in which the integration is over $(\omega, \widetilde \alpha)$ which resembles a ``classical'' statistic mechanical system in $d+1$ dimensions (with $\alpha(t)$  a time-dependent configuration which changes at random times).

In that case one also gets a potentially  useful decomposition of the thermal state: 
\be \label{thermal}
\frac{\tr e^{-\beta H_\Lambda}  F } 
{\tr e^{-\beta H_\Lambda} }   = 
 \int  \mathbb{E}\left(F | \omega\right)  \mu_{\Lambda\times[0,\beta]}(d \omega) 
\ee 
with 
\begin{align}  \label{quasi_state}
\mathbb{E}\left(F | \omega\right)  & = 
 \tr \mathcal{T} \Big( F  \prod_{(b,t) \in \omega} K_{b,t}\Big) \big\slash  
  \tr \mathcal{T} \Big(  \prod_{(b,t) \in \omega} K_{b,t}\Big) 
\notag \\  
\mu_{\Lambda\times[0,\beta]}(d\omega)  & =   
\tr \Big[ \mathcal{T} \Big( \prod_{(b,t) \in \omega} K_{b,t}\Big) \Big] \; \rho_{\Lambda\times[0,\beta]}(d\omega)  \big \slash   \tr e^{-\beta (H_\Lambda +1)}  \, . 
\end{align} 
The functional $ F \mapsto \mathbb{E}\left(F | \omega\right) $ was dubbed in \cite{AN} a quasi-state. It does not possess the full positivity of a quantum state on all observables, but is a proper state on the sub-algebra of observables which are diagonal in the basis in which the interaction terms $K_b$ are all non-negative.  

A similar  decomposition is valid for states  
$ \bra {\Psi}   e^{-\beta H_\Lambda/2}  F e^{-\beta H_\Lambda/2}  \ket{\Psi} $,  
which are seeded by vectors~$ \Psi$ with non-negative overlaps with the above base vectors. 
For that it is pictorially convenient to cyclically shift the time interval to 
$[-\beta/2,\beta/2]$, and consider $\omega$ given by the  Poisson process over the set
\be 
\Lambda_{L,\beta}:=\Lambda_{L}\times[-\beta/2,\beta/2],
\ee
whose law is denoted by $\rho_{\Lambda_{L,\beta}}$.

Such non-negative functional integral representations of quantum states 
are associated with  Gibbs states of a classical statistic mechanical systems.   
Under this correspondence, non-uniqueness of the ground-states of a $d$-dimensional quantum spin system, in the infinite-volume limit, is  associated with a first-order phase transition (at a non-zero  temperature) of the corresponding $d+1$ dimensional classical system.   

\subsection{A potential-like extension} 

We shall also use an extension of the above expressions  to  operators of the form
\be \label{Ksum_U}
H_{\Lambda} = - \sum_{b\in \mathcal {E}(\Lambda)} K_b   -  V
\ee 
with $V$ an operator which is diagonal in the basis $\{\ket{\alpha}\}$, 
with $V\ket{\alpha} = V(\alpha) \ket{\alpha}$.   
In a manner reminiscent of the way that potential appears in the 
Feynman-Kac formula, one has 
\begin{equation} \label{Poisson3} 
\bra{ \alpha' } e^{\beta  \sum_{b\in \mathcal {E}(\Lambda)} (K_b - 1) + V}  \ket{\alpha} 
= \int  \sum_{\substack{ \widetilde \alpha: \, \alpha(0)=\alpha\\ 
\alpha(\beta) = \alpha'}}  \1[\omega, \widetilde \alpha]\,   \, 
  \, W(\alpha(0),\alpha(0))  \, e^{\int_{-\beta/2}^{\beta/2}    V(\alpha(t)) \, dt }\,
\rho_{\Lambda\times[0,\beta]}(d\omega).  \hspace{1cm} 
\end{equation} 
as can be deduced from  \eqref{time_dep}, e.g.~using  the Lie-Trotter product formula.

\section{Loop measures associated with $H_\textrm{AF}$} \label{sec:AFloop}

\subsection{The $H_\textrm{AF}$ seeded states} 

The positivity assumption does hold in the case of the two families of quantum spin chains considered here. Under the unitary (gauge) transformation $ U_L  := \exp\left( i \tfrac{\pi}{2} \sum_u (-1)^v S_u^z \right)  $,
the interaction terms of $H_{\textrm{AF}}^{(L)}$ acquire positive matrix elements in the standard basis of the joint eigenstates of~$(S^z_u)_{u\in \Lambda_L}$
\begin{equation}\label{constraints} 
U_L^* P_{uv}^{(0)} U_L = \frac{1}{2S+1} \sum_{m,m'=-S}^S \left| m, -m \rangle_{u,v} \langle m', -m' \right| \,  .
\end{equation}
In this basis, the factors $K_{b}= (2S+1) U_L^*  P_{uv}^{(0)}U_L$ which appear in~\eqref{Poisson} 
reduce to  constraints imposing the condition that before and after each rung the two spins at its edges add to zero.  To compute the global effect of that, one may replace each rung by a pair of ``infinitesimally separated'' lines, and then decompose the graph  into non-crossing loops, as indicated in Fig.~\ref{fig:loopsBC}.   

By elementary considerations~\cite{AN}, it follows that for each rung configuration $\omega$ drawn on $ \Lambda_{L,\beta}$:
 \be  \label{eq:ABloops}
\bra { D_{L} } \,  \mathcal{T} \Big( \prod_{(b,t) \in \omega} K_{b,t}\Big) \, \ket{ D_{L} }  
\ = \  (2S+1)^{N_\ell(\omega) }  \, , 
\ee  
where $N_\ell(\omega)$  is the number of loops into which the set of lines decomposes when the vertical lines are turned into columns through ``capping''  them at $t=\pm \beta/2$ over every other column starting with the left-most, cf.~Fig.~\ref{fig:loopsBC}.  
Depending on the parity of $ L $, the capping rule thus follows the two pairings  in Fig.~\ref{fig:pairing}. 

\begin{figure}[h]
\begin{center}
\includegraphics [width=0.5\textwidth]{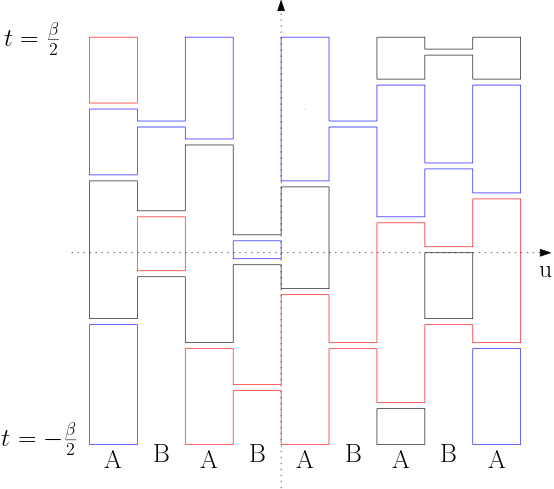}
\caption{
A configuration of randomly placed horizontal rungs in case $ L = 5 $, and its collection of loops obtained from the alternating boundary conditions at $ t = \pm \beta/2 $.       
Each rung imposes U-turns on the loops reaching it. }
\label{fig:loopsBC}
\end{center}
\end{figure}

More generally, for $ \ket{\pmb{m}} = \ket{m_{-L+1}, \dots, m_L} \in \mathcal{H}_L $ the orthonormal eigenfunctions of $ \{S_u^z\} _{u\in \Lambda_L} $, the matrix elements 
$ 
 \bra{\pmb{m}'}  
  \mathcal{T} \left( \prod_{(b,t) \in \omega} K_{b,t}\right) \ket{\pmb{m}}      
$ 
are given by the sum over configurations of the function $m: \Lambda_{L,\beta} \mapsto \{ -S, -S+1, \dots, S \} $ for which $m(x,t)$ is piecewise constant in time changing only at the encounters with the rungs of $\omega$, subject to the constraints explained next to \eqref{constraints}, and which at  $t=\pm\beta/2$  agree with $\ket{\pmb{m}}$ and   $\ket{\pmb{m}'}$ correspondingly.  

Adapting the quasi-state decomposition to the above seeded states, one gets:

\begin{proposition}[cf.~Prop.~2.1 in~\cite{AN}]  \label{prop_Dmeasures}
For the expectation value \eqref{eq:expdimer} corresponding to the seed vector $  \ket{D_L} $ and any observable $F$:

\begin{equation}
\left\langle F \right\rangle_{L,\beta}   = 
 \int  \mathbb{E}\left(F | \omega\right)  \mu_{L,\beta}(d \omega) ,
\end{equation}
where
\begin{equation} \label{quasi_D}
 \mathbb{E}\left(F | \omega\right) := \frac{1}{(2S+1)^{N_\ell(\omega) }}  \left\langle D_{L} \right| \mathcal{T} \Big( \prod_{\substack {(b,t) \in \omega \\ t \in [0,\beta/2) }}  K_{b,t}\Big) \, F \;  \mathcal{T} \Big( \prod_{\substack {(b,t) \in \omega \\ t  \in [-\beta/2, 0) }}  K_{b,t}\Big)  \left| D_{L} \right\rangle \, 
\end{equation}
and 
\be \label{tilt}
\mu_{L,\beta}(d\omega) :=   
\frac{1}{\textrm{Norm.}} \sqrt{Q}^{N_\ell(\omega) }\,  \rho_{\Lambda_{L,\beta}}(d\omega)  
  \qquad   \mbox{at $
 \sqrt Q = 2S+1 $}\, .
\ee 
\end{proposition}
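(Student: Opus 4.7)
The plan is to assemble the functional integral representation of Section~\ref{sec:FK} with the loop identity~\eqref{eq:ABloops} applied to the seeded state $\ket{D_L}$. First, we conjugate by the gauge transformation $U_L$ so that the interaction terms become the nonnegative operators $K_b = (2S+1) U_L^* P_{u,u+1}^{(0)} U_L$ displayed in~\eqref{constraints}. Because $\ket{D_L}$ was defined in~\eqref{eq:dimerstate} to absorb exactly this gauge transformation, $U_L^* \ket{D_L}$ is the bare dimer vector $\bigotimes_j \sum_m \ket{m,-m}_{-L+2j-1,-L+2j}$, which is nonnegative in the $\{S^z_u\}$-eigenbasis. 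Consequently the sandwich $\bra{D_L} e^{-\beta H_{\textrm{AF}}^{(L)}/2}\, F\, e^{-\beta H_{\textrm{AF}}^{(L)}/2} \ket{D_L}$ is directly amenable to the Poisson expansion~\eqref{Poisson}.

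Writing $H_{\textrm{AF}}^{(L)} = -\sum_b K_b = -\sum_b (K_b - 1) - (2L-1)$, applying~\eqref{Poisson} to each of the two semigroup factors, and cyclically shifting the time interval to $[-\beta/2,\beta/2]$ yields
\[
\bra{D_L} e^{-\beta H_{\textrm{AF}}^{(L)}/2} F e^{-\beta H_{\textrm{AF}}^{(L)}/2} \ket{D_L} = e^{\beta(2L-1)} \int \bra{D_L} \mathcal{T}\Big(\prod_{\substack{(b,t)\in\omega \\ t\in [0,\beta/2)}} K_{b,t}\Big) F\, \mathcal{T}\Big(\prod_{\substack{(b,t)\in\omega \\ t\in [-\beta/2, 0)}} K_{b,t}\Big) \ket{D_L}\, \rho_{\Lambda_{L,\beta}}(d\omega).
\]
Specializing to $F = \mathbbm{1}$, the identity~\eqref{eq:ABloops} evaluates the integrand as $(2S+1)^{N_\ell(\omega)}$, which yields the denominator. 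Dividing the numerator by the denominator, the prefactor $e^{\beta(2L-1)}$ cancels, and multiplying and dividing the numerator's integrand by $(2S+1)^{N_\ell(\omega)}$ produces the advertised factorization into the quasi-state $\mathbb{E}(F\vert\omega)$ of~\eqref{quasi_D} and the tilted probability measure $\mu_{L,\beta}(d\omega)$ of~\eqref{tilt}, with normalization constant $\int (2S+1)^{N_\ell(\omega)} \rho_{\Lambda_{L,\beta}}(d\omega)$.

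The key technical point to verify carefully is the loop identity~\eqref{eq:ABloops}, and this is the step where the parity-dependent pairing of Figure~\ref{fig:pairing} must be tracked. After the gauge transformation each rung acts as $K_b = \sum_{m,m'} \left(\ket{m,-m}\bra{m',-m'}\right)_{u,u+1}$, forcing the two spin values incident to it to be opposites and resetting them to a freely chosen pair $(m,-m)$. The boundary bra/ket $\bra{D_L},\ket{D_L}$ impose the same opposite-spin constraint on the indicated pairs of adjacent columns at $t = \pm \beta/2$. Depicting each rung as a pair of infinitesimally separated horizontal stubs as in Figure~\ref{fig:loopsBC}, the constraint graph decomposes into $N_\ell(\omega)$ closed loops; along each loop the spin label is constant up to an overall sign flip at each rung, so summation over the free label contributes $2S+1$ per loop, yielding $(2S+1)^{N_\ell(\omega)}$ as claimed. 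The main obstacle is exactly this combinatorial matching between the capping rule at the temporal boundaries and the U-turn structure induced by the rungs; once it is in place, the rest is routine algebra and the functional integration machinery of Section~\ref{sec:FK}.
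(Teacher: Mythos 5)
Your proof is correct and follows the same route the paper sketches: gauge away the signs via $U_L$, run the Poisson expansion of Section~\ref{sec:FK} on the two semigroup halves, invoke the loop identity~\eqref{eq:ABloops} to read off the denominator, and then multiply and divide by $(2S+1)^{N_\ell(\omega)}$ to split the integrand into the quasi-state~\eqref{quasi_D} and the tilted measure~\eqref{tilt}. The paper delegates the details to~\cite{AN} rather than spelling them out, and your fleshing-out of the parity-matching between the temporal capping and the rung-induced U-turns is exactly the right point to check; the $(2S+1)$-per-loop count works because every loop in the simply connected rectangle $\Lambda_{L,\beta}$ carries an even number of U-turns, so the sign-flip constraint is consistent around each loop.
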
  

Behind the complicated looking formula \eqref{quasi_D} is a  simple rule which  is particularly easy to describe for observables $F$ which are functions of the spins $S^z_u$.   
The conditional expectation conditioned on $\omega$ is obtained by averaging the value of $F$ over spin configurations which vary independently between the loops of $\omega$. On each loop the spins are constrained to assume only two values, changing the sign upon each U-turn.  

Following are some instructive examples: 
\begin{enumerate}
\item 
For each $\omega$ 
\be 
\mathbb{E}\left(S^x_u S^x_v | \omega\right) =  
\mathbb{E}\left(S^z_u S^z_v | \omega\right) = 
(-1)^{u-v} \, C_S\, \,
\1[(u,0) \stackrel {\omega}{\leftrightarrow}  (v,0)] 
\ee 
where $C_S= \sum_{m=-S}^S m^2 / (2S+1)$ 
and the space-time points $(u,0) \stackrel {\omega}{\leftrightarrow}  (v,0)$ denotes the condition that 
$(u,0)$ and $(v,0)$ lie on the same loop of $\omega$.
\item 
For the projection operator defined by \eqref{eq:pro}  
\begin{eqnarray}  \label{eq:proexp}
\mathbb{E}[(2S+1) P_{u,v}^{(0)} | \omega] &=& 
\begin{cases} 
\,\,\, \,\, 1 & \mbox{if  $(u,0) \stackrel {\omega}{\leftrightarrow}  (v,0)$ } \\   
(2S+1)^{-1}  & \mbox{if  not} 
\end{cases}    \notag \\[2ex]  
&=&  \Big( 1+ 2S \ \1[\,(u,0) \stackrel {\omega}{\leftrightarrow}  (v,0)\, ] \Big)/ (2S+1)\,.  
\end{eqnarray}  
\end{enumerate} 

 \subsection{The $H_{AF}$ thermal equilibrium states}  
 
 The above representation has a natural extension  to the thermal Gibbs states, for which the expectation value functional is given by
\be \label{Gibbs}  
\frac{\tr F e^{-\beta H_{\textrm{AF}}^{(L)}}} {\tr e^{-\beta H_\textrm{AF}^{(L)}} } \, .
\ee 
In this case the above construction yields a representation in terms of random loop decomposition of $\Lambda_{L,\beta}$ constructed with the time-periodic boundary conditions, with loops continuing directly from $t=\pm\beta/2$.   And if the  quantum Hamiltonian $ H_{AF}^{(L)} $ is taken with periodic boundary conditions then also the spacial coordinate is periodic, i.e.~the loops are over a torus.  
Similarly as in~\eqref{eq:ABloops} one gets
\be \label{eq:traceloops}
\tr \mathcal{T} \Big( \prod_{(b,t) \in \omega} K_{b,t}\Big) =  \ (2S+1)^{N_\ell^{\textrm{per}}(\omega)}\,,   
\ee 
where $N_\ell^\textrm{per}(\omega)$ is the number of loops into which the set of lines decomposes with the time-periodic boundary condition under which $t=\pm\beta/2$ are identified.

With this adjustment in the assignment of loops  to rung configurations, the state's  representation in terms of the loop system with the probability distribution \eqref{tilt}  remains valid also in the presence of  periodicity of either the temporal or spacial direction. 
This point should be borne in mind in the discussion which follows. In the pseudo spin representation, which is described next, a distinction will appear between the weights of winding versus contractible loops.

From~\eqref{eq:traceloops} and~\eqref{eq:ABloops}  we also obtain the  following explicit justification for~\eqref{gsproj}:  
\begin{equation}\label{eq:conditionAF}
\frac{\langle D_{L} | e^{-\beta H_{\textrm{AF}}^{(L)}} |  D_{L} \rangle }{\tr e^{-\beta H_{\textrm{AF}}^{(L)}} } = \frac{\int \sqrt{Q}^{N_\ell(\omega)} \rho_{\Lambda_{L,\beta}}(d\omega)  }{\int  \sqrt{Q}^{N_\ell^{\textrm{per}}(\omega)} \rho_{\Lambda_{L,\beta}}(d\omega)  } \geq \frac{1}{\sqrt{Q}^L} \, . 
\end{equation}
Indeed, for fixed rung configuration $ \omega $, the loops in the denominator are constructed on the time-periodic version of  $\Lambda_{L,\beta}$ and the loops in the numerator arise in the capped version of $\Lambda_{L,\beta}$. 
 Since the addition of a rung changes the number of loops by $ \pm 1 $ (depending on whether the two points were already connected by a loop or not), we have $ | N_\ell(\omega) - N_\ell^\textrm{per}(\omega) | \leq L $ and hence the lower bound in~\eqref{eq:conditionAF} follows.

\section{The loop representation of the anisotropic XXZ-model}  

\subsection{A modified $4$-edge presentation of the $XXZ$ interaction}

We shall now show that the loop measure which appeared quite naturally in the representation of the ground-states of $H^{(L)}_\textrm{AF}$ plays a similar role also for the $H^{(L)}_\textrm{XXZ}$ spin system.  
Preparing for that, we rewrite the Hamiltonian of the XXZ chain in terms of the slightly modified local interactions  consisting of the sum of the following four rank-one operators
\begin{align}\label{eq:repK} 
 K_{v,v+1} =     \Big(| -, + \rangle    \langle +, - |  \  +\   | +, - \rangle  \langle  -, + | \  +\   e^{\lambda} \,  |  -, +\rangle \langle  -,+ | \  +\    e^{-\lambda} \,  | +, - \rangle   \langle  +, -  | \Big)_{v,v+1}   \, .
\end{align}
(written in  the  bra-ket notation of \eqref{ket_bra}, with $ | \pm , \pm \rangle $ the eigenfunctions of $(\tau^z_v, \tau^z_{v+1})$).

The action of  $K_{v,v+1}$ is depicted in Fig.~\ref{fig:4edges}   in terms of  the four edge configurations with the weights: 
\be \label{eq:weights}
W_a = 1\,, \quad W_b = 1\,, \quad W_c = e^{-\lambda}\,, \quad W_d = e^{\lambda}\,.
\ee 
In this  representation of $H^{(L)}_{\textrm{XXZ}}$, the local interaction  terms are no longer invariant under spacial reflection, 
but their sum differs from the more symmetric expression \eqref{H_XXZ} only in a boundary term -- in fact the one which was included in \eqref{XXZ_bc}  due to  this correspondence.  
Furthermore, this boundary term does not appear in the operators' periodic version
\begin{equation}\label{XXZper}
H^{(L,\textrm{per})}_{\textrm{XXZ}} :=  - \frac{1}{2} \sum_{v=-L+1}^{L}\,  
\left(  \tau^x_{v}  \tau^x_{v+1}+  \tau^y_{v}  \tau^y_{v+1} + \cosh(\lambda)\left( 1-  \tau^z_{v}  \tau^z_{v+1} \right) \right) \, ,
\end{equation}
where the sum extends also to the edge connecting $ L $ and $-L+1 \equiv L+1 $. Following is the exact statement. 
\begin{lem}\label{lem:XXZeqv}
For any $ L \in \mathbb{N} $   and $ \lambda \in \mathbb{R}   $: 
\begin{equation}\label{eq:XXZequiv}
H^{(L)}_{\textrm{XXZ}} +  \sinh(\lambda)\ \frac{\tau_{-L+1}^z- \tau_L^z}{2}  \ = \ -   \sum_{v=-L+1}^{L-1}\,  K_{v,v+1}   \quad  ( \ :=  K^{(L)} \ )
 \end{equation}
Furthermore, taken with the periodic boundary conditions the two operators agree without the boundary term: 
\be \label{eq:H_K_per}
 H^{(L, \textrm{per})}_{\textrm{XXZ}}   \ =   \  -   \sum_{v=-L+1}^{L-1}\,  K_{v,v+1}   - K_{L, -L+1} \quad  ( \ :=    K^{(L,\textrm{per})}  \ ) \, . 
 \ee
\end{lem}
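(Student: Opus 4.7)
The plan is to verify the identity by direct computation of matrix elements in the standard $\tau^z$-eigenbasis $\{|+,+\rangle,|+,-\rangle,|-,+\rangle,|-,-\rangle\}$ on each bond, and then exploit the telescoping structure of a certain one-body remainder.

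First I would note that both the local XXZ interaction
$$h_{v,v+1} := -\tfrac{1}{2}\bigl[\tau^x_v\tau^x_{v+1}+\tau^y_v\tau^y_{v+1}-\cosh(\lambda)(\tau^z_v\tau^z_{v+1}-1)\bigr]$$
and the operator $K_{v,v+1}$ defined in~\eqref{eq:repK} annihilate the two aligned states $|+,+\rangle$ and $|-,-\rangle$, so both can be viewed as acting on the two-dimensional antialigned subspace spanned by $|+,-\rangle,|-,+\rangle$. Using $\tfrac12(\tau^x\tau^x+\tau^y\tau^y)=|+,-\rangle\langle-,+|+|-,+\rangle\langle+,-|$ and $\tfrac12(\tau^z_v\tau^z_{v+1}-1)\bigr|_{\text{antialigned}}=-\bigl(|+,-\rangle\langle+,-|+|-,+\rangle\langle-,+|\bigr)$, a one-line comparison with the four rank-one pieces of $K_{v,v+1}$ yields
\begin{equation*}
h_{v,v+1}+K_{v,v+1}\ =\ (e^{\lambda}-\cosh\lambda)\,|-,+\rangle\langle-,+|\,+\,(e^{-\lambda}-\cosh\lambda)\,|+,-\rangle\langle+,-|.
\end{equation*}
The two coefficients are $\pm\sinh(\lambda)$, and since $|-,+\rangle\langle-,+|-|+,-\rangle\langle+,-|=\tfrac12(\tau^z_{v+1}-\tau^z_v)$ on the antialigned subspace (and vanishes on the aligned one), this gives the pointwise identity
\begin{equation*}
h_{v,v+1}+K_{v,v+1}\ =\ \tfrac{\sinh(\lambda)}{2}\bigl(\tau^z_{v+1}-\tau^z_v\bigr).
\end{equation*}

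Next I would sum this over $v=-L+1,\dots,L-1$. Because the right-hand side is an exact discrete derivative, the sum telescopes and leaves only the boundary contribution $\tfrac{\sinh(\lambda)}{2}(\tau^z_L-\tau^z_{-L+1})$. Rearranging gives exactly~\eqref{eq:XXZequiv}. For the periodic version~\eqref{eq:H_K_per}, I would apply the same identity to the additional wrap-around edge $\{L,-L+1\}$: it contributes $\tfrac{\sinh(\lambda)}{2}(\tau^z_{-L+1}-\tau^z_L)$, which cancels the boundary term produced by the telescoping sum. What remains on the Hamiltonian side is precisely $H^{(L,\textrm{per})}_{\textrm{XXZ}}$ as written in~\eqref{XXZper}, since the sum $\sum_v h_{v,v+1}$ with the extra wrap-around term already accounts for the edge connecting $L$ and $-L+1\equiv L+1$.

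There is no real obstacle here: the computation is purely algebraic on a $2\times 2$ subspace of each bond, and the key observation is that the mismatch between $-K_{v,v+1}$ and the symmetric XXZ term is a pure one-body telescoping operator $\tfrac12\sinh(\lambda)(\tau^z_{v+1}-\tau^z_v)$. The only point deserving care is making sure the signs come out right; in particular, using $e^{\pm\lambda}-\cosh\lambda=\pm\sinh\lambda$ correctly identifies which endpoint of each bond carries which sign, so that after summation the surviving boundary term matches $\sinh(\lambda)(\tau^z_{-L+1}-\tau^z_L)/2$ in~\eqref{eq:XXZequiv} and is cancelled by the wrap-around bond in~\eqref{eq:H_K_per}.
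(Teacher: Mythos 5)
Your proof is correct and follows essentially the same route as the paper: both verify that the difference between the local XXZ bond term and $-K_{v,v+1}$ is a diagonal operator in the $\tau^z$-basis equal to $\tfrac{\sinh\lambda}{2}(\tau^z_{v+1}-\tau^z_v)$, and that summing over bonds telescopes to the stated boundary term (or to zero in the periodic case). The paper phrases the same computation globally, counting up- and down-turns $n_\uparrow,n_\downarrow$ of a spin configuration and using $n_\uparrow-n_\downarrow=(\tau^z_L-\tau^z_{-L+1})/2$, which is just the telescoped form of your per-bond operator identity.
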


\begin{figure}[h]
\begin{center} 
 \includegraphics [width=0.3\textwidth]{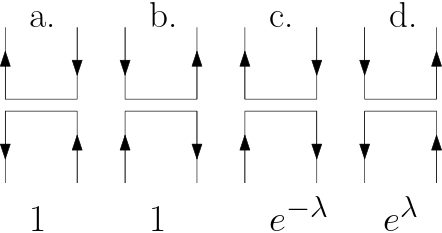} 
\caption{The non-zero matrix elements of the two-spin operator $K_{v,v+1}$ of \eqref{eq:repK}, with up arrows corresponding to $ \tau = 1 $ and down arrows to $ \tau = -1 $. These weights can be reinterpreted as the product of $e^{\lambda/2}$ per left U-turn 
and $e^{-\lambda/2}$ per right U-turn along the $\tau$-oriented loop lines. }
\label{fig:4edges}
 \end{center}
\end{figure}

\begin{proof}
 The action of the sum of the first two 
edges (a.~and~b.) agrees with that of $\left[ \tau^x_{v}  \tau^x_{v+1}+  \tau^y_{v}  \tau^y_{v+1}\right]/2  $, which represent the local $ x $- and $ y $-terms in~\eqref{H_XXZ}.  The local $ z $-terms in~\eqref{H_XXZ} and~\eqref{XXZper} agree with the action of the last two edges (c.~and d.) in Fig.~\ref{fig:4edges}. However, their weight in~\eqref{H_XXZ}  and~\eqref{XXZper} is $ \cosh(\lambda)$  for both edges c.~and~d.. 
The fact that the summation over all edges in the non-periodic box $ \Lambda_L $ yields the same result up to a boundary term is checked by noting that for a given spin configuration $ \pmb{\tau}  $ the difference between these two cases can be expressed in terms of the number of up- and down-turns, $ n_\uparrow^{(L)}(\pmb{\tau})$, $ n_\downarrow^{(L)}(\pmb{\tau}) $, over the edges of $ \Lambda_L$:
\begin{align}
 K^{(L)}  + H^{(L)}_{\textrm{XXZ} } \ & =    \     e^\lambda n_\uparrow^{(L)}(\pmb{\tau}) + e^{-\lambda} n_\downarrow^{(L)}(\pmb{\tau})   + \sum_{v=-L+1}^{L-1} \cosh(\lambda)  \frac{ \tau^z_{v}  \tau^z_{v+1}  -1}{2}   \notag \\
 & = \  e^\lambda n_\uparrow^{(L)}(\pmb{\tau}) + e^{-\lambda} n_\downarrow^{(L)}(\pmb{\tau})  - \cosh(\lambda)  \big( n_\uparrow^{(L)}(\pmb{\tau}) + n_\downarrow^{(L)}(\pmb{\tau})  \big)  \notag \\
 & = \   \sinh(\lambda)  \big( n_\uparrow^{(L)}(\pmb{\tau}) - n_\downarrow^{(L)}(\pmb{\tau})  \big)  \, . 
 \end{align}
The proof of~\eqref{eq:XXZequiv} is completed by noting that $ n_\uparrow^{(L)}(\pmb{\tau}) - n_\downarrow^{(L)}(\pmb{\tau})  =  (\tau_L^z - \tau_{-L+1}^z) /2 $. In the periodic case, this boundary term drops out.
\end{proof}

\subsection{A link between the $H_\textrm{XXZ}$ and  $H_\textrm{AF}$ loop measures} 

Applying the general procedure  to the operator $e^{-\beta H^{(L,+)}_{\textrm{XXZ}}/2}$ written as  $e^{\beta K^{(L)}/2}$ we obtain a representation of states in terms a functional integral over configurations $\vec\omega = (\omega,\tau)$ with binary-valued functions
$$
\tau: \Lambda_{L,\beta} \to \{ -1, 1 \} 
$$
whose values may change only at the rungs of $\omega$, consistently with the edges depicted in Fig.~\ref{fig:4edges}.  
The local condition implies that the allowed functions $\tau$ are consistent with the loop structure of $\omega$:  Along each loop of $\omega$ the function $\tau$  is aligned with  either its clockwise of counterclockwise orientation.   
We  denote by $\1[\omega,\tau] $  the indicator function expressing this consistency condition.

\begin{theorem}\label{thm:FKXXZ} 
For $ \lambda \geq 0 $, any $L$ even and $\beta$, the expectation value of any function of $\tau^z $ in the state defined in~\eqref{eq:101} is given by 
\be
\left\langle f(\tau^z) \right\rangle_{L,\beta,\pm\lambda}^{(\textrm{XXZ},\pm)}   =  \int  \mathbb{E}_\pm\left(f | \omega\right)  \mu_{L,\beta}(d \omega) 
\end{equation}
with $\mu_{L,\beta}$ the measure defined in \eqref{tilt} at 
\be\label{lambda_Q}
\sqrt Q = e^\lambda +e^{-\lambda}
\ee
and the normalized expectation value
\be \label{cond_tau}
\mathbb{E}_\pm\left(f | \omega\right)  = \frac{1}  {\sqrt Q^{N_\ell(\omega)}} \sum_\tau \1[\omega,\tau] \ W_\pm(\omega,\tau)  \ f(\tau(\cdot, 0))
\ee
with the weights  
\begin{eqnarray}  
W_\pm(\omega,\tau) :=  
\Big(\prod_{ + \ell } e^{\pm\lambda} \Big)  \, \Big( \prod_{ -\ell  }e^{\mp\lambda}   \Big)  \, , 
\end{eqnarray}
where the product is over $ (+) $ and $ (-) $ oriented loops $ \ell $ of $ (\omega,\tau) $.
\end{theorem}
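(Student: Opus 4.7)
The plan is to combine Lemma~\ref{lem:XXZeqv}, which absorbs the boundary terms of $H^{(L,+)}_{\textrm{XXZ}}$ into the four-edge operator $-K^{(L)}$, with the Poisson expansion of Section~\ref{sec:FK}, and then to reorganise the resulting functional integral loop by loop so that it meshes exactly with the representation already developed for $H_{\textrm{AF}}$.

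For $L$ even and bc$\,=+$, Lemma~\ref{lem:XXZeqv} yields $H^{(L,+)}_{\textrm{XXZ}} = -K^{(L)}$, so $e^{-\beta H^{(L,+)}_{\textrm{XXZ}}/2} = e^{\beta K^{(L)}/2}$. Decomposing each $K_{v,v+1}$ into its four rank-one summands from~\eqref{eq:repK} with weights $W_a,\dots,W_d$ from~\eqref{eq:weights}, and applying~\eqref{Poisson}, produces an integral over Poisson configurations on $\Lambda_{L,\beta}$ whose rungs carry an additional label in $\{a,b,c,d\}$. After sandwiching between $\langle N^{(L)}_\lambda|$ and $|N^{(L)}_\lambda\rangle$ and inserting~\eqref{time_dep} in the $\tau^z$-basis, the matrix element becomes a sum over piecewise-constant functions $\tau:\Lambda_{L,\beta}\to\{\pm 1\}$ whose jumps at each rung are restricted to the four patterns in Fig.~\ref{fig:4edges}; in particular, for fixed unlabeled $\omega$ and fixed $\tau$, each rung's type (and hence its weight) is uniquely determined.

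I would then pass from labeled rungs to the loop picture of Section~\ref{sec:AFloop}. Drawing loops from $\omega$ as in the AF representation, with the capping rule at $t=\pm\beta/2$ dictated by the pairing of $N^{(L)}_\lambda$, the consistency constraint between $\tau$ and the rung types is precisely the statement that $\tau$ is constant along each loop of $\omega$. Hence the admissible $\tau$ are in bijection with the choice of an orientation for each of the $N_\ell(\omega)$ loops, and the matrix element becomes a sum over such orientations weighted by the product of rung weights times the boundary factors $e^{\mp\lambda/2}$ contributed by $\langle N^{(L)}_\lambda|\cdot|N^{(L)}_\lambda\rangle$ at each cap.

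The crucial structural step is to verify that this product collapses to $\prod_{\ell}e^{\pm\lambda}$ over oriented loops. Following the reinterpretation in the caption of Fig.~\ref{fig:4edges}, each rung contributes $e^{\lambda/2}$ per left U-turn and $e^{-\lambda/2}$ per right U-turn of the $\tau$-oriented line, and the same rule extends to the boundary caps once the orientation encoded in $N^{(L)}_\lambda$ is tracked. Because each loop is a simple closed curve in the plane, the Umlaufsatz forces (\# left U-turns) $-$ (\# right U-turns) $=\pm 2$ along every loop, so the weight of each loop reduces to $e^{\pm\lambda}$ depending on its orientation. This topological bookkeeping, with particular care for the boundary caps generated by the seed vector, is the only genuinely delicate point.

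Once this per-loop identity is secured, the remainder is rearrangement. Summing over $\tau$ gives
$\sum_\tau \mathbbm{1}[\omega,\tau]\,W_+(\omega,\tau)=\prod_\ell(e^\lambda+e^{-\lambda})=\sqrt{Q}^{N_\ell(\omega)}$ with $\sqrt{Q}=e^\lambda+e^{-\lambda}$, so that dividing numerator and denominator of the ratio defining $\langle f(\tau^z)\rangle^{(\textrm{XXZ},+)}_{L,\beta,\lambda}$ by a common factor turns the $\rho_{\Lambda_{L,\beta}}$-integral weighted by $\sqrt{Q}^{N_\ell(\omega)}$ into the tilted loop measure $\mu_{L,\beta}$ of~\eqref{tilt} at this new value of $Q$, and identifies the conditional ratio with~\eqref{cond_tau}. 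The $-$ case is obtained by substituting $\lambda\mapsto -\lambda$, which interchanges the roles of $(+,-)$ and $(-,+)$ consistently in the Hamiltonian, the boundary field, and the seed vector.
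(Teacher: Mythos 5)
Your proposal follows essentially the same route as the paper's proof: absorb the boundary term via Lemma~\ref{lem:XXZeqv}, expand $e^{\beta K^{(L)}/2}$ by the Poisson formula, sandwich between the seed vectors to get a sum over $(\omega,\tau)$ weighted by per-rung factors, and then lump the rung weights by loops using the left/right U-turn reinterpretation from the caption of Fig.~\ref{fig:4edges}, so that a loop of winding number $\pm1$ picks up weight $e^{\pm\lambda}$ and $\sum_\tau\1[\omega,\tau]W_\pm(\omega,\tau)=\sqrt{Q}^{N_\ell(\omega)}$. The only cosmetic difference is that you invoke the Umlaufsatz explicitly where the paper simply states the loop-by-loop collapse, and like the paper you flag (without fully spelling out) the cap contributions from $|N^{(L)}_\lambda\rangle$, which the paper likewise asserts rather than verifies in detail.
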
 
\begin{proof}   We spell the proof in the case $+ $. Proceeding as described in Section~\ref{sec:FK}, we get 
\be \label{eq:expandXXZ}
\frac{\bra{N^{(L)}_+}    e^{-\beta H^{(L,+)}_{_{XXZ}}/2} \,   f(\tau^z)\,   e^{-\beta H^{(L,+)}_{_{XXZ}}/2}   \ket{N^{(L)}_+}} 
{\bra{N^{(L)}_+}    e^{-\beta H^{(L,+)}_{_{XXZ}}}   \ket{N^{(L)}_+}} = 
\frac{ 
\int \sum_\tau \1[\omega,\tau] \widetilde W_+(\omega,\tau)   f(\tau(\cdot, 0))  \rho_{\Lambda_{L,\beta}}(d\omega) }
{\int  \sum_\tau \1[\omega,\tau] \widetilde  W_+(\omega,\tau)   \rho_{\Lambda_{L,\beta}}(d\omega)   } 
\ee 
with   
weights   given by the product over all rungs of $ \omega $ in terms the four types $ \#(\tau,b) \in \{ a., b., c., d.\} $  listed in~\eqref{eq:weights} (cf.~Fig.~\ref{fig:4edges}):
\begin{eqnarray}  
\widetilde W_+(\omega,\tau) &=&    \prod_{b\in \omega } W_{\#(\tau,b)}   .\end{eqnarray}
Lumping the factors by the loops of $\omega$, for each loop which does not reach the upper and lower boundary of the box $ \Lambda_{L,\beta}$, one gets the total of $e^{+ \lambda} $ per   counter-clockwise ($+$)  and $e^{- \lambda} $ per  clockwise ($-$) oriented loop.  In that case $\widetilde W_+(\omega,\tau)$ reduces to the above defined $W_+(\omega,\tau) $.   Furthermore, 
with our choice of the seed vector $\ket{N^{(L)}_+}$  that  is also true  of the loops which are reflected from the upper and/or the lower boundary.  

Summing over the $2^{N_\ell(\omega)}$  possible loop orientations one gets, for each $\omega$
\be \label{eq:347}
\sum_\tau \1[\omega,\tau] \ W_\pm(\omega,\tau)  = (e^\lambda + e^{-\lambda})^{N_\ell(\omega)} =  \sqrt Q^{N_\ell(\omega)}\,. 
\ee 
Thus, the average in \eqref{eq:expandXXZ} is over $(\omega,\tau)$ with the joint distribution  whose marginal distribution of $\omega$ is the normalized probability measure
\be
\frac{1}{\text{Norm.}}  (e^\lambda + e^{-\lambda})^{N_\ell(\omega)} \rho_{\Lambda_{L,\beta}}(d\omega) = \mu_{L,\beta}(d\omega) \,,  
\ee
with the conditional distribution of $\tau$ conditioned on $\omega$ stated in \eqref{cond_tau}. 
\end{proof}

It may be instructive to pause here and compare the different perspectives on the 
above loop measure. 
Starting from the analysis of the two different quantum spin chains we arrive at a common system of random rung configurations $\omega$, whose probability distribution in both models takes the form 
\be \label{eq:tilted} 
\mu_{L,\beta}(d\omega) = \sqrt Q^{N_\ell(\omega)} \rho_{\Lambda_{L,\beta}}(d\omega) /\text{Norm.} \quad 
\mbox{at \quad $2S+1 \,  = \, \sqrt Q \,  = \,  e^\lambda +e^{-\lambda}  $} 
\ee  
with $\rho_{\Lambda_{L,\beta}}(d\omega)$ a Poisson measure of intensity one.  The  factor $\sqrt Q^{N_\ell(\omega)} $, by which the measure is tilted, appears  through the summation over another degree of freedom, at which point the models differ.  
More explicitly, in the different systems this common factor is  variably decomposed as
\begin{eqnarray}  
\sqrt Q^{N_\ell(\omega)} &=& \sum_{m} \1[\omega,m]    \hspace{3.5cm}  (H_\textrm{AF})  \notag   \\ 
&=& \sum_{\tau} \1[\omega,\tau] \quad   \prod_{b\in \omega } W_{\#(\tau,b)}   \hspace{1cm}  \quad (H_\textrm{XXZ})
\end{eqnarray}
where the summations are over functions 
\begin{eqnarray}  m:  \Lambda_L &\mapsto& \{-S,-S+1,\dots,S\} 
\quad \mbox{ (with $2S+1= \sqrt Q$)} ,\notag \\ 
\tau:   \Lambda_L &\mapsto&  \{-1,+1\}  \hspace{2.1cm} \mbox{ (with $e^\lambda + e^{-\lambda} = \sqrt Q$) } .
\end{eqnarray} 
The indicator functions impose the consistency condition requiring $m$ or $\tau$  to be consistent with the loop structure of $\omega$, i.e.~a switch of signs at each U-turn and otherwise be constant along each vertical segment.

Thus,  the above system of the random \emph{oriented loop}  described by $\vec{\omega} = (\omega,\tau)$  can be presented in two equivalent forms: 
\begin{enumerate}
\item
Locally: as a $4$-edge model of random oriented lines with the weights listed in Fig.~\ref{fig:4edges}.  
\item 
Globally: by the following two characteristics of  its probability distribution  $ \widehat\mu_{L,\beta,\lambda} $:
\begin{enumerate} 
\item[i)]  $\omega$ has the probability distribution $\mu_{L,\beta}$  which is tilted relative to the Poisson process $\rho_{\Lambda_{L,\beta}}(d\omega)$ by the factor $\sqrt Q^{N_\ell(\omega)}$
\item[ii)] conditioned on $\omega$, the conditional distribution of $\tau$ corresponds to   independent assignments of orientation to the loops of $\omega$, at probabilities $
e^{\pm \lambda}/[e^{\lambda}+e^{-\lambda}]$ depending on whether the loop is anticlockwise ($+$) or cklockwise ($-$) oriented.
\end{enumerate} 
To emphasise the fact that the measure $ \widehat\mu_{L,\beta, \lambda} $ changes under a change of the sign of $ \lambda \in \mathbb{R} $,
we keep track of it in the notation.
\end{enumerate} 

The above  local to global relation  is  reminiscent of the Baxter-Kelland-Wu~\cite{BKW76}   correspondence between the  $Q$-state Potts model  and the  $6$-vertex model, which followed the analysis of Temperley and Lieb~\cite{TL71}.

In the context of the XXZ-operator, the loop  picture carries a particularly simple implication for sites at the boundary of $\Lambda_L$, where the relation of $\tau(u,t)$ to loop's helicity is unambiguous.  One gets, for the finite volume ground-states: 
\be 
\langle \tau(u,0) \rangle_{L,\beta} = \begin{cases} 
- \tanh (\lambda) & u = -L+1, \\ 
+ \tanh (\lambda) & u = L,
\end{cases} \,  
\ee 
regardless of the value of $ L $ and $ \beta>0$.     

\subsection{The XXZ-Hamiltonian with the periodic boundary conditions}

Under  the joint distribution  $ \widehat\mu_{L,\beta,  \lambda} $ on oriented loops, 
the induced measure on $ \tau $'s restriction to the line $ t = 0 $   
 was shown to agree with the 
seeded expectation value of $z$-spins for the XXZ-Hamiltonian with boundary term on $ \Lambda_L $.

This relation takes a simpler form for the thermal state of 
the XXZ-Hamiltonian taken with periodic boundary conditions~\eqref{XXZper}.  
To express that, we denote by $\widehat\mu_{L,\beta,  \lambda}^\textrm{per}$  the similarly defined the 4-edges measure on $\Lambda_{L,\beta}$, taken with periodic boundary conditions in both space and time direction. 

\begin{theorem}  
The marginal distribution of $ \widehat\mu^\textrm{per}_{L,\beta, \lambda} $ on orientations $ \tau $ coincides with the quantum expectation of the XXZ-model's tracial state, i.e.~for any finite collection of space-time points $ (u_j,t_j) $ which are ordered $ t_1 < t_2 < \dots < t_N $:
\begin{multline}\label{eq:XXZper}
 \frac{\tr\left( e^{-(\beta -t_N) H^{(L,\textrm{per})}_{\textrm{XXZ}}} P_{u_N}(\sigma_N) e^{- (t_N -t_{N-1}) H^{(L,\textrm{per})}_{\textrm{XXZ}}} \cdots P_{u_1}(\sigma_1)  \ e^{-t_1 H^{(L,\textrm{per})}_{\textrm{XXZ}}}   \right)}    
{\tr\left( e^{-\beta H^{(L,\textrm{per})}_{\textrm{XXZ}}} \right) } = \\ 
= \ \int  \ \prod_{j=1}^N 1\left[\tau(u_j,t_j) = \sigma_j \right]  \,  \widehat\mu^\textrm{per}_{L,\beta, \lambda}(d\vec\omega)     \, .
\end{multline}
where $ \sigma_j \in \{-1,1\} $ are prescribed spin values and $ P_u(\sigma) := 1 \left[ \tau^z_{u}  = \sigma\right]  $ stands for the projection operator onto states with $ \sigma $ as the $ z $-component of the spin at $ u $. 
\end{theorem}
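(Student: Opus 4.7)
The plan is to adapt the proof of Theorem 4.2 from the seeded/capped setting to the tracial/doubly-periodic setting, using Lemma 4.1 to replace $H^{(L,\textrm{per})}_{\textrm{XXZ}}$ by $-K^{(L,\textrm{per})}$. First, by Lemma 4.1 equation (4.6), each semigroup factor appearing in the numerator and denominator can be written as $e^{s K^{(L,\textrm{per})}}$, and up to a common scalar $e^{s|\mathcal{E}|}$ which cancels in the ratio, this is $e^{s \sum_b(K_{b}-1)}$. Hence the Poisson-process expansion (2.5) applies to each such factor; gluing the resulting rung configurations on $\Lambda_L \times [t_{j-1},t_j]$ into a single configuration $\omega$ on $\Lambda_{L,\beta}$ (and using that the semigroup is the product of those factors), I express both numerator and denominator as an integral over $\omega \sim \rho_{\Lambda_{L,\beta}}^{\mathrm{per}}$ of a trace of time-ordered operators, with insertions of $P_{u_j}(\sigma_j)$ at the prescribed times $t_j$.

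Next I evaluate that trace in the $z$-basis $\{|\pmb m\rangle\}$. Since each $K_{v,v+1}$ has only the four matrix elements listed in (4.2), and since the $P_{u_j}(\sigma_j)$ are diagonal in this basis, the trace becomes a sum over configurations $\tau:\Lambda_{L,\beta}\to\{\pm1\}$ which are piecewise constant in $t$, compatible at each rung with one of the four edge types of Fig.~\ref{fig:4edges}, satisfy $\tau(u_j,t_j)=\sigma_j$, and, crucially, are doubly periodic in both space (by (4.6)) and time (from the outer trace). Each rung contributes the weight $W_{\#(\tau,b)} \in \{1,1,e^{-\lambda},e^{\lambda}\}$, and this reproduces exactly the unnormalized 4-edge weight $\widetilde W_\lambda(\omega,\tau)$ from the proof of Theorem 4.2, now on the spacetime torus.

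Then I perform the same loop-rewriting as in Theorem 4.2: the consistency indicator $\mathbf 1[\omega,\tau]$ forces $\tau$ to be constant along each loop of $\omega$ up to the orientation dictated by the U-turns, and lumping the weights along each loop gives $e^{+\lambda}$ (resp.\ $e^{-\lambda}$) per counterclockwise (resp.\ clockwise) loop. Summing over $\tau$ subject to the constraints $\tau(u_j,t_j)=\sigma_j$ and marginalising, the distribution of $\omega$ becomes $\mu^{\mathrm{per}}_{L,\beta}(d\omega) \propto \sqrt Q^{N_\ell^{\mathrm{per}}(\omega)}\rho_{\Lambda_{L,\beta}}(d\omega)$, and conditionally on $\omega$ each loop is independently oriented $\pm$ with probability $e^{\pm\lambda}/(e^{\lambda}+e^{-\lambda})$; this is exactly the defining description of $\widehat\mu^{\mathrm{per}}_{L,\beta,\lambda}$, so the unconstrained identity $\int W \,d\rho / \int W\,d\rho$ from (4.10) produces the denominator, while the constraints $\tau(u_j,t_j)=\sigma_j$ yield the indicator product in the right-hand side of (4.12).

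The main subtlety, rather than an obstacle, lies in handling the loops that wind around the time direction: unlike in Theorem 4.2, where the seed vector $|N^{(L)}_\lambda\rangle$ distinguished boundary-touching loops but still gave the simple weight $e^\lambda+e^{-\lambda}$ per loop, here one must verify that the trace together with the periodic boundary condition at $t=\pm\beta/2$ genuinely gives the same weight for every loop (contractible, time-winding, space-winding, or doubly-winding), since any pair of allowed orientations of such a loop contributes $e^{+\lambda}+e^{-\lambda}$ to the trace. Once this is checked loop by loop, the identification of the loop measure is immediate and (4.12) follows.
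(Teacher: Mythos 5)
Your first two paragraphs reproduce the paper's (very short) proof: expand each propagator factor via the Poisson representation~\eqref{Poisson} after substituting $H^{(L,\textrm{per})}_{\textrm{XXZ}} = -K^{(L,\textrm{per})}$ from Lemma~\ref{lem:XXZeqv}, glue the rung configurations into one $\omega$ on the space--time torus, evaluate the trace in the $\tau^z$-basis to recover the four-edge weights with periodicity in both directions, and note that the inserted projections $P_{u_j}(\sigma_j)$ constrain $\tau(u_j,t_j)=\sigma_j$. That is exactly the argument the authors give, and once one recalls that $\widehat\mu^\textrm{per}_{L,\beta,\lambda}$ is \emph{defined} to be the periodic four-edge measure, the proof is already complete at that point.

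Your third paragraph, however, overreaches and contains a genuine error. You claim that, after summing over $\tau$, the marginal of $\omega$ becomes $\mu^{\textrm{per}}_{L,\beta}(d\omega)\propto\sqrt Q^{\,N^{\textrm{per}}_\ell(\omega)}\rho(d\omega)$ with loops independently oriented with probabilities $e^{\pm\lambda}/(e^\lambda+e^{-\lambda})$, and you present this as ``the defining description of $\widehat\mu^{\textrm{per}}_{L,\beta,\lambda}$.'' That is false on the torus. The rung weight decomposes as $e^{\lambda/2}$ per left U-turn and $e^{-\lambda/2}$ per right U-turn, so the weight of a loop with a given orientation is $e^{\lambda(n_L-n_R)/2}$. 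For a contractible loop the tangent winds once and $n_L-n_R=\pm2$, giving $e^{\pm\lambda}$ and hence $e^\lambda+e^{-\lambda}=\sqrt Q$ upon summing over orientations. But a non-contractible loop on the torus (time-winding, space-winding, or both) has tangent winding number zero, so $n_L=n_R$ and each orientation carries weight $1$; summing gives $2$, not $\sqrt Q$. Thus the $\omega$-marginal of $\widehat\mu^\textrm{per}_{L,\beta,\lambda}$ is \emph{not} $\mu^\textrm{per}_{L,\beta}$, and the conditional-on-$\omega$ orientation law is not uniform across homotopy classes of loops. The paper flags exactly this: ``in the pseudo spin representation \dots\ a distinction will appear between the weights of winding versus contractible loops.''

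Your fourth paragraph shows you sensed something was off, but the stated ``verification'' — that every loop, winding or not, contributes $e^\lambda+e^{-\lambda}$ upon summing its two orientations — is precisely the assertion that fails, so ``once this is checked loop by loop'' is not an available step. The fix is simple: delete paragraph three entirely. The theorem asserts agreement of the trace formula with the four-edge measure $\widehat\mu^\textrm{per}_{L,\beta,\lambda}$, and your first two paragraphs already establish that. No loop-level rewriting, and no identification of the $\omega$-marginal with $\mu^\textrm{per}_{L,\beta}$, is needed or correct here.
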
 
\begin{proof} 
The proof proceeds  by plugging the operator $  K^{(L,\textrm{per})}  $ from~\eqref{eq:H_K_per} into the loop representation~\eqref{Poisson} for 
each of the factors $ \exp[(t_j-t_{j-1} ) H^{(L,\textrm{per})}_{\textrm{XXZ}}]$ in the time-ordered product in the numerator.  The operator $ K^{(L,\textrm{per})} $ produces exactly the weights of the 4-edges model with spatially periodic boundary conditions. 
The projection operators $  P_{u_j}(\sigma_j)$ inserted behind each factor $\exp[(t_j-t_{j-1})  K^{(L,\textrm{per})}] $ 
fixes the spin-value to $ \sigma_j $ at the particular instance $(u_j,t_j)  $ in space-time. Evaluating the trace in the joint eigenbasis 
of $ \tau^z_u $ will enforce periodic boundary conditions of the oriented loops also in the time direction.
\end{proof}

Since the right-side in~\eqref{eq:XXZper} depends on $ \lambda $ only through the anisotropy parameter $ \cosh(\lambda) $ entering 
the periodic XXZ-Hamiltonian, the distribution of the pseudo-spins is easily seen to exhibit the following symmetry, which will play a crucial role in our proof of dimerization (Theorem~\ref{thm:main1}).
\begin{cor}\label{cor:symm}
Under $ \widehat\mu^\textrm{per}_{L,\beta,  \lambda} $, the marginal distribution  of  $ \tau $   is a symmetric function of~$ \lambda $. 
\end{cor}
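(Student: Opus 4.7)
The plan is to read the symmetry off directly from equation~\eqref{eq:XXZper} in the preceding theorem. That identity expresses every finite-dimensional marginal
$$ \widehat\mu^\textrm{per}_{L,\beta,\lambda}\bigl(\tau(u_j,t_j) = \sigma_j,\ j=1,\dots,N\bigr) $$
as a time-ordered quantum trace built from the semigroup $e^{-sH^{(L,\textrm{per})}_{\textrm{XXZ}}}$ and the $\lambda$-independent projections $P_u(\sigma) = 1[\tau^z_u = \sigma]$. Inspecting the defining formula~\eqref{XXZper}, the periodic XXZ Hamiltonian depends on $\lambda$ only through the scalar coefficient $\cosh(\lambda)$ multiplying $(1 - \tau^z_v\tau^z_{v+1})$, which is even in $\lambda$. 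Consequently, the entire right-hand side of~\eqref{eq:XXZper} is invariant under $\lambda \mapsto -\lambda$, and therefore so is the cylinder probability on the left. Since finite-dimensional cylinder sets determine the law of $\tau$ (the $\tau$-configurations are piecewise constant with almost surely finitely many jumps in $\Lambda_{L,\beta}$, so Kolmogorov consistency applies), this promotes to equality of the full marginal laws at $\lambda$ and $-\lambda$.

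I anticipate no real obstacle: the corollary is essentially a one-line consequence of the theorem once one notes the even dependence on~$\lambda$. The only point worth flagging is that the symmetry is subtle: the joint distribution $\widehat\mu^\textrm{per}_{L,\beta,\lambda}$ on $(\omega,\tau)$ is \emph{not} symmetric in $\lambda$, since conditionally on $\omega$ the loops receive independent orientations with biased weights $e^{\pm\lambda}/(e^\lambda + e^{-\lambda})$, which plainly break $\lambda \mapsto -\lambda$. Symmetry emerges only after marginalizing out $\omega$; the tilt $\sqrt{Q}^{N_\ell(\omega)}$ (with $\sqrt{Q} = e^\lambda + e^{-\lambda}$, also even in $\lambda$) and the orientation biases conspire, in the manner encoded by the quantum trace, to cancel the asymmetry. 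Using the quantum identity as a black box keeps this cancellation implicit, which is precisely why the argument is immediate.
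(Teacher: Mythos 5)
Your proposal is correct and matches the paper's own reasoning: the paper deduces the corollary in one sentence from~\eqref{eq:XXZper}, precisely by observing that the quantum trace depends on $\lambda$ only through $\cosh(\lambda)$ in $H^{(L,\textrm{per})}_{\textrm{XXZ}}$, hence is even in $\lambda$. Your additional remarks — that the joint law on $(\omega,\tau)$ is \emph{not} $\lambda$-even and that cylinder sets suffice to determine the marginal — are accurate and make the argument a bit more explicit than the paper's terse statement, but they are the same argument.
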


\subsection{Further symmetry considerations} 

As a preparatory step towards the proof of N\'eel order,  let us discuss the symmetries of the oriented loop's distribution.  We start by denoting three mappings on the space of functions $\tau(u,t)$ which are defined by
\begin{eqnarray}  
\mathcal S[\tau](u,t) &=\tau(u-1,t)   & \mbox{one-step shift}  \notag \\ 
F[\tau](u,t) &= -\tau(u,t)   & \mbox{spin flip} \\ 
R[\tau](u,t) &=  \tau(-u+1,-t)  &\mbox{space$\times$time reflection w.r.t. $(1/2,0)$} \notag  
\end{eqnarray}  
and extend the last two to a similarly defined action on the un-oriented edge configuration $\omega$. 
 
The following is a simple, but very helpful observation.  
 
\begin{theorem} 
For each finite $L$, $\beta$,  and $\lambda$, the above joint probability distribution of $(\omega,\tau)$    is invariant under  $ R\circ F$.  
Furthermore, in any accumulation point  of such measures (e.g.~limit $\L\to \infty$  with $L$ of a fixed parity)  which  is  invariant under the two-step shift ($\mathcal S^2$),  the magnetization  satisfies
 \be   \label{alt}
\langle \tau(u,0) \rangle_{(L,\beta)}    =  (-1)^u  M 
\ee
for some $M\in [-1,1]$.\end{theorem}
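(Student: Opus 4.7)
The first part is a direct symmetry check of the ingredients of the joint measure on $(\omega,\tau)$. The Poisson rung measure $\rho_{\Lambda_{L,\beta}}$ is invariant under $R$ (its intensity is uniform and $R$ is a measure-preserving bijection of $\Lambda_{L,\beta}$) and trivially unaffected by $F$, and the consistency indicator $\1[\omega,\tau]$ is preserved by both. The only nontrivial piece is the product of local weights from Fig.~\ref{fig:4edges}. A direct tabulation shows that the spin flip $F$ exchanges $a\leftrightarrow b$ and $c\leftrightarrow d$, while the $180^\circ$ rotation $R$ fixes each of $a,b$ individually but exchanges $c\leftrightarrow d$ (the simultaneous reversal of time and swap of the two columns of a rung carries the diagonal pattern of $c$ onto that of $d$). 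Composing, $R\circ F$ interchanges $a\leftrightarrow b$, both of weight $1$, while fixing $c$ and $d$ individually, so the bulk product of edge weights is unchanged.

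It remains to check that the boundary contributions from the seed respect the symmetry. At $t=\pm\beta/2$, $\ket{N^{(L)}_\lambda}$ contributes a product over the pairs $(-L+2j-1,-L+2j)$ of amplitudes $e^{-\lambda/2}$ on $\ket{+,-}$ and $e^{\lambda/2}$ on $\ket{-,+}$. Under $R$ the two temporal boundaries are interchanged and each seed pair is mapped to another such pair (with the two sites within it permuted); $F$ exchanges $(+,-)\leftrightarrow(-,+)$ inside each pair. A direct check confirms that $R\circ F$ carries a $(+,-)$ configuration at one boundary to a $(+,-)$ at the other, and similarly for $(-,+)$, with all amplitudes intact. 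Together with the bulk computation this yields $(R\circ F)$-invariance.

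The second assertion then follows by combining the two invariances. Set $m_u:=\langle\tau(u,0)\rangle$. Since $(R\circ F)\tau(u,0)=-\tau(-u+1,0)$, the $(R\circ F)$-invariance gives
\begin{equation}
m_u \,=\, -m_{-u+1}\, ,
\end{equation}
while $\mathcal{S}^2$-invariance gives $m_u=m_{u-2}$. The latter forces $m_u$ to depend only on the parity of $u$; since $0$ and $1$ have opposite parities, the former then imposes $m_0=-m_1$. Setting $M:=m_0\in[-1,1]$ yields $m_u=(-1)^u M$, as claimed.

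The main point requiring care is keeping straight the interplay between the algebraic definition $R\tau(u,t)=\tau(-u+1,-t)$, the geometric $180^\circ$ rotation of the loop diagram, and the orientation carried by $\tau$ along the loops. Although $R$ preserves the cyclic orientation of closed curves in the plane, its action on $\tau$ reverses the direction of travel along each loop, so $R$ alone sends $W_+$ to $W_-$, and only the pairing with $F$ (which likewise reverses loop orientation) restores the original weight. Once this correspondence and the left/right, above/below labelling at each rung are fixed, the tabulations above are routine and both parts of the theorem follow.
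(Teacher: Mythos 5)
Your proof is correct, and the second half (combining $m_u=-m_{1-u}$ with $m_u=m_{u-2}$ to get $m_u=(-1)^u M$) is essentially identical to the paper's argument. For the first half, the paper's proof is a one-line appeal to the global loop characterization (i)--(ii) of $\widehat\mu_{L,\beta,\lambda}$: the marginal $\mu_{L,\beta}$ on $\omega$ is manifestly $R$-invariant, while both $R$ and $F$ individually invert the loops' orientational preference (sending $\lambda\mapsto-\lambda$), so the composition $R\circ F$ preserves the joint law. You instead verify invariance directly in the local $4$-edge presentation, by tabulating how $R$ and $F$ permute the vertex types $a,b,c,d$ and checking that the product of weights and the seed amplitudes at the capping pairs $\{-L+2j-1,-L+2j\}$ are preserved. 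These are two sides of the same local/global correspondence already noted in the paper after Theorem~\ref{thm:FKXXZ}: the paper's argument is shorter and more conceptual, while yours makes the computation completely explicit (and in particular makes transparent the point you flag at the end, that $R$ alone reverses the direction of travel along a loop and only the pairing with $F$ restores the weight). Both are valid; your version is a useful elaboration of the paper's ``follows readily.''
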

To avoid confusion let us stress that 
\eqref{alt} does not yet establish the existence of N\'eel order.  For that, one needs to show that $M\neq 0$. 

\begin{proof}
The first statement follows readily from the above characterization  (i)-(ii) of the measure, as  under reflections the distribution of $\omega$ is invariant, but the loop's orientational preference is inverted.   

To prove the second statement we combine the above symmetry with the assumed two-step shift invariance. These imply
\be
\tau(2,0) \ = \ [ \mathcal S^2 \circ (R\circ F \tau)] (2,0)\  = \ (R\circ F \tau)] (0,0) = - \tau(1,0) \, .
\ee  
The full oscillation \eqref{alt} follows by another application of invariance under the double shift $\mathcal S^2$.  
\end{proof}

\section{The quantum loops system's critical percolation  structure}

\subsection{An FKG-type structure} 
The probability distribution  \eqref{tilt}  is reminiscent of the loop representation of the planar $Q$-state random-cluster  models. For details on the random-cluster model itself, we refer to the monograph \cite{Grim} and the lecture notes \cite{Dum17} (for recent developments). 

As in that case, it is relevant  to recognize here the presence  of a self-dual $A$/$B$-percolation model. 
To formulate it,  we partition any rectangle $\Lambda_{L,\beta} \subset \mathbb{Z} \times \mathbb{R} $ into a union of vertical columns of width $1$ over the edges of $ \Lambda_L $, labeled alternatively as $A$ and $B$,
\be 
A := \{(2n, 2n+1)\}_{n\in \Z}\,, \quad B :=\{(2n-1,2n)\}_{n\in \Z} \, , 
\ee 
with the column over $(0,1)$ marked as $A$.  
Rungs $ \omega $ are then distributed in the edge columns with respect to the probability measure $\mu_{L,\beta}$.

These rungs serve a dual role.  We interpret each  as  a cut in the column over which it lies and at the same time a bridge linking the two  domains which  are touched by its endpoints.  
\begin{figure}[h]
\begin{center}\begin{minipage}{6in}
  \centering
  \hspace*{.5in}
  $\vcenter{\hbox{\includegraphics[height=3 in]{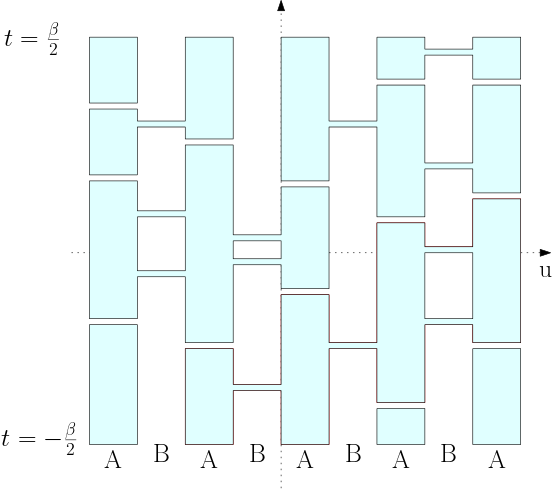}}}$
  \hspace*{.2in}
  $\vcenter{\hbox{\includegraphics[height=0.7 in]{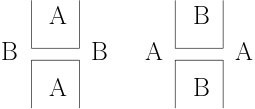}}}$
\end{minipage}
\caption{ The rungs of $ \omega $ both connect and, in the dual sense, disconnect: those placed over the   $A$ strips  (shaded in this picture) decrease the $A$ connectivity; those over $B$ strips increase it, and vice versa.   Since the complement of $ \Lambda_{L,\beta} $ is $ B $-connected this picture corresponds to $ B $-wired boundary conditions.} \label{fig:ABpicture2}\end{center}
\end{figure}
To visualize the $A$- and $B$-connected components, also called $A$- and $B$-{\em clusters}, which result from this convention it is convenient to think of each rung as having a small (infinitesimal) width and being bounded by a pair of segments,  as is indicated in Fig.~\ref{fig:ABpicture2}.     

Thus,  associated  with each configuration $ \omega$ is a decomposition of 
$\Lambda_{L,\beta} $ into $A$-clusters and $B$-clusters, with $A$-clusters bounded by $B$-clusters, and vice versa.   In the topological sense  this percolation model is self dual.   Also, the probability distribution is symmetric, except possibly  for asymmetry introduced by boundary conditions. As is explained below, this implies that the percolation model is at its phase transition point.   The transition can be continuous, as is the case for independent percolation ($Q=1$), or discontinuous as in models with $Q$ large enough.  This distinction is tied in with the existence or not of symmetry breaking in the ground-states of the two quantum models discussed here.   

The similarity with the random-cluster measures led \cite{AN} to introduce a partial order ($ {\prec}$) on the space of rung configurations  in which  $A$-connection is monotone increasing and $B$-connection is monotone decreasing.  More explicitly, labeling the rungs as of $ A $- or $ B $-type: $\omega_1\ \prec\ \omega_2$ if the $A$-connections in $\omega_1$ are all holding in $\omega_2$. 
This notion is useful since the measures~$\mu_{L,\beta}$
satisfy the Fortuin-Kasteleyn-Ginibre (FKG) lattice condition  which enables powerful monotonicity arguments.  The FKG structure was used in the proof of the AN-dichotomy \cite{AN} stated in Proposition~\ref{thm:dichotomy}. Here, we will use the following facts. First, it implies  a FKG inequality stating, for every events $E$ and $F$ that are {\em increasing} (meaning that their indicator functions are increasing for $\prec$):
\be
\mu_{L,\beta}[E\cap F]\ge \mu_{L,\beta}[E]\mu_{L,\beta}[ F].
\ee
Another implication of the FKG lattice condition is the monotonicity in so-called boundary conditions. Here, the boundary conditions are imposed by the structure of the underlying graph $\Lambda_{L,\beta}$, so we wish to draw a comparison with the random-cluster model. The construction with rungs at the top and bottom capping the loops implies that when $L$ is odd (as in Figure~\ref{fig:ABpicture2}), the  complement of the box $\Lambda_{L,\beta}$ is treated as $B$-connected, while when $L$ is even it is $A$-connected. Borrowing the language of the random-cluster model, we see that our capping procedure used in the construction of $\mu_{L,\beta}$ can be understood as enforcing $B$-wired or $A$-wired boundary conditions depending on the parity of $L$. To stress the type of the boundary condition and to draw an even more direct link to the standard theory of random-cluster models, in this section we write $\mu_{L,\beta}^\#$ instead of $\mu_{L,\beta}$, with $\#=A$ if $L$ is even, and $\#=B$ if $L$ is odd.
 
Now, consider $L\ge\ell$ with $\ell$ even and $\beta\ge t$. The measure $\mu_{\ell,t}^A$ can be seen as the measure $\mu_{L,\beta}^\#$ (with $\#$ equal to $A$ or $B$ depending on $L$ even or odd, or equal to $\mathrm{per}$ if one wishes) in which we place the so-called \emph{$A$-cutter}, since either the points in $\Lambda_{\ell,t}$ were already $A$-connected within  $\Lambda_{\ell,t}$ or, in case their $A$-connection ran through the complement of  $\Lambda_{\ell,t}$, this will still be true due to the fact that the boundary conditions render this complement into a single $A$-cluster. 
The monotonicity in boundary conditions therefore implies that  for an increasing event $E$ depending on rungs in $\Lambda_{\ell,t}$ only,
\begin{equation}\label{eq:Adom}
 \mu_{L,\beta}^\#(E) \leq \mu_{\ell,t}^A(E) \,.
\end{equation}
Likewise, if $\ell$ is odd and one uses a \emph{$B$-cutter} to cut out a smaller box, one gets
\begin{equation}\label{eq:Bdom}
 \mu_{\ell,t}^B(E) \leq \mu_{L,\beta}^\#(E) . 
\end{equation}

\subsection{Results based on the percolation analysis} 

By the monotonicity in the domain, the above FKG structure implies the convergence of the extremal measures, i.e.\ along increasing sequences of $A$- or $B$-wired boundary conditions:
\begin{align}\label{eq:evenoddlimit}
& \mu^A := \lim_{\substack{L\to \infty \\ L \textrm{\ even}}} \lim_{\beta \to \infty}  \mu^A_{L,\beta}  \notag \\
& \mu^B :=  \lim_{\substack{L\to \infty \\ L \textrm{\ odd}}}   \lim_{\beta \to \infty}  \mu^B_{L,\beta} \, , 
\end{align}
in the weak sense of convergence of probability measures on the configuration spaces of rungs on $\mathbb{Z} \times \mathbb{R} $.

To present the full resolution of the question posed by the dichotomy, we start with the following preparatory statements.  

\begin{theorem}\label{thm:mainpercolation}    
For any $Q\geq 1$, and regardless of whether the infinite-volume loop measures $\mu^A$, $\mu^B$ coincide:
\begin{enumerate} 
 \item Each of these measures is supported on configurations with only closed loops, 
 i.e.~there are no infinite boundary lines.   
\item  The convergence extends to that of the joint distribution of $(\omega,\tau)$, i.e.\ of  the ordered loop lines. 
\item The limiting measures'   conditional distribution of $\tau$, conditioned on $\omega$, is given by the same rule as in finite volume: at given $\omega$ the loops are oriented independently of each other with probabilities $e^{\pm \lambda}/[e^\lambda + e^{-\lambda}] $, at $\lambda$ satisfying $ \sqrt{Q}= e^\lambda + e^{-\lambda}$,  with $(-)$ for clockwise and $(+)$ for counter-clockwise orientation.  
\end{enumerate} 
In case the measures coincide ($\mu^A = \mu^B$), then
\begin{enumerate} 
\setcounter{enumi}{3}
\item  the limiting state is supported on configurations in which there is no infinite $A$-clusters or $B$-clusters, and instead each point is surrounded by an infinite family of nested  loops;
\item  the  loop measures with the periodic boundary conditions in both temporal and spacial direction $  \mu_{L,\beta}^\textrm{per}  $ converge to the shared limit as $ L, \beta \to \infty $.
\end{enumerate} 

\end{theorem}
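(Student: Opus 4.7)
The plan is to establish the five items in order, leveraging the self-dual $A$/$B$-percolation structure of Section~\ref{sec:percolation}, the FKG inequality, the monotonicity in wired boundary conditions expressed by~\eqref{eq:Adom}--\eqref{eq:Bdom}, and the finite-volume description of the joint law of $(\omega,\tau)$ supplied by Theorem~\ref{thm:FKXXZ}.

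For Part~1, the starting observation is that in each finite-volume $\mu^A_{L,\beta}$ every boundary line is closed, since the $A$-wiring caps all lines reaching the sides of $\Lambda_{L,\beta}$. Hence in the limit an ``infinite boundary line'' can only arise as a loop whose diameter diverges with $L,\beta$. Such a line locally separates an $A$-face from a $B$-face; if it were unbounded it would force the simultaneous presence of unbounded $A$- and $B$-connected regions hugging it from opposite sides. I would rule this out via a Zhang-type argument, using the two-step translation invariance of the accumulation point $\mu^A$, the FKG inequality, and the $A/B$-symmetry of the model up to a one-step shift: in a large space--time rectangle one cannot simultaneously have an $A$-crossing between one pair of opposite sides and a $B$-crossing between the other pair. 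Once Part~1 is in hand, Parts~2 and~3 follow: any local observable of $(\omega,\tau)$ depends only on the finitely many loops meeting its support, each bounded by Part~1, and the finite-volume joint law given by Theorem~\ref{thm:FKXXZ} describes it via independent orientations with probabilities $e^{\pm\lambda}/\sqrt Q$. The local finiteness of loops transfers this description to the limit, yielding both the joint weak convergence and the claimed conditional distribution of~$\tau$.

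For Part~4, assume $\mu^A=\mu^B=:\mu^\infty$. The one-step lattice shift interchanges the $A$- and $B$-columns and, composed with the parity-dependent construction~\eqref{eq:evenoddlimit}, sends the sequence defining $\mu^A$ to the one defining $\mu^B$. Hence $\mu^\infty$ is invariant under the $A\leftrightarrow B$ swap combined with a one-step shift. In particular the probabilities of an infinite $A$-cluster and of an infinite $B$-cluster coincide under $\mu^\infty$, and by planarity they cannot both be positive, so both vanish. With every $A$- and $B$-cluster finite, each point lies inside the boundary loop of its containing cluster, and iterating the argument in the complement of that loop produces the stated infinite nested family of loops. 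For Part~5, FKG sandwiching yields, for any increasing event $E$ supported in a fixed window $\Lambda_{\ell,t}$ and all sufficiently large $L,\beta$,
\begin{equation*}
\mu^B_{\ell',t'}(E) \ \le\ \mu^{\mathrm{per}}_{L,\beta}(E) \ \le\ \mu^A_{\ell'',t''}(E)
\end{equation*}
with $\ell',t',\ell'',t'' \to \infty$ along suitable sequences; the hypothesis $\mu^A=\mu^B=\mu^\infty$ then forces $\mu^{\mathrm{per}}_{L,\beta}(E) \to \mu^\infty(E)$.

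The main obstacle I foresee is Part~1 (together with the closely related ``no infinite cluster'' input to Part~4): one must carry out a Zhang/Burton--Keane-style planar non-coexistence argument in the continuous-time Poisson rung setting. The FKG property and the domain monotonicity of Section~\ref{sec:percolation} provide the formal framework, but the genuine work lies in combining self-duality with a positive-probability crossing estimate for the tilted measure $\mu_{L,\beta}$, which is delicate precisely in the regime $\sqrt Q = e^\lambda+e^{-\lambda} > 2$ where symmetry breaking occurs; this is where the Duminil-Copin--Li--Manolescu and Ray--Spinka inputs cited in the introduction must ultimately be brought to bear.
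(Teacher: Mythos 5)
Your outline follows essentially the same route as the paper's: the heavy lifting is done by three preparatory statements that the paper proves in the same subsection --- ergodicity of $\mu^A,\mu^B$ under the two-step shift and time translations (Lemma~\ref{lem:ergodicity}), uniqueness of the infinite cluster \`a la Burton--Keane (Theorem~\ref{thm:uniqueness}), and the Zhang-type non-coexistence Lemma~\ref{lem:Zhang} asserting that all $B$-clusters are $\mu^A$-a.s.\ finite --- and then Parts~1--5 are quick consequences. Your reductions for Parts 2, 3, 4 (local finiteness of loops giving joint convergence and independent orientations; $A/B$-symmetry under $\mu^A=\mu^B$ killing the infinite cluster and producing nested loops) and your FKG sandwich of $\mu^{\mathrm{per}}_{L,\beta}$ between $B$-wired and $A$-wired measures for Part~5 all match the paper's argument. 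The cosmetic difference that you phrase Part~1 as ``an infinite boundary line forces coexistence of unbounded $A$ and $B$ regions'' while the paper reads it directly off ``no infinite $B$-cluster under $\mu^A$'' is immaterial: both rest on Lemma~\ref{lem:Zhang}.

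The one genuine misconception is in your closing paragraph. You assert that the Zhang/Burton--Keane step needs ``a positive-probability crossing estimate for the tilted measure, which is delicate precisely in the regime $\sqrt Q>2$,'' and that ``the Duminil-Copin--Li--Manolescu and Ray--Spinka inputs cited in the introduction must ultimately be brought to bear.'' Neither is true for this theorem. The Zhang argument used in Lemma~\ref{lem:Zhang} relies only on FKG (via the square-root trick), the insertion-tolerance/finite-energy property of the Poisson rung process, and the uniqueness Theorem~\ref{thm:uniqueness}; no RSW-type crossing bound enters, and the argument works uniformly for all $Q\ge 1$. Likewise the Burton--Keane proof of Theorem~\ref{thm:uniqueness} requires only ergodicity plus finite energy. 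The DLM and Ray--Spinka inputs are invoked elsewhere in the paper (to prove the exponential decay in Theorem~\ref{thm:main1}(2) and, respectively, to give the short proof of translation symmetry breaking via the height function), not for Theorem~\ref{thm:mainpercolation}, which is entirely soft.
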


The proof of this theorem will follow standard arguments in percolation theory that must still be adapted to the current context.

\subsection{Proofs}

We begin with three statements that will play important roles. The first one deals with ergodic properties of $\mu^A$ and $\mu^B$. Let $\mathcal{S}_x$ be a translation by $x\in \mathbb{Z}\times\mathbb R$. This translation induces a shift $\mathcal{S}_x\omega$ and $\mathcal{S}_xE$ of a configuration and an event.
Furthermore, an event $E$ is {\em invariant under translations} if for any $x\in \mathbb Z\times\mathbb R$, $\mathcal{S}_xE=E$. A measure $\mu$ is {\em invariant under translations} if $\mu[\mathcal{S}_xE]=\mu[E]$ for any event $E$ and any $x\in \mathbb Z\times\mathbb R$. The measure is {\em ergodic} if any event invariant under translation has probability 0 or 1.

\begin{lem}\label{lem:ergodicity}
The measures $\mu^A$ and $\mu^B$ are invariant under spacial translations by $ 2\mathbb{Z} $ and any time-translation. They are ergodic  separately with  respect to each of these sub-groups.
\end{lem}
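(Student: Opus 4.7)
\emph{Translation invariance.} My plan is to deduce this directly from the construction of $\mu^A$ and $\mu^B$ as limits of finite-volume measures which are themselves invariant under the relevant shifts up to boundary corrections. Two ingredients enter: first, the partition of $\Lambda_{L,\beta}$ into $A$- and $B$-columns is preserved by every spatial translation in $2\mathbb{Z}$, while the underlying Poisson rung intensity is invariant under all translations of $\mathbb{Z}\times\mathbb{R}$; second, the boundary capping of loops at $t=\pm\beta/2$ and at the spatial edges of $\Lambda_L$ only affects $\omega$ near the boundary and becomes irrelevant for any fixed cylinder event once $L,\beta$ are large enough. Consequently, for $x=(2k,s)$ and any cylinder event $E$, the quantity $\mu^A_{L,\beta}(\mathcal{S}_x E)$ coincides with the measure of $E$ under a translated finite box of the same type, and passing to the limit in~\eqref{eq:evenoddlimit} yields $\mu^A(\mathcal{S}_x E)=\mu^A(E)$; the argument for $\mu^B$ is identical.

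\emph{Ergodicity via mixing.} I would then upgrade translation invariance to ergodicity by proving the mixing statement
\[
\lim_{n\to\infty}\mu^A\!\left(E\cap\mathcal{S}_{nx}F\right)=\mu^A(E)\,\mu^A(F)
\]
for all increasing cylinder events $E,F$ and all non-trivial $x$ in either of the two sub-groups. The lower bound is immediate from the FKG inequality. For the upper bound, fix an even $\ell$ and a $t>0$ with $\supp(E)\subset\Lambda_{\ell,t}$, take $n$ large enough that $\mathcal{S}_{nx}\supp(F)$ lies outside $\Lambda_{\ell,t}$, and apply the conditional FKG bound
\[
\mu^A\!\left(E\,\big|\,\omega_{\Lambda_{\ell,t}^c}\right)\le\mu^A_{\ell,t}(E),
\]
which expresses that, among all possible boundary conditions, the $A$-wired one is most favorable to increasing events; it follows from the FKG lattice condition together with an $A$-cutter insertion of the type underlying~\eqref{eq:Adom}. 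Integrating against the indicator of $\mathcal{S}_{nx}F$, which is measurable with respect to $\omega_{\Lambda_{\ell,t}^c}$, and using the already proved translation invariance then gives
\[
\mu^A(E\cap\mathcal{S}_{nx}F)\le\mu^A_{\ell,t}(E)\,\mu^A(F),
\]
and sending $\ell,t\to\infty$ along an even subsequence so that $\mu^A_{\ell,t}(E)\downarrow\mu^A(E)$ closes the upper bound. Mixing on increasing cylinder events extends by complementation and monotone-class arguments to the whole product $\sigma$-algebra, and combined with translation invariance it forces any invariant event $E$ to satisfy $\mu^A(E)=\lim_n \mu^A(E\cap\mathcal{S}_{nx}E)=\mu^A(E)^2$, hence $\mu^A(E)\in\{0,1\}$.

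\emph{Main obstacle and the case of $\mu^B$.} The most delicate step will be rigorously justifying the conditional bound $\mu^A(E\,|\,\omega_{\Lambda_{\ell,t}^c})\le\mu^A_{\ell,t}(E)$ at the level of the limit measure, rather than only along the finite-volume sequence. The route I have in mind is to establish it first for each $\mu^A_{L,\beta}$ by formally inserting a wall of $A$-type rungs along $\partial\Lambda_{\ell,t}$ (which can only help increasing events, by the FKG lattice condition) and then to pass to the weak limit using the convergence on cylinder events in~\eqref{eq:evenoddlimit}. For $\mu^B$ the same scheme applies with $B$-cutters in place of $A$-cutters and inequality~\eqref{eq:Bdom} in place of~\eqref{eq:Adom}; it naturally produces mixing on \emph{decreasing} cylinder events, from which mixing on all cylinder events, and hence ergodicity, is recovered by complementation.
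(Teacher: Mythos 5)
Your proof is correct and follows essentially the same route as the paper: translation invariance via comparing the finite-volume measures on translated boxes (the paper formalizes this as a two-sided sandwich $\mu^A_{L+k,\beta+s}[E]\le\mu^A_{L,\beta}[\mathcal S_x E]\le\mu^A_{L-k,\beta-s}[E]$, which you leave implicit), and ergodicity reduced to mixing, with the lower bound from FKG and the upper bound from inserting an $A$-wired cutter to decouple $E$ from its far translate. The one cosmetic difference is that you phrase the decoupling as a conditional bound $\mu^A(E\mid\omega_{\Lambda_{\ell,t}^c})\le\mu^A_{\ell,t}(E)$ at the level of the infinite-volume measure and then backfill its justification by finite-volume approximation, whereas the paper works directly at finite volume with an $\varepsilon$-truncation and only invokes weak convergence at the end; this avoids any appeal to a domain Markov property for $\mu^A$ itself, but the two paths are equivalent in substance, and your remark that the $\mu^B$ case runs on decreasing events and then complements is a correct reading of the paper's ``the case of $\mu^B$ is similar.''
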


\begin{proof}
We will treat the case of $\mu^A$ only, as the case of $\mu^B$ is similar. By inclusion-exclusion, it is sufficient to consider  an increasing event $E$ depending on rungs in $ \Lambda_{\ell,t}$. Let $ L,k,\ell \in 2\mathbb{N} $ with $L\ge\ell+k$ and $ \beta \geq t + s $.  The comparison between boundary conditions implies that for $ x = (k,s) $,
$$\mu^A_{L+k,\beta+s}[E] \leq \mu^A_{L,\beta} [ \mathcal{S}_{x} E]\le \mu^A_{L-k,\beta-s}[E] .
$$ 
Letting $L,\beta$ tend to infinity implies the invariance under translations.

Any event can be approximated by events depending on rungs in $\Lambda_{\ell,t}$ for some $\ell,t$, hence the ergodicity follows from mixing, i.e.~from the property 
that for any events $E$ and $F$ depending on finite sets,
\begin{equation}\lim_{\substack{|x|\rightarrow \infty \\ x=(k,s) , \,  k \, \mbox{\small even}}}\mu^A[E\cap\mathcal{S}_x F]=\mu^A[E]\mu^A[F].\label{eq:mixing}\end{equation}
Observe that again by inclusion-exclusion, it is sufficient to prove the equivalent result for $E$ and $F$ increasing. 
Let us give ourselves these two increasing events $E$ and $F$ depending on rungs in $\Lambda_{\ell,t}$ only. The FKG inequality and the invariance under translations of $\mu^A$ imply that for sufficiently large $ x = (k,s) $ with $ k $ even:
$$\mu^A[E\cap\mathcal{S}_x F]\ge \mu^A[E]\mu^A[\mathcal{S}_xF]=\mu^A[E]\mu^A[F].$$
In the other direction, fix $\varepsilon>0$ and choose $L=L(\varepsilon)$ and $\beta=\beta(\varepsilon)$ so large than $\mu^{A}_{L,\beta}[E]\le \mu^A[E]+\varepsilon$ and $\mu^{A}_{L,\beta}[F]\le \mu^A[F]+\varepsilon$. If $x=(k,s)$ with $2\ell<k<L-\ell$, then $\Lambda_{\ell,t}$ and its translate by $x$ do not intersect. Thus, the FKG inequality enables to put a unique $A$-cluster in the complement of $\Lambda_{\ell,t}$ and $\mathcal{S}_x\Lambda_{\ell,t}$ disconnecting the two areas so that 
\begin{align*}\mu^A[E\cap\mathcal{S}_x F]\le \mu^{A}_{L,\beta}[E\cap\mathcal{S}_x F]+2\varepsilon\le \mu^{A}_{L,\beta}[E]\mu^{A}_{L,\beta}[F]+2\varepsilon\le \mu^A[E]\mu^A[F]+5\varepsilon.\end{align*}
The result therefore follows by taking $x$ to infinity, and then $\varepsilon$ to 0.
\end{proof}

The second statement is the following important theorem.

\begin{theorem}\label{thm:uniqueness}
For any $ Q \geq 1 $ and $\#\in \{ A, B \} $, one of the two following properties occur:
\begin{itemize}
\item $\mu^\#[(1/2,0)\text{ is $A$-connected to infinity}]=0$ or
\item $\mu^\#[\exists\text{ a unique infinite $A$-cluster}]=1.$
\end{itemize}
\end{theorem}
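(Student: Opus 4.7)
The plan is to implement a Burton–Keane style argument adapted to the continuous-time tilted-Poisson setting. Three ingredients will be used: ergodicity under the translation sub-groups (Lemma~\ref{lem:ergodicity}), the FKG inequality, and a \emph{finite-energy} property which I will need to establish first. In this setting the relevant finite-energy statement is that for any bounded region $R \subset \Lambda_{L,\beta}$, the conditional law of $\omega|_R$ given $\omega|_{R^c}$ under $\mu_{L,\beta}^\#$ is mutually absolutely continuous with the Poisson measure $\rho_R$, with density uniformly bounded above and below on the event that $\omega|_R$ carries at most $M$ rungs. This is immediate from the tilted representation~\eqref{eq:tilted}: modifying $\omega|_R$ while keeping the total number of rungs in $R$ bounded by $M$ changes $N_\ell$ by at most a bounded amount (each added or removed rung shifts $N_\ell$ by $\pm 1$), so the tilt factor $\sqrt Q^{N_\ell(\omega)}$ changes by a bounded multiplicative factor. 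The property passes to the infinite-volume limits $\mu^A,\mu^B$ by Fatou and the weak convergence~\eqref{eq:evenoddlimit}.

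Let $N \in \{0,1,2,\dots,\infty\}$ denote the number of infinite $A$-clusters. The event $\{N = k\}$ is invariant under the translations listed in Lemma~\ref{lem:ergodicity}, so by ergodicity $N$ is $\mu^\#$-a.s.\ constant. If $N = 0$ a.s., the first alternative of the theorem clearly holds. Otherwise $N \geq 1$ a.s., and I have to rule out $N \geq 2$.

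Step one: rule out $2 \le N = k < \infty$. For $\ell,t$ large, the FKG inequality and ergodicity give that the event $E_{\ell,t}$ that each of the $k$ infinite $A$-clusters intersects $\Lambda_{\ell,t}$ has probability close to $1$. Conditioning on $\omega|_{\Lambda_{\ell,t}^c}$ on this event and invoking finite energy, I prescribe inside a slightly enlarged region $R \supset \Lambda_{\ell,t}$ a concrete rung pattern (an $A$-spanning path in $R$) that merges the $k$ incoming infinite arms into a single cluster; this prescription has positive conditional probability. On the resulting positive-probability event one has $N = 1$, contradicting $N = k$ a.s.

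Step two: rule out $N = \infty$. For a thickened cell $C_u = \{u\} \times I$ (with $I$ a small time interval around $0$), call $C_u$ a trifurcation if $C_u$ lies in an infinite $A$-cluster and removing the rungs incident to $C_u$ splits that cluster into at least three distinct infinite $A$-components. Assuming $N = \infty$ a.s., the finite-energy modification just described, applied now so as to funnel three disjoint infinite $A$-arms into the fixed cell $C_u$ and sever all other connections inside a box surrounding $C_u$, shows that $\mu^\#(C_u \text{ is a trifurcation}) > 0$. By translation invariance, the expected number of trifurcations along a large segment $\{-L+1,\dots,L\} \times I$ grows linearly in $L$. The standard Burton–Keane combinatorial lemma, however, identifies the trifurcations in such a segment with distinct endpoints of a forest embedded in $\Lambda_{L,t}$, whose number is bounded by the number of cells on the boundary; in the present quasi-one-dimensional geometry this is $O(1)$, a clear contradiction.

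Main obstacle: The delicate step is not the percolation combinatorics but the honest transfer of the finite-energy property to the tilted Poisson measure in the infinite-volume limit, and the precise choice of the thickened cell $C_u$ so that the Burton–Keane counting argument goes through in continuous time. Once the finite-energy statement is formulated as uniform bounds on conditional densities with respect to $\rho_R$, restricted to events of bounded rung count, the rest of the argument is a standard adaptation of the Gandolfi–Keane–Newman/Burton–Keane proof.
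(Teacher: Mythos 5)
Your overall strategy (finite energy plus Burton--Keane, with ergodicity from Lemma~\ref{lem:ergodicity}) matches the paper's, and Step~one, ruling out $2\le k<\infty$ infinite clusters, is essentially identical to the paper's treatment of $E_n$.  The finite-energy formulation via the tilted density $\sqrt Q^{N_\ell(\omega)}$ and bounded rung counts is also the right one.

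The problem is in Step~two.  You describe the geometry as ``quasi-one-dimensional'' and count trifurcations only along a thin spatial slab $\{-L+1,\dots,L\}\times I$ with $I$ fixed, claiming the boundary count is $O(1)$.  That is incorrect: the percolation model lives on $\mathbb Z\times\mathbb R$, which is genuinely two-dimensional, and the boundary of the slab includes the horizontal lines at the top and bottom of $I$, through which trifurcation branches are free to exit.  The boundary of such a slab therefore carries $O(L)$ distinct cluster ends, matching the $O(L)$ trifurcations inside it, and no contradiction results.  The Burton--Keane argument only produces a contradiction when the region grows in both directions at once, as in the paper: tiling $\Lambda_{L,T}=[0,L]\times[0,T]$ by boxes $B^\ell_{(n,m)}$, the trifurcation count scales like $LT/\ell^2$ while the number of infinite $A$-clusters crossing $\partial\Lambda_{L,T}$ is bounded by $2T\,\mathbb E^A[\mathsf N_1]+2L\,\mathbb E^A[\mathsf N_2]=O(L+T)$; sending $L$ and $T$ to infinity at comparable rates forces $\mu^A[\mathcal T^{(\ell)}_{0,0}]=0$.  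Also note that a local estimate is needed to guarantee $\mathbb E^A[\mathsf N_1],\mathbb E^A[\mathsf N_2]<\infty$ (distinct $A$-clusters meeting a unit cell require a crowding of rungs), which is not automatic in the continuous-time setting; your proposal omits this.  Replace your thin-slab count with a two-parameter box count and add the finiteness estimate, and the argument closes.
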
 
Such a result was first proved in \cite{AizKesNew87} for Bernoulli percolation, and was later obtained by other means.  For our needs we shall adapt the beautiful argument of {Burton and Keane} \cite{BurKea89}.  In the proof given below we give only its brief sketch, as its line of reasoning has by now been presented in  many contexts (e.g.~\cite{Dum17}).

\begin{proof} We present the proof for $\mu^A$, since the proof for $\mu^B$ is the same. Let $E_{\le 1}$, $E_n$  ($1<n<\infty$), and $E_{\ge 3}$ be the events that there are no more than one,  exactly $n$,   and finally at least $3$ (possibly infinitely many) infinite $A$-clusters, respectively.     A pair of different arguments will be used to show: i)  $\mu^A[E_n] =0$ for any $1<n<\infty$ (in particular $n=2$),  ii)   $\mu^A[E_{\ge 3}]=0$.  This  leaves: $\mu^A[E_{\le 1}]=1$.

 Assume  that  $\mu^A[E_{n}] >0$  for some $1<n<\infty$.  Then there exist 
$\ell$ and $t$ large enough so that $\mu^A[F_{\ell,t}]\ge\tfrac12\mu^A[E_{n}]>0$, where $F_{\ell,t}$ is the event that all the infinite $A$-clusters in $(\mathbb Z\times\mathbb R)\setminus \Lambda_{\ell,t}$ (if there are any) intersect $\Lambda_{\ell,t}$. 
The event $F_{\ell,t}$ is independent of the rungs in $\Lambda_{L,\beta}$ and conditioned on it there is a positive probability of the event $G_{\ell,t}$ that all the boundary vertices of $\Lambda_{L,\beta}$ are $A$-connected in $\Lambda_{L,\beta}$.  Hence
\be \label{n}
\mu^A[E_{\le 1}]\ge\mu^A[F_{\ell,t}\cap G_{\ell,t}]>0\,. 
\ee 
However, by the translation invariance of the event $E_{\le 1}$ and the ergodicity of the infinite volume probability distribution, $\mu^A[E_{\le 1}]$ can take only the values $0$ or $1$.   Therefore \eqref{n} implies 
that $\mu^A[E_{\le 1}]=1$, and thus $\mu^A[E_n=0]$, contradicting the assumption.

To prove that $\mu^A[E_{\ge 3}]=0$ we consider trifurcation events, along the lines of Burton-Keane.   A  trifurcation event $\mathcal T^{(\ell)}_{n,m}$, of scale $\ell$, is said to occur within the box $B^\ell_{(n,m)} = [n\ell,(n+1)\ell]\times [m\ell,(m+1)\ell] \subset \R^2$ if for the given $\omega$ there exists a  point within $B^\ell_{(n,m)}$  which is connected to infinity by three paths among which there is no connection outside the box. 

Assume $\mu^A[E_{\ge 3}]>0$.  Then for $\ell$ large enough with positive probability (which tends to $\mu^A[E_{\ge 3}]$ for $\ell \to \infty$)   the box $B_\ell=[0,\ell]\times [0,\ell]$ intersects three distinct infinite clusters.  

It is easy to see that for any exterior configuration (with locally finite edge set)   for which this condition is met,  
 there exists a non-empty \emph{open} set of interior configurations  for which there is a trifurcation within $B_\ell$.   Since the conditional probability of any non-empty open  set of configurations   is strictly positive (our analog of BK's ``finite energy'' condition)  one may conclude that  for large enough $\ell$ 
\be \label{positivity}
\mu^A[\mathcal T^{(\ell)}_{0,0}]\ \geq \  C_\ell \, \,  \mu^A[E_{\ge 3}]
\ee
with $C_\ell >0 $  (by further inspection that extends to all  $\ell \geq 1$, but  this refinement is not necessary). 

By translation invariance the mean number of trifurcation events of scale $\ell$ which occur within the 
finite region $ \Lambda_{L,T} = [0,L]\times [0,T]$  in translates of $B_\ell$ by $(n,m)\in (2\ell \mathbb Z)^2 $   increases in proportion to the volume:  
\be \label{volume} 
\mathbb{E}^A(N) \  \geq \ \mu^A[\mathcal T^{(\ell)}_{0,0}] \, {L \cdot T }/  (4 \ell^2)\,. 
\ee

The  Burton-Kean argument is to combine this with the observation that in any configuration with  
$N$ such  trifurcation events within $\Lambda_{L,T}$ there need to be  at least $N$  distinct infinite $A$-clusters intersecting the boundary of that set.   However, this number cannot grow faster than the boundary.  More explicitly  
\be  \label{boundary}
\mathbb{E}^A(N)  \ \leq \ 2\, T \, \mathbb{E}^A[\mathsf N_1]   \ +\  2\, L \,  \mathbb{E}^A[\mathsf N_2]   \,,     
\ee
where  
$\mathsf N_1$ is the number of distinct $A$-clusters of the half space $(-\infty, 0] \times \mathbb R$ reaching  $[0,1]\times [0,1]$,  $\mathsf N_2$ is the number of distinct $A$-clusters of the half space 
$\mathbb R\times (-\infty,0)$  reaching  that set, and  $\mathbb{E}^A(-)$ denotes the $\mu^A$ expectation value.

By  elementary (local) estimates $\mu^A[\mathsf N_j]<\infty$, for both $j=1,2$ (as  distinct $A$-clusters require the crowding of separation events).   Combining \eqref{volume} with \eqref{boundary}, and letting $L$ and $T$ tend to infinity at comparable speeds one  learns that for all  $\ell < \infty$
\be \label{tau}
\mu^A[\mathcal T^{(\ell)}_{0,0}]\ =\ 0\,.   
\ee
That is in contradiction with \eqref{positivity} (derived under the assumption that  $\mu^A[E_{\ge 3}]>0$), which proves our claim. 
\end{proof}

We shall also use the following statement.
 \begin{lem}\label{lem:Zhang}
All $B$-clusters are finite $\mu^A$-almost surely.
\end{lem}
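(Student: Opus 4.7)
The plan is to adapt Zhang's classical percolation argument, leveraging the $A\leftrightarrow B$ duality afforded by the one-step spatial shift and the Burton--Keane uniqueness already established in Theorem~\ref{thm:uniqueness}.

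Suppose, for contradiction, that $\mu^A\bigl[\exists\text{ infinite }B\text{-cluster}\bigr]>0$. The shift $\mathcal{S}_{(1,0)}$ interchanges the $A$- and $B$-column labels and maps the $A$-wired boundary condition to the $B$-wired one, so (passing to the infinite-volume limit) $\mathcal{S}_{(1,0)}\mu^A=\mu^B$. Combining this duality with the FKG stochastic comparison deduced from~\eqref{eq:Adom}--\eqref{eq:Bdom}, and using that $\{\exists\text{ infinite }A\text{-cluster}\}$ is $\prec$-increasing, one obtains
\begin{equation*}
\mu^A\bigl[\exists\text{ inf.\ }A\text{-cluster}\bigr]\ \geq\ \mu^B\bigl[\exists\text{ inf.\ }A\text{-cluster}\bigr]\ =\ \mu^A\bigl[\exists\text{ inf.\ }B\text{-cluster}\bigr]\ >\ 0.
\end{equation*}
Ergodicity (Lemma~\ref{lem:ergodicity}) upgrades both probabilities to $1$, and Theorem~\ref{thm:uniqueness}, applied symmetrically to the $A$- and $B$-connectivities, then ensures that both the infinite $A$-cluster and the infinite $B$-cluster are $\mu^A$-a.s.\ unique.

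For the contradiction, I follow Zhang's scheme. Fix a rectangle $R_{N,T}=\{-N,\dots,N\}\times[-T,T]$ and, for each of its four sides $\mathrm{T},\mathrm{B},\mathrm{L},\mathrm{R}$, let $E^A_{\mathrm{T}},E^A_{\mathrm{B}},E^A_{\mathrm{L}},E^A_{\mathrm{R}}$ (resp.\ $E^B_{\mathrm{T}},\ldots,E^B_{\mathrm{R}}$) denote the events that the corresponding side is connected to the unique infinite $A$-cluster (resp.\ $B$-cluster) by a path lying outside $R_{N,T}$. Uniqueness and translation invariance imply $\mu^A[E^A_{\mathrm{T}}\cup E^A_{\mathrm{B}}\cup E^A_{\mathrm{L}}\cup E^A_{\mathrm{R}}]\to 1$ and, similarly, $\mu^A[E^B_{\mathrm{T}}\cup\cdots\cup E^B_{\mathrm{R}}]\to 1$ as $N,T\to\infty$. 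Applying the FKG inequality to the decreasing complements, together with the time- and space-reflection symmetries of $\mu^A$, I would choose a subsequence $(N_k,T_k)$ along which the aspect ratio $T/N$ is tuned so that each of these eight events has probability tending to $1$. A final FKG step then yields
\begin{equation*}
\mu^A\bigl[E^A_{\mathrm{T}}\cap E^A_{\mathrm{B}}\cap E^B_{\mathrm{L}}\cap E^B_{\mathrm{R}}\bigr]\ >\ 0.
\end{equation*}
On this event, cluster-uniqueness produces an $A$-arc outside $R_{N_k,T_k}$ joining its top to its bottom side and, simultaneously, a $B$-arc outside $R_{N_k,T_k}$ joining its left to its right side. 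These two arcs must cross in the planar embedding of $\mathbb{Z}\times\mathbb{R}$, yet $A$- and $B$-clusters are interior-disjoint (they occupy different columns, and rungs respect the column alternation), yielding the desired contradiction.

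I expect the main obstacle to be the balancing step. Unlike on $\mathbb{Z}^2$, our model has no $\pi/2$ rotational symmetry relating the space and time directions; only the separate reflection symmetries are available. The substitute is a continuity-in-$T/N$ argument that equates the probabilities of ``top-connects'' and ``left-connects'' events along a subsequence, in the spirit of Zhang's adaptation to lattices without full square symmetry.
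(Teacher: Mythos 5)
Your proposal follows the same broad Zhang-type strategy as the paper's proof (high-probability arm events on the four sides of a large space-time rectangle combined with uniqueness of infinite clusters), but the final step contains a genuine gap.

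The issue is the claim that ``cluster-uniqueness produces an $A$-arc \emph{outside} $R_{N_k,T_k}$ joining its top to its bottom side.'' Uniqueness only tells you that the top arm and bottom arm belong to the same infinite $A$-cluster; it does \emph{not} give a connecting $A$-path that stays outside the box. The connecting path may well run through $R_{N_k,T_k}$, in which case the bi-infinite $A$-path need not separate the exterior into ``left'' and ``right'' components, and the $B$-arc from left to right can coexist with it without crossing. (Concretely: the $A$-path could leave the top of the box, loop around the outside of the box, and re-enter at the bottom, leaving the left and right sides in the same component of its complement.) The paper closes precisely this hole by invoking the \emph{finite-energy} property: conditionally on the four arm events, one resamples the rungs inside $\Lambda_{L,\beta}$ with positive probability so that no $A$-connection traverses the interior (equivalently, the interior is all $B$); then the top arm, the all-$B$ interior, and the bottom arm form a single $B$-set that blocks, producing two disjoint infinite $A$-clusters and contradicting Theorem~\ref{thm:uniqueness}. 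Without this modification step, or some substitute for it, your topological crossing argument does not go through. A secondary, more cosmetic issue: your ``final FKG step'' to obtain $\mu^A[E^A_{\mathrm{T}}\cap E^A_{\mathrm{B}}\cap E^B_{\mathrm{L}}\cap E^B_{\mathrm{R}}]>0$ cannot be FKG, since those four events are not all monotone in the same direction; the paper correctly uses a union bound there, after the square-root trick has pushed each individual arm probability close to~$1$.

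On the positive side, you correctly identify the balancing issue caused by the lack of $\pi/2$ rotational symmetry (the paper handles it by choosing $L_0,\beta_0$ large and then tuning one of them so that all four arm probabilities coincide), and your use of the shift-duality $\mathcal{S}_{(1,0)}\mu^A=\mu^B$ together with FKG stochastic domination to bootstrap $B$-percolation into $A$-percolation is a valid alternative to the paper's one-line footnote appeal to domination. But to complete the proof you should replace the direct topological crossing by the finite-energy modification inside the box, as the paper does.
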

By duality, $\mu^B$-almost surely all the $A$-clusters are finite almost surely. 
\begin{proof}
Assume by contradiction that $\mu^A[\exists\text{ infinite $B$-cluster}]=1$ and fix $L,\beta$ so that \begin{equation}\label{eq:h1}
\mu^A[\Lambda_{L,\beta}\text{ is $B$-connected to infinity}]\ge 1-\tfrac{1}{10^4}
\end{equation} 
and the probability that the top, bottom, left and right of the boundary of $\Lambda_{L,\beta}$ are $B$-connected to infinity in the complement of $\Lambda_{L,\beta}$ are the same (simply fix $L_0,\beta_0$ large enough to get \eqref{eq:h1}, and then increase $L_0$ and/or $\beta_0$ in order to obtain $L,\beta$).

Since a path from infinity to $\Lambda_{L,\beta}$  ends up either on the top, bottom, left or right of it, the FKG inequality implies (through the square-root trick\footnote{The square-root trick refers to the $k=2$ case fo the observation  that for every increasing events $A_1,\dots,A_k$, the FKG inequality implies that 
 $\max\{\mu[A_i] | 1\le i\le k\}\ge 1-\mu[\left(A_1\cup\dots\cup A_k\right)^c]^{1/k}$. 
This inequality is an improvement on the union bound, as it shows that if the union of $k$ increasing events has a probability close to 1, then this is also true for at least one of these events.}) 
that 
$$\mu^A[\text{top of $\Lambda_{L,\beta}$ $B$-connected to infinity outside $\Lambda_{L,\beta}$}]\ge 1-\tfrac1{10^{4/4}}=1-\tfrac1{10},$$
and similarly for the right, bottom and top.  
Now, assume that the top and bottom are $B$-connected to infinity, and the left and right (more precisely the $A$-lines on the left and right of the respective boundaries) are $A$-connected to infinity (the probability of the latter is larger than the probability that there exists a $B$-connection since under $\mu^A$ the $A$-clusters dominate the $B$-ones\footnote{By symmetry the $B$-clusters under~$\mu^A$ are distributed as the $A$-clusters under~$\mu^B$, which by FKG is dominated by~$\mu^A$.}). The union bound implies that this happens with probability $1-\tfrac4{10}>0$.
Yet, the finite-energy property also implies that conditionally on this event, the rungs in $\Lambda_{L,\beta}$ are such that no boundary vertex of $\Lambda_{L,\beta}$ are $A$-connected using paths in $\Lambda_{L,\beta}$, implying that there exist two infinite $A$-clusters with positive $\mu^A$-probability. But this contradicts the fact, proved in the previous statement, that there is zero or one infinite $A$-cluster.\end{proof}

We are now in a position to prove  this section's main result.
\begin{proof}[Proof of Theorem~\ref{thm:mainpercolation}]
Property 1 is a direct consequence of the fact that $\mu^A$ does not possess any infinite $B$-cluster. It also means that when fixing a finite set, and taking $L$ and $\beta$ large enough, no loop intersecting the finite set reaches the boundary of $\Lambda_{L,\beta}$ or winds around the vertical direction. By construction, we deduce that all these loops are oriented in an independent fashion described in the previous section. As a consequence, Properties 2 and 3 follow trivially. Property 4 is a direct consequence of $\mu^A=\mu^B$, so that the distribution of the $A$- and $B$-clusters is the same under $\mu^A$. In particular, there is no infinite $A$- or $B$-cluster, which immediately implies Property 4. Finally,  if the two measures are equal,  $\mu^{\mathrm{per}}_{L,\beta}$ stochastically dominates $\mu^B_{L,\beta}$ and is stochastically dominated by $\mu^A_{L,\beta}$. Since these measures converge to the same measure $\mu^A=\mu^B$, so does $\mu^\mathrm{per}_{L,\beta}$.
\end{proof}

\section{Proofs of symmetry breaking}\label{sec:percolation}

We now have the tools  for a structural proof of the different forms of symmetry breaking in the  models considered here.  We start with the translation symmetry breaking in the limiting distribution of the  random loop measure  for all $Q>4$.   This is then used to conclude dimerization in the ground-states of the $H_\textrm{AF}$ spin chains with $S>1/2$, and N\'eel order in the ground-states of spin $1/2$ XXZ-chain  at $\Delta >1$.  These results  can be obtained through the rigorous analysis of the Bethe ansatz along the lines of Duminil et.al.~\cite{DGHMT}, which also yields more quantitative information.  However, for a shorter and somewhat more transparent proof we present 
an analog of the argument which was recently developed  by Ray and Spinka~\cite{RS} in the context of the $6$-vertex/$Q$-state random-cluster model on $\Z^2$.

\subsection{Translation symmetry breaking for the loop measure at $Q>4$}

As in \cite{GlazPel19,RS} we shall make an essential use of a random height function $h: (\R\setminus \Z) \times \R \mapsto \Z$,  which in our case is assigned the configurations of $\vec\omega = (\omega,\tau) $.   The function is piecewise constant with discontinuities at lines supporting the loops of $\omega$.  
Along the horizontal line $t=0$ it is defined by:  
\be
h_{ \vec\omega }(-1/2,0) := 0 , \qquad 
h_{ \vec\omega }(u,0) :=\begin{cases} 
 \displaystyle \sum_{n\in (-1/2, u)\cap \Z} \tau(n,0) & \mbox{if $u \geq 0$,} \\  \displaystyle \sum_{n\in (u,-1/2)\cap \Z} -\tau(n,0)  &  \mbox{if $u < -1 $} \, .
\end{cases} 
\ee 
More generally, the value of $h_{ \vec\omega }(u,t)$ at any point  off the  loop lines of $\omega$ 
is the sum of the fluxes of~$\tau$ across an arbitrary simple path from $(1/2,0)$  to $(u,t)$,  counted with the sign of the cross product of the direction of $ \tau $ with the curve's tangent at the point of crossing.  In this description of the height function one may restrict the attention to paths which avoid rungs, i.e. which crosses loops only at vertical boundaries.   (To avoid misunderstanding, let us add  that  when the  crossing occurs at horizontal rungs of $\omega$ the $h$-function  increases by $0$ or $\pm 2$, depending on the orientations of the two loops which cross the rung and the direction of crossing.)   An example of this construction is presented  in Fig.~\ref{fig:Height}.

\begin{figure}[h]
\begin{center}
\includegraphics[height=2.5 in]{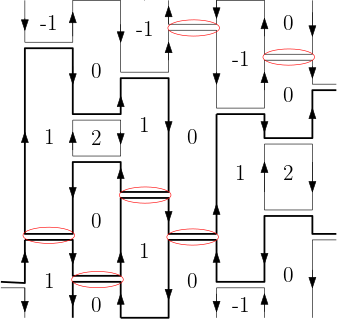}
\caption{An example of the height function in relation to a specified configuration of rungs and loop orientations (indicated by the arrows).  One may note that from the discontinuities of  the pseudo spin $\tau$ one can deduce the presence of some of the loops of $ \omega$.  However, $\omega$ also includes other rungs (marked here in small ovals) whose presence is transparent to $\tau$.  Their knowledge is essential for the full reconstruction of the loops of $\omega$, but not  for the height function.  An implied combination of properties of the height function (Lemma~\ref{lem:contradiction}) plays an essential role in the analysis of the case $Q>4$.}\label{fig:Height}\end{center}
\end{figure}

As a random function,  $h(x,t)$ exhibits a surprising combination of properties: 
\begin{enumerate} 
\item  the function's statistical distribution is simplest to present in the language of the loops, whose structure  depends on all the loops of $\omega$ (i.e., the erasure of  any rung does change the loop structure).
\item  
yet, for any $(\omega,\tau)$ the values of  $h_{ \vec\omega }(u,t)$ at generic points  can be read from just the pseudo spin function $\tau$ (i.e. it does not require the full knowledge of $\omega$). 
\end{enumerate} 

More explicitly, we have the following auxiliary statement.  
What makes it particularly astounding is the combination of the above  with the third assertion listed below.  


\begin{lem}   \label{lem:contradiction} For  any $Q>4$, let $ \widehat\mu$  be a probability measure on the systems of oriented loops   over $\Z\times \R$, described by the variables $  \vec\omega = (\omega, \tau) $, with the properties: 
\begin{enumerate} 
\item $ \widehat\mu$ almost surely the loop configurations corresponding to $\omega$ consist of only finite loops.
\item conditioned on $\omega$ the loops are oriented independently of each other clockwise ($-$) or counterclockwise $(+)$, which the probabilities $e^{\pm\lambda}/[e^{-\lambda} +e^{+\lambda}]$, correspondingly, 
\item the marginal distribution of $\tau$ variables is invariant under a global reversal of orientation ($ \lambda\mapsto -\lambda$). 
\end{enumerate} 
Then the event 
\begin{equation}
\mathcal{N} := \left\{ \mbox{ $ (1/2,0) $  is encircled by an infinite number of  loops of $\omega$} \right\} 
\end{equation}
has zero measure.
\end{lem}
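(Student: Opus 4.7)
The plan is to translate properties 2 and 3 into a single-orientation identity and derive a contradiction. First I would introduce the auxiliary measure $\widehat\mu^{(-\lambda)}$, defined with the same $\omega$-marginal as $\widehat\mu$ but using orientation bias $-\lambda$ in place of $\lambda$ in property 2; by property 3, $\widehat\mu$ and $\widehat\mu^{(-\lambda)}$ then share the same marginal distribution on $\tau$.

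Next I would single out one loop orientation as a $\tau$-measurable random variable. Each loop of $\omega$ is a closed level line of the height function $h$, and distinct loops give distinct level lines, so (modulo rungs transparent to $\tau$, cf.\ Fig.~\ref{fig:Height}) the encircling loops of $\omega$ are in bijection with the closed level lines of $h$ encircling $(1/2,0)$. In particular the event
\[
\mathcal{N}_1 := \left\{ (1/2, 0) \text{ is encircled by at least one loop of } \omega \right\},
\]
and on $\mathcal{N}_1$ the sign $\hat{\sigma}^* \in \{-1,+1\}$ of the jump of $h$ across the innermost encircling level line, are both $\tau$-measurable. By property 2, conditionally on $\omega \in \mathcal{N}_1$, the variable $\hat{\sigma}^*$ equals (up to a fixed sign convention) the orientation of the innermost encircling loop of $\omega$, which is Bernoulli with parameter $p_+ := e^\lambda/(e^\lambda + e^{-\lambda})$ under $\widehat\mu$ and $p_- := e^{-\lambda}/(e^\lambda + e^{-\lambda})$ under $\widehat\mu^{(-\lambda)}$. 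Integrating over $\omega$, whose marginal is common, yields
\[
\widehat\mu\bigl(\hat{\sigma}^* = +1,\, \mathcal{N}_1\bigr) = p_+\, \widehat\mu(\mathcal{N}_1)
\quad\text{and}\quad
\widehat\mu^{(-\lambda)}\bigl(\hat{\sigma}^* = +1,\, \mathcal{N}_1\bigr) = p_-\, \widehat\mu(\mathcal{N}_1).
\]
Since $(\hat{\sigma}^*, \mathcal{N}_1)$ is $\tau$-measurable, property 3 forces these two quantities to coincide, giving $(p_+ - p_-)\widehat\mu(\mathcal{N}_1) = 0$. Because $Q > 4$ entails $\lambda > 0$ and $p_+ > p_-$, one concludes $\widehat\mu(\mathcal{N}_1) = 0$, and a fortiori $\widehat\mu(\mathcal{N}) \leq \widehat\mu(\mathcal{N}_1) = 0$ since $\mathcal{N} \subset \mathcal{N}_1$.

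The main obstacle will be the careful identification in the previous step: establishing that the innermost encircling closed level line of $h$ indeed corresponds to the innermost encircling loop of $\omega$, and that the sign of the $h$-jump across it encodes precisely that loop's orientation, even in the presence of rungs transparent to $\tau$. This bookkeeping is ultimately where property 1 (finiteness of loops, so that they are simple closed curves and the innermost encircling one is well defined) enters; once it is carried through, the rest of the argument is a brief application of properties 2 and 3.
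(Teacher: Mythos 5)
Your argument relies on two identifications that do not hold. First, you treat $\mathcal{N}_1$ as $\tau$-measurable. Second, you identify the sign $\hat\sigma^*$ of the $h$-jump across the innermost encircling level line (a $\tau$-measurable quantity) with the orientation of the innermost encircling loop of $\omega$ (the $\omega$-determined random variable whose conditional law you invoke via property~2). Neither identification is valid, and property~1 does not rescue them; the obstruction is precisely the one flagged in the caption of Fig.~\ref{fig:Height}. Rungs of types c.\ and d.\ in Fig.~\ref{fig:4edges} leave $\tau$ unchanged on both columns, so they are invisible to $\tau$ and hence to $h$. A small rectangular loop encircling $(1/2,0)$ both of whose rungs are of this transparent type is therefore undetectable from $\tau$: if its orientation happens to agree with the $\tau$-signs on the two columns it borrows, one can erase its two rungs without changing $\tau$ anywhere, while the event $\mathcal{N}_1$ flips from true to false. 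Thus $\mathcal{N}_1$ is not a function of $\tau$, and the two sides of your displayed identity refer to random variables on different $\sigma$-algebras; the cancellation $(p_+-p_-)\widehat\mu(\mathcal{N}_1)=0$ does not follow from property~3.

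The paper's proof is built specifically to avoid this problem: it never tries to read a single loop's orientation off $\tau$. Instead it tracks the $\tau$-measurable quantity $\lim_{k} h_{\vec\omega}(\alpha_k)$ along outgoing points on encircling level sets. On $\mathcal{N}$ the values $h_{\vec\omega}(\alpha_k)$ are partial sums of the i.i.d.\ orientations of the infinitely many nested encircling loops, so by the strong law of large numbers the limit is $\sgn(\lambda)\cdot\infty$ almost surely — a quantity whose sign flips under $\lambda\mapsto-\lambda$ yet whose $\tau$-law must be invariant by property~3. The passage from your $\mathcal{N}_1$ to $\mathcal{N}$ is therefore essential, not cosmetic: the infinitude of encircling loops is what washes out the invisibility of individual transparent rungs, because a divergent cumulative sum is detectable in $h$ even when its individual increments are not. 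Finally, note that your stronger target $\widehat\mu(\mathcal{N}_1)=0$ is not what the lemma asserts and does not appear to be forced by the three hypotheses, which constrain only the $\tau$-marginal and say nothing to rule out finitely many encircling loops.
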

\begin{proof}  The proof is by contradiction.  Let $\alpha_k = (u_k,t_k)$ be a sequence of points with 
$|\alpha_k|  \nearrow    \infty$  each of which lies on a level set of $h_{ \vec\omega }$ which includes a path winding around $(-1/2,0)$.  
The values of $h(\alpha_k)$ are given by the sum of loop orientations over those loops of $\omega$ which separate  
$\alpha_k$ from $(-1/2,0)$.  For $Q>4$, these orientations are given by a sequence of iid $\pm1$ valued  random variables, of the non-zero mean $\tanh(\lambda)$.   It readily follows that almost surely the following limit exists and satisfies  
\be
\lim_{k\to \infty} h_{ \vec\omega }(\alpha_k) = \begin{cases} \sgn(\lambda) \, \infty & (\omega,\tau) \in \mathcal{N} \\ 
\mbox{a finite value} & (\omega,\tau) \notin \mathcal{N}
\end{cases}  \ .
\ee 
From this one may deduce that the probability distribution of  $\lim_{k\to \infty} h(\alpha_k) $ is: 
\be \label{eq:contra}
\lim_{k\to \infty} h_{ \vec\omega }(\alpha_k)  = \begin{cases} \sgn(\lambda) \, \infty & \mbox{with probability $\widehat\mu(\mathcal{N})$} \\  
\mbox{a finite value}  &   \mbox{with probability $[1- \widehat\mu(\mathcal{N})]$ }
\end{cases}\ .
\ee 
The level sets of  $h_{ \vec\omega }$ can be determined from just $\tau$ (i.e. ignoring $\omega$)  and, under the assumption  made here its distribution does not change under the flip $\lambda \to -\lambda$.  Hence \eqref{eq:contra} yields a contradiction unless   $\widehat\mu(\mathcal{N})=0$. 
 \end{proof} 

From this we shall now deduce three symmetry breaking statements.  The first concerned just the loop measure, but in the statement's proof we make use of the measure's significance for the two quantum spin models. 
\begin{theorem} \label{thm:A_B_breaking}
For all $Q >4$ (equivalently $S>1/2$) the loop measures corresponding to the even and odd ground-states  $\langle \cdot\rangle_{\text even} $ and $\langle \cdot\rangle_{\text odd} $ of \eqref{ev_odd}  differ.   
\end{theorem}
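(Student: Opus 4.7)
The plan is to argue by contradiction: suppose $\mu^A = \mu^B$ and denote by $\widehat\mu$ the common infinite-volume joint distribution on oriented loops $\vec\omega = (\omega,\tau)$. I would verify the three hypotheses of Lemma~\ref{lem:contradiction} for $\widehat\mu$ and simultaneously observe, via Theorem~\ref{thm:mainpercolation}(4), that under the same assumption the origin is almost surely encircled by an infinite nested family of loops, so $\widehat\mu(\mathcal N) = 1$. The lemma will then force $\widehat\mu(\mathcal N) = 0$, and this contradiction will yield $\mu^A \neq \mu^B$. The conclusion about the even and odd ground-states then follows from the identification of $\langle \cdot\rangle_{\textrm{even}}$ and $\langle \cdot\rangle_{\textrm{odd}}$ with the $A$- and $B$-wired loop measures, respectively.

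First I would note that the finiteness of all loops (hypothesis 1 of the lemma) and the conditional independent orientation of loops with probabilities $e^{\pm\lambda}/(e^\lambda + e^{-\lambda})$ (hypothesis 2) are already the content of Theorem~\ref{thm:mainpercolation}(1) and~(3). The delicate point is hypothesis 3, the invariance of the marginal law of $\tau$ under $\lambda \mapsto -\lambda$. For this I would invoke Theorem~\ref{thm:mainpercolation}(5): under the assumed equality $\mu^A = \mu^B$, the doubly periodic finite-volume measures $\widehat\mu^{\textrm{per}}_{L,\beta,\lambda}$ converge, as $L,\beta \to \infty$, to the same common measure $\widehat\mu$, so its marginal on $\tau$ is the weak limit of the corresponding periodic marginals. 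But Corollary~\ref{cor:symm} asserts that each of these finite-volume periodic marginals is a symmetric function of $\lambda$, and symmetry in $\lambda$ is preserved under the limit.

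With the three hypotheses of Lemma~\ref{lem:contradiction} in force, I obtain $\widehat\mu(\mathcal N) = 0$, contradicting $\widehat\mu(\mathcal N) = 1$. Hence $\mu^A \neq \mu^B$, proving the theorem.

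The main obstacle is exactly hypothesis 3: the symmetry under $\lambda \mapsto -\lambda$ is manifest only at the level of the doubly periodic measures, whereas the $A$- and $B$-wired measures individually break this symmetry through their boundary conditions. Transferring the symmetry to the shared infinite-volume limit therefore hinges on the fact that $\widehat\mu^{\textrm{per}}_{L,\beta,\lambda}$ is sandwiched between the $A$- and $B$-wired measures in the stochastic order and hence converges to their common limit whenever they coincide --- a step that requires the FKG-type comparison recorded in Theorem~\ref{thm:mainpercolation}(5) and crucially uses the assumption $\mu^A = \mu^B$ that we are trying to disprove. Once that stepping-stone is in place, the remainder of the argument is a short composition of the already-proven structural results.
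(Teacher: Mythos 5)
Your proof is correct and follows the paper's own argument essentially step for step: assume $\mu^A=\mu^B$, derive $\widehat\mu(\mathcal N)=1$ from Theorem~\ref{thm:mainpercolation}(4), transfer the $\lambda\mapsto-\lambda$ symmetry from the periodic finite-volume marginals (Corollary~\ref{cor:symm}) to the shared limit via Theorem~\ref{thm:mainpercolation}(5), and invoke Lemma~\ref{lem:contradiction} to force $\widehat\mu(\mathcal N)=0$. The only difference is that you spell out more explicitly which parts of Theorem~\ref{thm:mainpercolation} verify hypotheses (1) and (2) of the lemma, which the paper leaves implicit.
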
 
\begin{proof}  
Assume the two loop measures coincide.   Then by Theorem~5.1 (4) the event $\mathcal N$ occurs with probability $1$ with respect to the common probability measure. 

However, by Theorem~5.1 (5), under the above assumption these measures also describe the limiting distribution of 
$\omega$ corresponding to the ground-state of the periodic operator  $H^{(L,per)}_{XXZ}$ in  the limit $L\to \infty $, which for concreteness sake we take to be along $2\Z$ and with  $\lambda $ being the positive  solution of \eqref{lambda_Q}.
However,  as was noted in Corollary~\ref{cor:symm}  the periodic boundary condition state is actually an even function of $\lambda$.    It follows that the limiting state satisfied all the assumptions made in Lemma~\ref{lem:contradiction}, and hence the event  $\mathcal N$ is of probability zero.

The  contradiction between the two implications of the above assumption  implies that  the measures are distinct. 
\end{proof}

\subsection{Dimerization for $ S>1/2$ }

Next, we extract from the above probabilistic statement a proof of dimerization in the quantum $H_{AF}$ spin chain.

\begin{proof} [Proof of Theorem~\ref{thm:main1}]   
1.~From Theorem~\ref{thm:A_B_breaking} we already know that for any $S>1/2$ the loop measures associated with the states $ \mu^A \equiv \langle \cdot\rangle_{\text even} $ and $\mu^B \equiv  \langle \cdot\rangle_{\text odd} $  are different. 
Theorem  \ref{thm:uniqueness} allows to identify the difference in percolation terms: in both cases there is almost surely a unique infinite connected cluster, which is of type $A$ in one and $B$ in the other case.  More explicitly,  
\be 
\mu^A\left[  (u+1/2,0)  \leftrightarrow \infty  \right] = 
 \mu^B\left[  (u-1/2,0) \leftrightarrow \infty  \right]   = \begin{cases} 
0   & \mbox{ for $u$ odd} \\ 
p_\infty>0  & \mbox{ for $u$ even} 
\end{cases}  \ .
\ee  

However, that leaves still the challenge to determine whether this difference between the two measures and, in each, between the even and odd sites, can be detected in terms of  a  physical observable, i.e.~the expectation value of some function of the spin degrees of freedom.  
 
The  question was addressed in  \cite{AN} where it is shown (see also the next section for a similar reasoning) that: \\  
\indent i)  since the two limiting distributions of $\omega$ are related by the FKG inequality, if they differ then the difference is also manifested in the more elementary connectivity probability to lie on the same loop, and in particular for even sites $u=2n$:
\be \label{conn_ineq} 
\mu^A\left[  (2n,0) {\leftrightarrow}  (2n+1,0) ] \right]  \ < \   
\mu^A\left[ (2n-1,0) {\leftrightarrow}  (2n,0) ]  \right].
\ee 
\indent ii) combining \eqref{conn_ineq} with   \eqref{eq:proexp} one learns that
\be 
\langle  P_{2n,2n+1}^{(0)} \rangle_{\text even}  \ <  \ \langle   P_{2n-1,2n}^{(0)} \rangle_{\text even} 
\ee 
with the opposite inequality for the odd state.    Thus the  differences in the two states are detectable and extensive. \\

\noindent 
2.~Theorem~7.2 of \cite{AN} bounds the truncated quantum correlations from above in terms of the correlations 
\begin{equation}\label{eq:ha}
\min_{\# \in \{ A,B \} }\mu^\#[\mbox{$(\pmb{u},t) $ and $ (\pmb{v},0) $ belong to the same cluster}] 
\end{equation}
of the underlying loop measures for every $ \pmb{u} \in U \pm 1/2 $ and $ \pmb{v} \in V \pm 1/2 $. Now, the percolation model considered in Section 1.3 of \cite{DLM18} corresponds exactly to our model here.\footnote{One may be surprised by the fact that the Poisson point process there has intensity 1 and $q$ depending on the column. This comes from the fact that the Radon-Nikodym derivative is expressed in \cite{DLM18} as $q$ to the number of $A$-clusters, which can be shown, using Euler's formula, to be expressed in terms of $\sqrt q$ to the number of loops if one change the intensity of the Poisson point process to 1 in every column.} The fifth bullet of Theorem.~1.5 of \cite{DLM18} thus implies, with the notation of our paper, that there exists $c=c(q)>0$ such that for every $n\ge1$,
\begin{equation}
\mu^B[\text{$(\tfrac12,0) $ belongs to a $A$-cluster reaching $\partial\Lambda_{n,n}$}] \le \exp[-cn].
\end{equation}
Since for $ \pmb{u} \in U \pm 1/2 $ and $ \pmb{v} \in V \pm 1/2 $ to be connected to each other, there must exist a path from $\pmb{u}$ to the translate of $\partial\Lambda_{n,n}$ by $\pmb{u}$, where $n=\max\{|t|,|\pmb{u}-\pmb{v}|-1\}$, we deduce that all the quantities in \eqref{eq:ha} are smaller than $C\exp[-(\dist(U,V)+|t|)/\xi]$ for some small enough constant $\xi=\xi(q)>0$.
\end{proof} 


\subsection{N\'eel order  for  $\Delta=\cosh(\lambda)>1$}\label{App:Neel}
 
Turning to the ground-states of the XXZ-spin system, let us recall the notation.  Let $\lambda>0$ be the positive solution of \eqref{Delta_lambda}, and  denote by 
$\widehat\mu^{A}_{+ \lambda}  $ the even limit (cf. \eqref{eq:evenoddlimit}) of the measure  
on the enhanced system of the variables $ (\omega,\tau)$, in which the winding probabilities of loops of $\omega$ are $e^{\pm\lambda}/(e^\lambda + e^{-\lambda} ) $, with   $(+)$ winding corresponding to the counterclockwise and $(-)$ the clockwise orientation.   The corresponding state  on (just) the spin variables, is denoted $\langle  \tau(u,0)  \rangle_+$.   These are to be contrasted with $\widehat\mu^{A}_- $  and $\langle  \tau(u,0)  \rangle_-$.  The superscript may be omitted, but it should be remembered that  in both $+$ and $-$ cases, the limit $L\to \infty$ is taken over the even sequence.

 \begin{proof} [Proof of Theorem~\ref{thm:main2}]    
 As we saw,  the probability distribution of $\omega$ under the two measures $\widehat\mu^{A}_{\pm \lambda}  $ agree with that of the corresponding $H_{AF}$ system.  Hence, in both cases there is positive probability $  p_\infty>0  $ that  $  (1/2,0) $ belongs to an infinite connected cluster.  When that happens,  the sign of $\tau(0,0)$ coincides with the winding sign of the  loop which passes through $ (0,0)$.   Thus the spin $ \tau(0,0) $  takes the values $ \pm $ with probabilities $ e^{\pm\lambda}/(e^\lambda + e^{-\lambda} ) $.  The same holds true for any even site $u=2n$. Consequently,   for any $n\in \Z$:
\begin{align}   \label{SB_AF}
\int  \tau(2n,0) \cdot \1[  (2n+1/2,0)\leftrightarrow \infty ]\,  \widehat\mu^{A}_{+} (d\vec{\omega})   = & \  \tanh (\lambda)  \, p_\infty    \notag  \\ 
 =& \ -  \int  \tau(2n,0) \cdot \1[  (2n+1/2,0)\leftrightarrow \infty ]\,  \widehat\mu^{A}_{-} (d\vec{\omega})  \, .
\end{align} 
Since  $p_\infty > 0$, we deduce that $\widehat\mu^{A}_+ $ and $\widehat\mu^{A}_- $ are different.  

To that let us add the observation that percolation with respect to $\omega$ corresponds to percolation along a level set of the height function which is readable from $\tau$.  Hence the observable which distinguishes the two states is in principle a functional of the physically meaningful spin function. 
We conclude that  also the states  $\langle  \cdot  \rangle_+$  and $\langle  \cdot  \rangle_-$ are different, and hence the  infinite XXZ-spin system has at least two different ground-states.

A remaining challenge is to simplify the distinction between the two states, as was done in step (ii) of the above proof of Theorem~\ref{thm:main1}.   For the XXZ-model that can be deduced using the last statement in Proposition~\ref{prop:Neel}.  It 
allows to conclude  from \eqref{SB_AF}  (and the previously established fact that 
$p_\infty >0$) that for any $u\in \Z$
\be\label{mag_diff}
\langle  \tau(u,0)  \rangle_+ \neq \langle  \tau(u,0)  \rangle_{-}  \,.  
\ee  
This also implies the non-vanishing of  the N\'eel order parameter $M$ of \eqref{alt}. 
\end{proof}

In the last step, leading to \eqref{mag_diff},  we invoked an   ``FKG boost'' whose full discussion was postponed in order to streamline the presentation of the main results.  Following is its proof.     
\begin{proof}[Proof of Proposition~\ref{prop:Neel} ] 
The convergence of each of the two finite-volume ground-states in the limit $L \to \infty$, with $L$ limited to even values, is based on the FKG monotonicity of the percolation model discussed in the previous section, which is common to the two systems discussed here.  

It remains  to establish that for the  XXZ system at any $\Delta \geq 1$, the  states $\langle  \cdot  \rangle_+ $ and $\langle  \cdot  \rangle_-$ are either equal or of different magnetization, satisfying \eqref{mag_diff}.  
In the  proof we shall again employ the  FKG inequality but do so in a different setup than used above.  This time it will be in the context of an Ising-like representation of the distribution of the staggered spins
\be \label{kappa} 
\kappa(u,t) \ = \ (-1)^u \tau(u,t)  \, .  
\ee  
In terms of these variables the ground-states of XXZ-system take the form of an annealed Gibbs equilibrium state of a ferromagnetic Ising model, and the two boundary conditions correspond to $(+)$ and correspondingly $(-)$ fields applied along the ``vertical'' part of the boundary.    

To present  the ground-states in this form we start with  the following preparatory steps:  
\begin{enumerate} 
\item[1)] Rewrite the XXZ Hamiltonian of \eqref{mod_XX}, 
with the boundary term of \eqref{XXZ_bc}, 
as the spin-$1/2$ version of the $H_{AF}$ operator with an added anti-ferromagnetic coupling of strength $\delta= \Delta -1$.  I.e., for any even $L$:
  \begin{equation} \label{mod_XX2}
U_L   H^{(L, \pm)}_{\textrm{XXZ}}U_L ^* = 
\frac{1}{2} \sum_{v=-L+1}^{L-1}\,  
\left[    \pmb{\tau}_{v}  \cdot \pmb{\tau}_{v+1}   +  \pm  \delta \tau^z_{v}  \tau^z_{v+1}  - \Delta \cdot \1\right]   
+ \frac{ \sinh(\lambda) }{2} [   \tau_{-L+1}^z - \tau_L^z ]  
\,. 
\end{equation} 
\item[2)]  For each even $L$ and choice of the $\pm$ sign,  construct the  ground-state of this operator,  applying  the hybrid representation \eqref{Poisson3}, with the $\delta $ term treated as a potential ($V$) added to  the  $\Delta=1$ operator, and starting   
from the seed state $ | N_0^{(L)} \rangle $ of \eqref{N_vector}  with $ \lambda = 0 $.  
\end{enumerate} 

We are  particularly interested in operators which arise from  functionals of $ \tau $, or equivalently of $\kappa $ (the two being related by \eqref{kappa}).   Taking the limits indicated in \eqref{eq:101}, we get 
 \begin{eqnarray}
\langle F \rangle_{\pm} &=&  
\lim_{\substack{L\to \infty \\ L \textrm{ even}}}   \lim_{\beta\to \infty} \langle F \rangle_{L,\beta,0}^{(\textrm{XXZ},\pm)} 
  \end{eqnarray}
with the  Feynman-Kac type functional integral 
\begin{eqnarray}
\langle F \rangle_{L,\beta,0}^{(\textrm{XXZ},\pm)}   = \ \frac{ \int  \mathbb{E}_{L,\beta}^\pm\left[ F| \omega \right]   \, Z_{L,\beta}^\pm (\omega) \, \rho_{L,\beta}(d\omega)  }{\int  Z_{L,\beta}^\pm (\omega) \, \rho_{L,\beta}(d\omega)} 
 =:\int  \mathbb{E}_{L,\beta}^\pm\left[ F| \omega \right]   \, \mu_{L,\beta}(d\omega)  
\end{eqnarray}
with 
\begin{align}
Z_{L,\beta}^\pm (\omega) \ & := \sum_\kappa  \1[\omega,\kappa]  \  \exp\Big(  - \int_{-\beta/2}^{\beta/2} V_\pm(t) \, dt  \Big) \, , \notag \\  
 \mathbb{E}_{L,\beta}^\pm\left[ F | \omega \right]  \  & := \frac{1}{Z_{L,\beta}^\pm (\omega)} \sum_\tau  \1[\omega,\kappa]  \  \exp\Big(  - \int_{-\beta/2}^{\beta/2} V_\pm(t) \, dt  \Big) \ F[\kappa] \, , 
 \end{align}
 where, in a slight abuse of notation, we denote by $\1[\omega,\kappa]$ the consistency indicator function which is inherited from $\1[\omega,\tau]$ through the correspondence \eqref{kappa},  and the potential is
 \be
V_\pm(t)  \ := - \sum_{u=-L+1}^{L-1} \kappa(u,t) \ \kappa(u+1,t) \mp \frac{\sinh(\lambda)}{2}  \left[\kappa(-L+1,t)  +\kappa(L,t) \right] \, . 
\ee

It is now important to note that in terms of $\kappa$ the consistency condition translated into the constraint that $\kappa(u,t)$  is constant along each of the loops of $\omega$.   Thus, for a given $\omega$ the function $\kappa(u,t)$ is fully characterized by the collection of binary variables $\{ \kappa_\gamma \}$ indexed by the loops of $\omega$.  Furthermore, the potential   $V_\pm(t)$ is expressible as a ferromagnetic pair interaction among the values of this collection of observables.   

By the general FKG property of the ferromagnetic Ising measures, for each $\omega$ the $(+)$ and $(-)$ measures 
$ \mathbb{E}_{L,\beta}^\pm\left[ \cdot | \omega \right]$  admit a monotone coupling (this is referred to as Strassen's theorem). That is, there exists a joint probability distribution $ \hat \nu_\omega(\kappa_+,\kappa_-) $, with marginals given by the two measures $ \mathbb{E}_{L,\beta}^\pm\left[ \cdot | \omega \right]$  and which is supported on pairs of configurations satisfying 
\be
	\kappa_+(u,t) \geq \kappa_-(u,t)  \quad  \forall   (u,t)\in \Lambda_{L,\beta} \, . 
\ee
In terms of such a coupling, for any   functional $F[\kappa]$: 
\be
  \mathbb{E}_{L,\beta}^+\left[F[\kappa]\, | \omega \right]  -  \mathbb{E}_{L,\beta}^-\left[ F[\kappa] \,  | \omega \right] = 
  \sum_{\kappa_+,\kappa_-}  \left( F[\kappa_+] -  F[\kappa_-]\right) \hat\nu_\omega(\kappa_+,\kappa_-) \, . 
\ee
and for $F[\kappa] $ monotone non-decreasing the terms on the right are all non-negative.  

In studying the limit $L,\beta \to \infty$ it is convenient to measure the distance of the two induced measures within rectangular space-time domains  of the form $B_{K,T}=[-K,K]\times [-T,T]$, through the Wasserstein-type  metric
\be
W_{K,T}\big(\langle \cdot \rangle_{L,\beta,0}^{(\textrm{XXZ},+)}, \langle \cdot \rangle_{L,\beta,0}^{(\textrm{XXZ},-)}\big)
 =  \int \Big\{ \inf_{\nu_\omega} \Big[ \sum_{u=-K}^K \int_{-T}^T |\kappa_+(u,t) -\kappa_-(u,t)| \,dt \Big]  \nu_\omega(\kappa_+,\kappa_-) \Big\} \mu_{L,\beta}(d\omega) 
\ee 
where $\nu_\omega$ ranges over couplings of the two measures $ \mathbb{E}_{L,\beta}^\pm\left[ \cdot | \omega \right]$.  

For the monotone coupling the absolute value can be dropped, in which case the integral reduces to the simple difference in expectation values of $\kappa$, and thus 
\be
W_{K,T}\big(\langle \cdot \rangle_{L,\beta,0}^{(\textrm{XXZ},+)}, \langle \cdot \rangle_{L,\beta,0}^{(\textrm{XXZ},-)}\big)
 =   \sum_{u=-K}^K \int_{-T}^T  [\langle \kappa(u,t) \rangle_{L,\beta,0}^{(\textrm{XXZ},+)}  -\langle \kappa(u,t) \rangle_{L,\beta,0}^{(\textrm{XXZ},-)}  ] \,dt   \,. 
 \ee  

In the infinite volume limit, in which the mean value of $\kappa(u,t)$ is translation invariant, one gets
\begin{eqnarray}  
\lim_{\substack{L\to \infty \\ L \textrm{even}}}   \lim_{\beta\to \infty} W_{K,T}\big(\langle \cdot \rangle_{L,\beta,0}^{(\textrm{XXZ},+)}, \langle \cdot \rangle_{L,\beta,0}^{(\textrm{XXZ},-)}\big)
 &=&   |B_{K,T}|  \left[  \langle \kappa(0,0) \rangle_+  -\langle \kappa(0,0) \rangle_-  \right]  \notag \\    &=&  4 (2K+1) \,T \, \langle \tau(0,0) \rangle_+ \,.
\end{eqnarray} 

We learn that if $\langle \tau(0,0) \rangle_+  =0$  then for any  $K,T<\infty$  the Wasserstein distance between the restrictions of the two measures to the box $B_{K,T}$ tends to zero.   
It follows that if the measures converge, as we know to be the case here,   then they have a common limit. 

In other words, if  for some measurable functional of $\tau$, $\langle  F \rangle_+ \neq \langle F\rangle_-$, then we may conclude that also 
$
\langle \tau(0,0) \rangle_+  \neq \langle \tau(0,0) \rangle_-  \,. 
$
In view of the relation between the two states the latter is equivalent to 
\be 
\langle \tau(0,0) \rangle_+  \neq 0 \,.
\ee
  \end{proof}

\section*{Postscript -- quantum degrees of freedom as emergent features}

The analysis presented here provides another example where the categorical distinction between classical and quantum physics is  blurred. 
We started with two quantum spin chains and moved on to their relation with a common random loop model.   An alternative presentation could have started from the random loop model, based on the random rung configurations $\omega$,
 which is of independent interest in probability theory and statistical mechanics and then proceed by recognising that this system's features can be best understood through emergent quantum degrees of freedom.  

The utility of such crossings of the quantum/classical divide has been noted before: 
 In one direction,  the thermodynamic of the planar Ising model are best explained in terms of emergent quantum degrees of freedom, among which are Bruria Kaufmann's spinors~\cite{kau} and Lieb-Mattis-Schultz fermions~\cite{LMS}.  In the other direction one finds  Feynman-Kac functional integral representations for thermal states of quantum particle system in terms of classical functional integrals, and analogous formulas for  quantum spin chains, such as employed  in~\cite{Fey, Gin, AL, T, AN,Uel,BU}.  \\

\minisec{Acknowledgments}
We thank Ron Peled and Yinon Spinka, Edward Witten and Bruno Nachtergaele for  stimulating discussions and relevant references. 
MA is supported in parts by the NSF grant DMS-1613296, and the Weston Visiting Professorship at the Weizmann Institute of Science. HDC was supported by  the ERC CriBLaM, the NCCR SwissMAP, the Swiss NSF and an IDEX Chair from Paris-Saclay.
SW is supported by the DFG under EXC-2111 -- 390814868.

\end{document}